\newcommand{\Vmodels}{%
\mathrel{\reflectbox{{$\models$}}}}
\newcommand{\eqmodels}{\Vmodels\!\:\!\models}
\newcommand{\qskip}{\mathbf{skip}}
\newcommand{\qif}{\mathbf{if}}
\newcommand{\qelse}{\mathbf{else}}
\newcommand{\qthen}{\mathbf{then}}
\newcommand{\qend}{\mathbf{end}}
\newcommand{\qfor}{\mathbf{for}}
\newcommand{\qdo}{\mathbf{do}}
\newcommand{\qwhile}{\mathbf{while}}
\newcommand{\qinit}[1]{#1 \coloneq \ket{0}}
\newcommand{\qassign}[2]{#2 \coloneq #1}
\newcommand{\qut}[2]{#2 \,*\!\!= #1}
\newcommand{\qmeas}[2]{\qassign{\textbf{meas}[#2]}{#1}}
\newcommand{\qqif}[3]{\qif\ #1 \ \qthen\ #2 \ \qelse\ #3 \ \qend}
\newcommand{\qqwhile}[2]{\qwhile\ #1 \ \qdo \ #2 \ \qend}
\newcommand{\qqfor}[2]{\qfor \ i \in #1 \ \qdo \  #2  \ \qend}
\newcommand{\qprog}{\mathit{Prog}}
\newcommand{\qstate}[1]{\vert#1\rangle}
\newcommand{\sem}[1]{\llbracket#1\rrbracket}
\renewcommand{\>}{\rangle}
\renewcommand{\<}{\langle}
\newcommand{\veriq}{\textsf{Veri-QEC}\xspace}
\DeclareMathOperator{\Span}{span}
\DeclareMathOperator{\Tr}{tr}
\DeclareMathOperator{\Supp}{supp}
\newcommand{\bs}{\bm{s}}
\newcommand{\be}{\bm{e}}
\newcommand{\cD}{\mathcal{D}}
\newcommand{\cH}{\mathcal{H}}
\newcommand{\cE}{\mathcal{E}}
\newcommand{\cI}{\mathcal{I}}
\newcommand{\cL}{\mathcal{L}}
\newcommand{\cG}{\mathcal{G}}
\newcommand{\cS}{\mathcal{S}}
\newcommand{\sq}{\frac{1}{\sqrt{2}}}
\newtheorem{observation}{Observation}[section]
\title{Efficient Formal Verification of Quantum Error Correcting Programs}
\author{Qifan Huang}
\email{huangqf@ios.ac.cn}
\affiliation{
    \department{Key Laboratory of System Software (Chinese Academy of Sciences) and State Key Laboratory of Computer Science}
    \institution{Institute of Software, Chinese Academy of Sciences}
    \city{Beijing}
    \country{China}
}
\affiliation{%
    \institution{University of Chinese Academy of Sciences}
    \city{Beijing}
    \country{China}
}
\author{Li Zhou}
\email{zhouli@ios.ac.cn}
\email{zhou31416@gmail.com}
\affiliation{
    \department{Key Laboratory of System Software (Chinese Academy of Sciences) and State Key Laboratory of Computer Science}
    \institution{Institute of Software, Chinese Academy of Sciences}
    \city{Beijing}
    \country{China}
}
\author{Wang Fang}
\email{fangw@ios.ac.cn}
\affiliation{%
  \institution{School of Informatics, University of Edinburgh}
  \city{Edinburgh}
  \country{United Kingdom}
}
\author{Mengyu Zhao}
\email{zhaomy@ios.ac.cn}
\affiliation{
    \department{Key Laboratory of System Software (Chinese Academy of Sciences) and State Key Laboratory of Computer Science}
    \institution{Institute of Software, Chinese Academy of Sciences}
    \city{Beijing}
    \country{China}
}
\affiliation{%
    \institution{University of Chinese Academy of Sciences}
    \city{Beijing}
    \country{China}
}
\author{Mingsheng Ying}
\email{mingsheng.ying@uts.edu.au}
\affiliation{%
    \institution{University of Technology Sydney}
    \city{Sydney}
    \country{Australia}
}
\begin{document}

\begin{abstract}
  Quantum error correction (QEC) is fundamental for suppressing noise in quantum hardware and enabling fault-tolerant quantum computation. In this paper, we propose an efficient verification framework for QEC programs. We define an assertion logic and a program logic specifically crafted for QEC programs and establish a sound proof system. We then develop an efficient method for handling verification conditions (VCs) of QEC programs: for Pauli errors, the VCs are reduced to classical assertions that can be solved by SMT solvers, and for non-Pauli errors, we provide a heuristic algorithm. We formalize the proposed program logic in Coq proof assistant, making it a verified QEC verifier. Additionally, we implement an automated QEC verifier, \veriq, for verifying various fault-tolerant scenarios. We demonstrate the efficiency and broad functionality of the framework by performing different verification tasks across various scenarios. Finally, we present a benchmark of 14 verified stabilizer codes.
\end{abstract}


\begin{CCSXML}
<ccs2012>
   <concept>
       <concept_id>10003752.10003790.10002990</concept_id>
       <concept_desc>Theory of computation~Logic and verification</concept_desc>
       <concept_significance>500</concept_significance>
       </concept>
   <concept>
       <concept_id>10003752.10003790.10011741</concept_id>
       <concept_desc>Theory of computation~Hoare logic</concept_desc>
       <concept_significance>500</concept_significance>
       </concept>
   <concept>
       <concept_id>10010583.10010786.10010813.10011726.10011728</concept_id>
       <concept_desc>Hardware~Quantum error correction and fault tolerance</concept_desc>
       <concept_significance>500</concept_significance>
       </concept>
 </ccs2012>
\end{CCSXML}

\ccsdesc[500]{Theory of computation~Logic and verification}
\ccsdesc[500]{Theory of computation~Hoare logic}
\ccsdesc[500]{Hardware~Quantum error correction and fault tolerance}

\keywords{Formal verification, Quantum error correction, Quantum programming language, Hoare logic}

\maketitle

\section{Introduction}
\label{introduction}

    Beyond the current noisy intermediate scale quantum (NISQ) era~\cite{Preskill2018NISQ}, 
    fault-tolerant quantum computation is an indispensable step towards scalable quantum computation. Quantum error correcting (QEC) codes serve as a foundation for suppressing noise and implementing fault-tolerant quantum computation in noisy quantum hardware. There have been more and more experiments illustrating the implementation of quantum error correcting codes in real quantum processors~\cite{ryan2021realization, zhao2022realization, Acharya2023suppress, Bluvstein2024logicala, Bravyi2024highthreshold}. These experiments show the great potential of QEC codes to reduce noise. Nevertheless, the increasingly complex QEC protocols make it crucial to verify the correctness of these protocols before deploying them. 

    There have been several verification techniques developed for QEC programs. Numerical simulation, especially \textit{stabilizer-based simulation}~\cite{Aaronson2004improved,Simon2006fast,gidney2021stim} is extensively used for testing QEC programs. 
    While stabilizer-based simulations can efficiently handle QEC circuits with only Clifford operations ~\cite{nielsen2010quantum} compared to general methods~\cite{xu2023herculean}, showing the effectiveness and correctness of QEC circuits still requires millions or even trillions of test cases, which is the main bottleneck~\cite{gidney2021stim}. Recently, \textit{symbolic execution}~\cite{Fang2024symbolic} has also been applied to verify QEC programs. It is an automated approach designed to handle a large number of test cases and is primarily intended for bug reporting. However, it has limited functionality, such as the inability to reason about non-Clifford gates or propagation errors, and it remains slow when verifying correct instances.
    
    Program logic is another appealing verification technique. It naturally handles a class of instances simultaneously by expressing and reasoning about rich specifications in a mathematical way~\cite{grout2011digital}. Two recent works pave the way for using Hoare-style program logic for reasoning about QEC programs. Both works leverage the concept of stabilizer, which is critical in current QEC codes to develop their programming models. 
    Sundaram et al.~\cite{sundaram2022hoare} established a lightweight Hoare-like logic for quantum programs that treat stabilizers as predicates. Wu et al.~\cite{wu2021qecv, wu2024towards} studied the syntax and semantics of QEC programs by employing stabilizers as first-class objects. They proposed a program logic designed for verifying QEC programs with fixed operations and errors.
    Yet, at this moment, these approaches do not achieve usability for verifying large-scale QEC codes with complicated structures, in particular for real scenarios of errors that appear in fault-tolerant quantum computation. 

    \paragraph{\textbf{Technical challenge}}
        There are still critical challenges to the efficient verification of large-scale QEC programs, as summarized below.
        \begin{itemize}
        [leftmargin=0.5cm]
            \item \textit{A suitable hybrid program logic supporting backward reasoning.}
            QEC codes are designed to correct possible errors, making error modeling crucial for verification. To this end, it is necessary to introduce classical variables to describe errors and measurement outcomes, as well as properties like the maximum number of correctable errors. Backward reasoning is then desired since it gives a simple but complete rule for classical assignment,
            while forward reasoning needs additional universal quantifiers to ensure completeness. As discussed in~\cite{unruh2019quantum} and illustrated in Example \ref{exam-failure-disjunction}, interpreting $\vee$ as classical disjunction suffers from the incompleteness problem even for QEC codes, making it necessary to choose quantum logic as base logic, where, $\vee$ is interpreted as the sum of subspaces.
            \item \textit{Proving verification conditions generated by program logic.}
            Traditionally, after annotating the program, the program logic will generate verification conditions (entailment of assertion formulas).
            A complete and rigorous approach is to use formal proofs; however, this requires significant human effort.
            Another approach is to use efficient solvers to achieve automatic proofs.
            Unfortunately, quantum logic lacks efficient tools similar to SMT solvers: systematically handling quantum logic has been a longstanding challenge. On the one hand, the continuity of subspaces makes brute-force search ineffective, while on the other hand, the lack of distributive laws makes finding a (canonical) normal form particularly difficult. It remains unknown if assertion formulas for QEC codes can be efficiently processed.
        \end{itemize}

    \vspace{-0.12cm}

    \paragraph{\textbf{Contributions}}
        \begin{figure}
            \centering
            \vspace{-0.9em}
            \scalebox{.78}{\usetikzlibrary{shadows, backgrounds, positioning}

\definecolor{primary}{RGB}{41,128,185}    
\definecolor{secondary}{RGB}{52,152,219}  
\definecolor{accent1}{RGB}{46,204,113}    
\definecolor{accent2}{RGB}{241,196,15}    
\definecolor{accent3}{RGB}{231,76,60}     
\definecolor{textcolor}{RGB}{44,62,80}   

\begin{tikzpicture}
    [
        edge from parent/.style={
            draw=primary!80,-stealth,thick,
            edge from parent path={(\tikzparentnode.south)--+(0,-8pt)-|(\tikzchildnode)}
        },
        level distance={1.5cm},
        sibling distance=.3cm,
        every tree node/.style={align=center,,draw,rectangle,thick,},
    ]

    \Tree
    [.\node[fill=primary!30, draw=primary, rounded corners=2pt,very thick](q0){Program Logic for QEC Codes};
        [.\node[fill=accent1!25, draw=accent1!80,rounded corners=2pt,very thick](q1){Verified QEC Verifier \\ (Coq-based)};
            [.\node[fill=accent1!20](q2){Theory \\ Formalization}; ]
            [.\node[fill=accent1!20](q6){Interactive Pen- \\ and-paper Proof}; 
                [.\node[fill=accent1!15]{Scalable codes}; ]
            ]
        ]
        [.\node[fill=accent2!25, draw=accent2!80, rounded corners=2pt,very thick]{Automated QEC Verifier \\ \veriq (SMT-based)};
            [.\node[fill=accent2!20](q7){General Verification \\ for all error configurations};
                [.\node[fill=accent2!15](q3){Small scale codes \\ ($\sim$120 qubits)}; ]
            ]
            [.\node[fill=accent2!20](q8){Partial Verification \\ for user-provided error patterns};
                [ .\node[fill=accent2!15](q4){Medium scale codes \\ ($\sim$360 qubits)}; ]
            ]
        ]
    ]

    \draw[->,dotted,thick] (q2.north)+(-6mm,0) |- node[pos=.7,align=center,font=\small] {Enhance \\ confidence} (q0);

    \foreach \i/\Text in {0/{Theory},1/{Tools},3/{Performance}}{
        \node[anchor=west,font=\bfseries] (p\i) at ([xshift=-2.5cm]{q2}|-{q\i}) {\Text};
    }
    \coordinate[yshift=5mm] (p1) at (p1.north west);
    \coordinate[yshift=5mm] (p3) at (p3.north west);
    \coordinate[xshift=2.6cm] (q4) at (q4);
    \foreach \i in {1,3}{
        \draw[dashed] (p\i) -- ({q4}|-{p\i});
    }
\end{tikzpicture}}
            \vspace{-0.9em}
            \caption{Overall structure of our verification framework for QEC programs.
              }
            \vspace{-0.5em}
            \label{fig:framework}
        \end{figure}
        We propose a formal verification framework, summarized in Fig. \ref{fig:framework}, by proposing theoretical solutions to the above challenges, together with two implementations, (i) \textit{the Coq-based verified QEC verifier} and (ii) \textit{the SMT-based automatic QEC verifier \veriq}, that ensure and illustrate the effectiveness of our theory. In detail, we contribute:
    
        \vspace{-0.2em}
    
        \begin{itemize}
        [leftmargin=0.5cm]
            \item \textit{Assertion logic and program logic}  (Section \ref{assert-logic} and \ref{prog-logic}). Following~\cite{sundaram2022hoare, wu2021qecv}, we use Pauli expressions as atomic propositions and interpret them as the $+1$-eigenspace of the corresponding Pauli operator. We additionally introduce classical variables and interpret logical connectives based on quantum logic, e.g., interpreting $\vee$ as the sum of subspaces rather than as a union. Adopting the semantics for classical-quantum from~\cite{feng2021quantum}, we establish a sound proof system for quantum programs.
        
            \item \textit{Efficient handling of verification condition of QEC code} (Section \ref{sec:VC-framework}). The verification condition generated by a QEC code is typically of the form
            \begin{equation}
             \left( P_1 \wedge \dots \wedge P_n \right) \wedge \Phi_c \models \bigvee\nolimits_{\bs\in \{0,1\}^n} \left ((-1)^{f_1(\bs)} P_1^\prime\wedge \dots \wedge (-1)^{f_n(\bs)} P_n^\prime\right), 
            \end{equation}
            where $P_i, P_i'$ are Pauli expressions and $\Phi_c$ is a classical assertion. Progressing from simple to complex, we deal with the following cases: 
            1). $\{P_i'\}\subseteq \{P_j\}$. Then it is equivalent to compare phase, which can be efficiently solved by an SMT solver.
            2). All $P_i$ and $P_j'$ commute. Then employ the fact that $P_i' = (-1)^{\alpha_i}\prod_{k\in K_i} P_k$ since $\{P_i\}$ is a minimal generating set and $P\wedge Q = P\wedge QP$~\cite{sundaram2022hoare} to reduce it to case 1).
            3). A non-commuting pair $P_i$ and $P_j'$ exists. Then a heuristic algorithm is proposed to recursively eliminate $P_j'$ from $\{P_i'\}$ based on the facts $(P\wedge Q)\vee (\neg P\wedge Q) = Q$ if $P$ commute with $Q$, and finally reduce it to case 2).
        
            \item \textit{A verified QEC verifier} (Section \ref{sec:tool-implementation}). We formalize our program logic in Coq proof assistant~\cite{coq} based on CoqQ~\cite{zhou2023coqq}, i.e., proving the soundness of the proof system. This enhances confidence in the designed program logic. As a byproduct, this also allows us to manually formalize pen-and-paper proofs of scalable codes.
            
            \item \textit{Automatic QEC verifier \veriq} (Section \ref{sec:tool-implementation} and \ref{sec:evaluation}).  \veriq is a practical tool developed in Python with the aid of \textsc{Z3} and \textsc{CVC5} SMT solvers~\cite{de2008z3, DBLP:conf/tacas/BarbosaBBKLMMMN22}. 
            \veriq supports verification in various scenarios, from standard errors to propagation errors or errors in correction steps, and from one cycle of QEC code to fault-tolerant implementation of small logical circuits. We examine \veriq on $14$ QEC codes selected from the stabilizer code family with $5$-$361$ qubits and perform different verification tasks based on the type of code and distance. Typical performance on surface codes includes: general verification for all error configurations up to $121$ qubits within $\sim 200$ minutes, and partial verification for user-provided error constraints up to $361$ qubits within $\sim 100$ minutes.
        \end{itemize}
    \vspace{-0.12cm}
    \paragraph{\textbf{Comparison to existing works}}
        Here we compare our work with works related to verifying QEC programs and leave the general discussion of related works in Section~\ref{sec:related_work}.
        Thanks to the efficiency of the stabilizer formalism in describing Clifford operations used in QEC programs, several works~\cite{Rand2021gottesman,wu2021qecv,wu2024towards,Rand2021StaticAO,sundaram2022hoare} utilize stabilizers as assertions in quantum programs.
        Among them, \citet{Rand2021gottesman,Rand2021StaticAO} built stabilizer formalism by designing a type system of Gottesman types, upon which \citet{sundaram2022hoare} further established a Hoare-like logic to characterize quantum programs consisting of Clifford gates, $T$ gate and measurements.
        The proof system was built in forward reasoning; thus the disjoint union `$\uplus$' is employed to describe the post-measurement state.
        \citet{wu2021qecv} focused more on QEC. They designed a programming language with a stabilizer constructor in the syntax, specifically for QEC programs. This programming language faithfully captures the implementation of QEC protocols. To verify the correctness of QEC programs more efficiently while ensuring the accurate characterization of their properties, they designed an assertion logic using sums of stabilizers as atomic propositions and \emph{classical} logical connectives. Given fixed operations, errors, and exact results of the decoder, this framework can effectively prove the correctness of a given QEC program.
        
        Compared with prior works, our verification framework stands out by incorporating classical variables into both programs and assertions.
        Our assertion language enables simultaneous reasoning about properties of subspaces and a family of quantum states, such as logical computational basis states, which previous QEC program logic could only handle individually. Together with the classical variables in the program, our framework can model and verify the conditions of errors that previous work cannot reason about, e.g. the maximum correctable number of errors. Our program logic provides strong flexibility and efficiency to insert errors anywhere in the QEC program, such as before and after logic operators and within correction steps, and then verify the correctness. This capability is crucial for the subsequent step of verifying the implementation of fault-tolerant quantum computing.

\section{Motivating example: The Steane code}
We introduce a motivating example, the Steane code, which is widely used in quantum computers~\cite{Nigg2014quantum,ryananderson2022implementingfaulttolerantentanglinggates,Bluvstein2022,Bluvstein2024logicala} to construct quantum circuits. A recent work~\cite{Bluvstein2024logicala} demonstrates the use of Steane code to implement fault-tolerant logical algorithms in reconfigurable neutral-atom arrays. 
We aim to demonstrate the basic concepts of our formal verification framework through the verification of Steane code.

\subsection{Basic Notations and Concepts}

    \paragraph{Quantum state.}
    Any state $|\psi\>$ of quantum bit (qubit) can be represented by a two-dimensional vector 
    $
    \left(\begin{smallmatrix}
      \alpha \\
      \beta
    \end{smallmatrix}\right)$ with $\alpha, \beta \in \mathbb{C}$ satisfying $|\alpha|^2 + |\beta|^2 = 1$. 
    Frequently used states include computational bases
    $|0\>\triangleq
    \left(\begin{smallmatrix}
      1 \\
      0
    \end{smallmatrix}\right)$ and 
    $|1\>\triangleq
    \left(\begin{smallmatrix}
      0 \\
      1
    \end{smallmatrix}\right)$, and $|\pm\rangle = \frac{1}{\sqrt{2}}(|0\rangle\pm|1\rangle)$.
    The computational basis of an $n$-qubit system is $|\bm{s}\>\triangleq|s_1s_2\cdots s_n\>$ where $\bm{s}$ is a bit string, and any state $|\psi\>$ is a superposition $|\psi\> = \sum_{\bm{s}\in\{0,1\}^n}a_{\bm{s}}|\bm{s}\>$.
    
\vspace{-0.01cm}

    \paragraph{Unitary operator.}
    The evolution of a (closed) quantum system is modeled as a unitary operator, aka quantum gate for qubit systems. Here we list some of the commonly used quantum gates:
    {\small
    \begin{align*}
    I = \begin{pmatrix}1 & 0 \\ 0 & 1\end{pmatrix} \quad 
    X = \begin{pmatrix}0 & 1 \\ 1 & 0\end{pmatrix} \quad 
    Y = \begin{pmatrix}0 & -i \\ i & 0\end{pmatrix} &\quad 
    Z = \begin{pmatrix}1 & 0 \\ 0 & -1 \end{pmatrix} \quad
    H = \frac{1}{\sqrt{2}}\begin{pmatrix}1 & 1 \\ 1 & -1\end{pmatrix} \quad  
    S = \begin{pmatrix} 1 & 0 \\ 0 & i \end{pmatrix} \\\
    T = \begin{pmatrix} 1 & 0 \\ 0 & e^{\frac{i\pi}{4}} \end{pmatrix} \quad
    CNOT = \left({\large\begin{smallmatrix}1 & 0 & 0 & 0 \\ 0 & 1 & 0 & 0 \\0 & 0 & 0 & 1 \\ 0 & 0 & 1 & 0 \end{smallmatrix}}\right) \quad &
    CZ = \left({\large\begin{smallmatrix}1 & 0 & 0 & 0 \\ 0 & 1 & 0 & 0 \\0 & 0 & 1 & 0 \\ 0 & 0 & 0 & -1 \end{smallmatrix}}\right) \quad 
    iSWAP = \left({\large\begin{smallmatrix}1 & 0 & 0 & 0 \\ 0 & 0 & -i & 0 \\0 & -i & 0 & 0 \\ 0 & 0 & 0 & 1 \end{smallmatrix}}\right).
    \end{align*}}
    The evolution is computed by matrix multiplication, for example, $H$ gate transforms $|0\>$ to $H|0\> = \frac{1}{\sqrt{2}}\left(\begin{smallmatrix}
      1 & 1 \\ 1 & -1
    \end{smallmatrix}\right)\left(\begin{smallmatrix}
      1 \\
      0
    \end{smallmatrix}\right) = \frac{1}{\sqrt{2}}\left(\begin{smallmatrix}
      1 \\
      1
    \end{smallmatrix}\right) = |+\>$.

    \paragraph{Projective measurement.}
    We here consider the boolean-valued projective measurement $M = \{P_0, P_1\}$ with projections $P_0$ and $P_1$ such that $P_0+P_1 = I$. Performing $M$ on a given state $|\psi\>$, with probability $p_m = |P_m|\psi\>|^2$ we get $m$ and post-measurement state $\frac{P_m|\psi\>}{\sqrt{p_m}}$ for $m = 0, 1$.

\vspace{-0.08cm}

    \paragraph{Pauli group and Clifford gate.} 
    The \textit{Pauli group} on $n$ qubits $\mathcal{P}_n$ consists of all Pauli strings $g$ which are represented by the tensor product of $n$ Pauli or identity matrices with multiplicative factor $\pm 1,\pm i$, i.e., $i^t p_1\otimes \cdots \otimes p_n$, where $p_i\in \{I, X, Y, Z\}, t\in\{0,1,2,3\}$. 
    A state $\qstate{\psi}$ is stabilized by $g\in \mathcal{P}_n$ (or a subset $S\subseteq \mathcal{P}_n$) , if $g\qstate{\psi} = \qstate{\psi}$ (or $\forall\, g\in S,\ g\qstate{\psi} = \qstate{\psi}$).
    The measurement outcome of the corresponding projective measurement $M_g$ is always $0$ iff $|\psi\>$ is a stabilizer state of $g$. A unitary $V$ is a \emph{Clifford gate}, if for any Pauli string $g$,  $VgV^{\dag}$ is still a Pauli string. All Clifford gates form the Clifford group, and can be generated by $H, S$, and $CNOT$.

\vspace{-0.04cm}

    \paragraph{Stabilizer code.} 
    An $[[n,k,d]]$ stabilizer code $\mathcal{C}$ is a subspace of the $n$-qubit state space, defined as the set (aka codespace) of states stabilized by an abelian subgroup $S$ (aka stabilizer group) of Pauli group $\mathcal{P}_n$, with a minimal representation in terms of $n-k$ independent and commuting generators $\langle g_1, \dots, g_{n-k}\rangle$ requiring $-I \notin S$.
    The codespace of $\mathcal{C}$ is of dimension $2^k$ and thus able to encode $k$ logical qubits into $n$ physical qubits.
    With additional $k$ logical operators $\bar{Z}_1, \cdots, \bar{Z}_k$ that are independent and commuting with each other and $S$, we can define a $k$-qubit logical state $\qstate{z_1,\dots,z_k}_L$ as the state stabilized by $\langle g_1, \dots, g_{n-k}, (-1)^{z_1}\bar{Z}_1, \dots, (-1)^{z_k}\bar{Z}_k\rangle$ with $z_i \in \{0,1\}$.
    We can further construct $\bar{X}_1, \dots, \bar{X}_k$ such that $\bar{X}_i$ commute with $g\in S$ and $\bar{X}_i\bar{Z}_j = (-1)^{\delta_{ij}} \bar{Z}_j\bar{X}_i$ for all $i,j\in \{1,\cdots,k\}$, and regard $\bar{Z}_i$ (or $\bar{X}_i$) as logical $Z$ (or $X$) gate acting on $i$-th logical qubit.
    $d$ is the code distance, i.e., the minimum (Hamming) weight of errors that can go undetected by the code.

\subsection{The [[7,1,3]] Steane code }\label{example-steane}
The Steane code encodes a logical qubit using 7 physical qubits. The code distance is 3, therefore it is the smallest CSS code~\cite{calderbank1996good} that can correct any single-qubit Pauli error. The generators $g_1,\ldots,g_6$, and logical operators $\bar{X}$ and $\bar{Z}$ of Steane code are as follows: 

\begin{align*}
 g_1 & \coloneq X_1X_3X_5X_7 & g_2 & \coloneq X_2X_3X_6X_7 &  g_3 & \coloneq X_4X_5X_6X_7 & \bar{X} & \coloneq X_1X_2X_3X_4X_5X_6X_7  \\
 g_4  & \coloneq Z_1Z_3Z_5Z_7 & g_5  & \coloneq Z_2Z_3Z_6Z_7 & g_6  & \coloneq Z_4Z_5Z_6Z_7 & \bar{Z}  & \coloneq Z_1Z_2Z_3Z_4Z_5Z_6Z_7.
\end{align*}

In Table \ref{tab:prog}, we describe the implementations of logical Clifford operations and error correction procedures using the programming syntax introduced in Section \ref{prog-logic}.

As a running example, we analyze a one-round error correction process in the presence of single-qubit Pauli $Y$ errors, as well as the Hadamard $H$ error and $T$ error serving as instances of non-Pauli errors. First, we inject propagation errors controlled by Boolean-valued indicators $\{e_{pi}\}$ at the beginning. A propagation error simulates the leftover error from the previous error correction process, which must be considered and analyzed to achieve large-scale fault-tolerant computing.
Next, a logical operation $H$ is applied followed by the standard error injection controlled by indicators $\{e_i\}$. Formally, $[e_i]\qut{U}{q_i}$ means applying the error $U$ on $q_i$ if $e_i=1$, and skipping otherwise. Afterwards, we measure the system according to generators of the stabilizer group, compute the decoding functions $f_{x,i}$ and $f_{z,i}$, and finally perform correction operations. The technical details of the program can be found in Section \ref{case-steane} and Appendix \ref{case-steane-app}. 
\begin{table}
  \caption{Program Implementations of logical operation and error correction using a 7-qubit Steane code.}
  \label{tab:prog}
  \renewcommand{\arraystretch}{0.95}
  \begin{tabular}{|c|l|l|l|}
  \hline
  \multicolumn{2}{|c}{Logical Operation} & \multicolumn{2}{|c|}{Error Correction} \\
    \hline
    & Command & Explanation & \textbf{Steane}$(E, H)$ \quad $E\in \{Y, H, T\}$  \\
    \hline
    $H$ & $\qfor \ i \in 1\dots 7 \ \qdo\ $ 
      & Propagation Error & $\qqfor{1\dots 7}{\qut{E}{[e_{pi}]q_i}}$ \\ 
      & $\quad \qut{H}{q_i} \ \qend$ 
      & Logical operation $H$ & $\qfor \ i \in 1\dots 7 \ \qdo\ \qut{H}{q_i}\ \qend$ \\
    $S$ & $\qfor \ i \in 1\dots 7 \ \qdo\ $ 
      & Error injection & $\qqfor{1\dots 7}{\qut{E}{[e_i]q_i}} $ \\
      & $\quad \qut{Z}{q_i}\fatsemi\qut{S}{q_i}$ 
      & Syndrome meas & $\qqfor{1\dots 6}{\qassign{\mathbf{meas}[g_i]}{s_i}}$ \\
      & $\qend$ 
      & Call decoder for $Z$ &$\qassign{f_{z}(s_1,s_2,s_3)}{z_1,\ldots,z_7}$ \\
    $CNOT$ & $\qfor \ i \in 1 \dots 7 \ \qdo$ 
      & Call decoder for $X$ & $\qassign{f_{x}(s_4,s_5,s_6)}{x_1,\ldots,x_7}$ \\
      & $\quad \qut{CNOT}{q_i, q_{i+7}}$
      & Correction for $X$ & $\qqfor{1\dots 7}{\qut{X}{[x_i]q_i}}$ \\
      & $\qend$ 
      & Correction for $Z$ & $\qqfor{1\dots 7}{\qut{Z}{[z_i]q_i}}$\\
  \hline
\end{tabular}

\end{table}

The correctness formula for the program $\textbf{Steane}(Y,H)$ can be stated as the Hoare triple\footnote{Following the adequacy theorem stated in~\cite{Fang2024symbolic}, the correctness of the program is guaranteed as long as it holds true for only two predicates $(-1)^b Z \wedge \bigwedge_i g_i$ and  $(-1)^b X \wedge \bigwedge_i g_i$. Furthermore, since Steane code is a self-dual CSS code, the logical X and Z operators share the same form. Therefore only logical Z is considered here.}: 

\begin{equation}
\left\{\bigg(\sum_{i=1}^{7} (e_i + e_{pi}) \leq 1\bigg) \bigwedge  \left((-1)^{b}\bar{X} \wedge g_1\wedge \cdots \wedge g_6\right)\right\}
\textbf{Steane}(Y,H) \left\{(-1)^{b}\bar{Z} \wedge g_1\wedge \cdots \wedge g_6\right\}.
\label{corr-steane}
\end{equation}

Here, $b$ is a parameter denoting the phase of the logical state, e.g., $b = 0$ for initial state $|+\>_L$ (i.e., state stabilized by $\bar{X} \wedge g_1\wedge \cdots \wedge g_6$) and final state $|0\>_L$ (i.e., state stabilized by $\bar{Z} \wedge g_1\wedge \cdots \wedge g_6$). The correctness formula claims that if there is at most one $U$ error $(\sum_{i=1}^{7} (e_i + e_{pi}) \leq 1)$, then the program transforms $|+\>_L$ to $|0\>_L$ (and $\vert  -\>_L$ to $|1\>_L$), exactly the same as the error-free program that execute logical Hadamard gate $H$.

It appears hard to verify Eqn. (\ref{corr-steane}) in previous works. 
\cite{wu2021qecv,wu2024towards} can only handle fixed Pauli errors while $\mathbf{Steane}$ involves non-Pauli errors $T$ with flexible positions. \cite{sundaram2022hoare, Rand2021gottesman} do not introduce classical variables and thus cannot represent flexible errors nor reason about the constraints or properties of errors. \citet{Fang2024symbolic} cannot handle non-Clifford gates, since non-Clifford gates change the stabilizer generators (Pauli operators) into linear combinations of Pauli operators, which are beyond their scope.

In the following sections, we will verify Eqn. \eqref{corr-steane} by first deriving a precondition $A^\prime$ (see Eqn. \eqref{wp-stabilizer} for $Y$ error and Eqn. \eqref{eqn:T-err-wp} for $T$ error) by applying the inference rules from Fig. \ref{infer-rule}, and then proving the verification condition $A \models A^\prime$ based on the techniques proposed in Section \ref{verify-cond}.

\section{An Assertion logic for QEC programs}\label{assert-logic}
In this section, we introduce a hybrid classical-quantum assertion logic on which our verification framework is based. 

\subsection{Expressions} 
For simplicity, we do not explicitly provide the syntax of expressions of Boolean (denoted by $BExp$); see Appendix \ref{comp-expr} for an example. Their value is fully determined by the state of the classical memory $m\in \mathtt{CMem}$, which is a map from variable names to their values. Given a state  $m$ of the classical memory, we write $\sem{\cdot}_m$ for the semantics of basic expressions in state $m$.

A special class of expressions was introduced by~\cite{wu2024towards, sundaram2022hoare}, namely Pauli expressions. In particular, for reasoning about QEC codes with $T$ gates, \citet{sundaram2022hoare} suggests extending basic Pauli groups with addition and scalar multiplication with factor from the ring $\mathbb{Z}[1/\sqrt{2}] \triangleq \{\,x+y/\sqrt{2}\mid x,y\in \mathbb{Z}\,\} = \{\,(x+y\sqrt{2})/2^t\mid t\in\mathbb{N}, x,y\in \mathbb{Z}\,\}$. 
We adopt a similar syntax of expressions in the ring $\mathbb{Z}[\frac{1}{\sqrt{2}}]$ and Pauli expressions for describing generators of stabilizer groups:
\begin{align}
    SExp: && S &\Coloneqq{} (-1)^b \mid \sqrt{2} \mid S /2^t \mid S_1 + S_2 \mid - S \mid S_1S_2
    && \mbox{syntax for ring $\mathbb{Z}[\frac{1}{\sqrt{2}}]$}.
    \label{s2ring-exp}\\
    PExp: &&  
    P &\Coloneqq{}  p_r \mid  sP \mid P_1P_2 \mid P_1 + P_2
    &&\mbox{syntax for Pauli group with $s\in SExp$.}
    \label{pauli-exp}
\end{align}

In $SExp$, $b$ is a Boolean expression and $t$ is an expression of  natural numbers.
In $PExp$, $p_r$ is an elementary gate defined as $p \in \{ X, Y, Z \}$ with $r$ being a constant natural number indicating the qubit that $p$ acts on. $SExp$ and $PExp$ are  interpreted inductively as follows:
\begin{align*}
    &\sem{(-1)^b}_m \triangleq (-1)^{\sem{b}_m}, \quad \sem{\sqrt{2}}_m \triangleq \sqrt{2}, \quad
    \sem{s/2^t}_m \triangleq \frac{\sem{s}_m}{2^{\sem{t}_m}},\\
    &\sem{s_1 + s_2}_m \triangleq \sem{s_1}_m + \sem{s_2}_m,\quad
    \sem{-s}_m \triangleq -\sem{s}_m,\quad
    \sem{s_1s_2}_m \triangleq \sem{s_1}_m\sem{s_2}_m \\
    &\sem{p_r}_m \triangleq I_1\otimes\cdots\otimes I_{r-1}\otimes p_r\otimes I_{r+1}\otimes\cdots \otimes I_n\\
    &\sem{sP}_m = \sem{s}_m\sem{P}_m, \quad
    \sem{P_1P_2}_m \triangleq \sem{P_1}_m\sem{P_2}_m, \quad \sem{P_1 + P_2}_m \triangleq \sem{P_1}_m + \sem{P_2}_m.
\end{align*}
Here, $p_r$ is interpreted as a global gate by lifting it to the whole system, with $\otimes$ being the tensor product of linear operators, i.e., the Kronecker product if operators are written in matrix form. Such lifting is also known as cylindrical extension, and we sometimes omit explicitly writing out it.
Note that it is redundant to introduce the syntax of the tensor product $p_{r_1}\otimes p_{r_2}$ with different $r_1, r_2$, since
$\sem{p_{r_1} \otimes p_{r_2}}_m = I_1\otimes\cdots\otimes I_{r_1-1}\otimes p_{r_1}\otimes I_{r_1+1}\otimes\cdots\otimes I_{r_2-1}\otimes p_{r_2}\otimes I_{r_2+1}\cdots \otimes I_n =
\sem{p_{r_1}p_{r_2}}_m$ if $r_1 < r_2$. 

One primary concern of Pauli expression syntax lies in its closedness under the unitary transformations Clifford + $T$ as claimed below. In fact, the factor $SExp$ is introduced to ensure the closedness under the $T$ gate.

\begin{theorem}[Closedness of Pauli expression under Clifford + $T$, c.f. \cite{sundaram2022hoare}]
\label{thm-pauli-closed}
For any Pauli expression $P$ defined in Eqn. \eqref{pauli-exp} and single-qubit gate $U_1\in\{X,Y,Z,H,S,T\}$ acts on $q_i$ or two-qubit gate $U_2\in\{CNOT, CZ, iSWAP\}$ acts on $q_iq_j$, there exists another Pauli expression $Q\in PExp$, such that for all $m\in\mathtt{CMem}$, 
$\sem{Q}_m = U_{1i}^\dag\sem{P}_m U_{1i}$ or $\sem{Q}_m = U_{2ij}^\dag\sem{P}_m U_{2ij}$.
\end{theorem}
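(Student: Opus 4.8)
The plan is to proceed by structural induction on the Pauli expression $P$ following the grammar in Eqn.~\eqref{pauli-exp}, exploiting the fact that conjugation by a \emph{fixed} unitary $U$ (independent of the classical memory $m$) is a $\ast$-algebra endomorphism. Concretely, for each gate $U$ the three inductive cases reduce immediately to the inductive hypothesis:
\begin{align*}
U^\dag \sem{P_1 + P_2}_m U &= U^\dag\sem{P_1}_m U + U^\dag\sem{P_2}_m U, \\
U^\dag \sem{P_1 P_2}_m U &= \left(U^\dag\sem{P_1}_m U\right)\left(U^\dag\sem{P_2}_m U\right), \\
U^\dag \sem{sP}_m U &= \sem{s}_m\left(U^\dag\sem{P}_m U\right),
\end{align*}
where the second line uses $U U^\dag = I$ and the third uses that the scalar $\sem{s}_m$ commutes with $U$ and is left fixed by conjugation. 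Thus if the inductive hypothesis supplies $Q_1, Q_2$ (resp. $Q'$) realizing the conjugates of the subexpressions, then $Q_1 + Q_2$, $Q_1 Q_2$, and $sQ'$ respectively realize the conjugate of the compound expression, and all of these lie in $PExp$ by the closure of the grammar.

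This leaves the base case $P = p_r$ with $p\in\{X,Y,Z\}$, which amounts to a finite conjugation table. For a single-qubit gate $U_{1i}$ with $r\neq i$, the operators $\sem{p_r}_m$ and $U_{1i}$ act on disjoint tensor factors and commute, so $Q = p_r$ works; likewise for a two-qubit gate $U_{2ij}$ when $r\notin\{i,j\}$. When the gate acts on the qubit(s) carrying $p_r$, I would verify the conjugation identity directly from the matrices listed earlier. For every Clifford generator the conjugate of a Pauli is again a signed Pauli, e.g. $S^\dag X S = -Y$, or $CNOT$-conjugation sending $X_i \mapsto X_i X_j$, where both the sign $(-1)^b$ and the two-qubit product are expressible in the grammar (using the earlier observation that $p_{r_1}p_{r_2}$ already denotes the relevant tensor product). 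For the $T$ gate one computes $T^\dag X T = \tfrac{1}{\sqrt 2}(X - Y)$, $T^\dag Y T = \tfrac{1}{\sqrt 2}(X + Y)$, and $T^\dag Z T = Z$; here the coefficient $\tfrac{1}{\sqrt 2} = \sqrt 2/2$ lies in $\mathbb{Z}[\tfrac{1}{\sqrt 2}]$ and hence is a legal $SExp$, so the result is still a Pauli expression — precisely the reason the ring $SExp$ was introduced.

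The genuinely load-bearing step is this base-case table, and within it the non-Clifford $T$ gate together with the two-qubit gates. The $T$ case is where closure could in principle fail: its output is an honest linear combination rather than a single Pauli string, and the argument hinges on checking that the coefficients arising from repeated $T$-conjugation never escape $\mathbb{Z}[\tfrac{1}{\sqrt 2}]$, which holds because that ring is closed under the $\pm$ and $/\sqrt 2$ operations those conjugations generate. The two-qubit gates demand care only in bookkeeping: a single-qubit Pauli on $q_i$ or $q_j$ may spread across both qubits, and I must confirm the resulting two-qubit string is faithfully captured by the product constructor. Finally, since $U$ carries no dependence on $m$ and conjugation fixes each scalar $\sem{s}_m$, the identity $\sem{Q}_m = U^\dag \sem{P}_m U$ holds uniformly for all $m\in\mathtt{CMem}$ without any separate argument over classical memories.
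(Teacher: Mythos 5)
Your proposal is correct and follows essentially the same route as the paper's proof: structural induction on $P$, with the inductive cases for $sP$, $P_1+P_2$, and $P_1P_2$ handled by the algebraic properties of conjugation (linearity, unitarity $U^\dag U = I$, scalar invariance), and the base case $p_r$ settled by the gate-by-gate conjugation table, including the $T$-gate identities $T^\dag X T = \tfrac{1}{\sqrt 2}(X-Y)$, $T^\dag Y T = \tfrac{1}{\sqrt 2}(X+Y)$, $T^\dag Z T = Z$ that motivate the ring $SExp$. The only cosmetic difference is that the paper packages the conjugation table as an explicit syntactic substitution $P' = P[\cdots]$ defined up front for each gate, whereas you treat the inductive step generically for any fixed unitary and defer the table to the base case; the mathematical content is identical.
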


\subsection{Assertion language}\label{assert}
We further define the assertion language for QEC codes by adopting Boolean and Pauli expressions as atomic propositions. Pauli expressions characterize the stabilizer group and the subspaces stabilized by it, while Boolean expressions are employed to represent error properties.

\begin{definition}[Syntax of assertion language]
\label{assert-lang}
    \begin{align}
    AExp: \quad A \Coloneqq{} &  b \in BExp \mid P \in PExp \mid \neg A \mid A \wedge A \mid A \vee A \mid  A \Rightarrow A.
    \end{align}
\end{definition}
We interpret the assertion $A \in AExp $ as a map $\sem{A}: \mathtt{CMem} \rightarrow \mathcal{S}(\mathcal{H})$, where $\mathtt{CMem}$ is the set of classical states, $\mathcal{S}(\mathcal{H})$ is the set of subspaces in global Hilbert space $\mathcal{H}$. Formally, we define its semantics as:
\begin{align*}
    &\sem{b}_m \triangleq \left\{\begin{array}{ll}
        I_{\mathcal{H}} & \sem{b}_m = \mathbf{true} \\
        0_{\mathcal{H}} & \sem{b}_m = \mathbf{false}
    \end{array} \right., \quad
    \sem{P}_m \triangleq \mathrm{span}\{|\psi\rangle: \sem{P}_m|\psi\rangle = |\psi\rangle\}, \quad
    \sem{\neg A}_m \triangleq \sem{A}^\bot_m,\\
    &\sem{A_1 \wedge A_2}_m \triangleq \sem{A_1}_m\wedge\sem{A_2}_m,\quad
    \sem{A_1 \vee A_2}_m \triangleq \sem{A_1}_m\vee\sem{A_2}_m,\quad
    \sem{A_1 \Rightarrow A_2}_m \triangleq \sem{A_1}_m\rightsquigarrow\sem{A_2}_m
\end{align*}
Boolean expression is embedded as null space or full space depending on its boolean semantics. Pauli expression is interpreted as its +1-eigenspace (aka codespace), intuitively, this is the subspace of states that are stabilized by it. It is slightly ambiguous to use $\sem{P}$ for both semantics of $PExp$ and $AExp$, while it can be recognized from the context if $\sem{P}_m$ refers to operator ($PExp$) or subspace ($AExp$).
For the rest of connectives, $\sem{\cdot}$ is a point-wise extension of quantum logic, i.e., ${}^\bot$ as orthocomplement, $\wedge$ as intersection, $\vee$ as span of union, $\rightsquigarrow$ as Sasaki implication of subspaces, i.e., $a\rightsquigarrow b\triangleq \neg a\vee (a\wedge b)$. Sasaki implication degenerates to classical implication whenever $a$ and $b$ commute, and thus it is consistent with boolean expression, e.g., $\sem{b_1\rightarrow b_2} = \sem{b_1\Rightarrow b_2}$ where $\rightarrow$ is the boolean implication. See Appendix \ref{review-subspace} for more details.

\subsection{Why Birkhoff-von Neumann quantum logic as base logic?}
In this section, we will discuss the advantages of choosing the projection-based (Birkhoff-von Neumann) quantum logic as the base logic to verify QEC programs. 

    \paragraph{Quantum logic vs. Classical logic}
    A key difference is the interpretation of $\vee$, which is particularly useful for backward reasoning about $\mathbf{if}$-branches, as shown by rule (If) in Fig. \ref{infer-rule} that aligns with its counterpart in classical Hoare logic.
    However, interpreting $\vee$ as the classical disjunction is barely applicable for backward reasoning about measurement-based $\mathbf{if}$-branches, as illustrated below.
    
\vspace{-0.06cm}

    \begin{example}[Failure of backward reasoning about $\mathbf{if}$-branches with classical disjunction]
    \label{exam-failure-disjunction}
    Consider a fragment of QEC program $S\equiv \qassign{\mathbf{meas}[Z_2]}{b}; \qqif{b}{\qut{X}{q_2}}{\mathbf{skip}}$, which first detects possible errors by performing a computational measurement\footnote{Note that $\mathtt{P}_{\sem{Z_2}_m} = |0\>_{q_2}\<0|$ and $\mathtt{P}_{\sem{Z_2}_m^\bot} = |1\>_{q_2}\<1|$, so $\qassign{\mathbf{meas}[Z_2]}{b}$ represents the computational measurement on $q_2$ and assign the output to $b$.} on $q_2$ and then corrects the error by flipping $q_2$ if it is detected.
    It can be verified that the output state is stabilized by $X_1\wedge Z_2$ (i.e., in state $\ket{+0}_{q_1q_2}$) after executing $S$, if the input state is stabilized by $X_1$ (i.e., in state $\ket{+}_{q_1}\ket{\psi}_{q_2}$ for arbitrary $|\psi\>$). This fact can be formalized by correctness formula 
    \begin{equation}
    \label{eqn:motivating-example}
    \{X_1\} \ \qassign{\mathbf{meas}[Z_2]}{b}; \qqif{b}{\qut{X}{q_2}}{\mathbf{skip}}\ \{X_1\wedge Z_2\}.
    \end{equation}
    When deriving the precondition with rule (If) where $\vee$ is interpreted as classical disjunction, one can obtain the semantics of precondition as $\sem{A_0\vee A_1}' = \sem{A_0}\cup \sem{A_1} = \{\ket{+0}_{q_1q_2},\ket{+1}_{q_1q_2}\}$, where $A_0\triangleq X_1\wedge Z_2$ and $A_1\triangleq X_1\wedge-Z_2$.
    This semantics of precondition is valid but far from fully characterizing all valid inputs mentioned earlier, i.e., states of the form $\ket{+}_{q_1}\ket{\psi}_{q_2}$ for arbitrary $|\psi\>$.
    \end{example}

\vspace{-0.06cm}

    Quantum logic naturally addresses this failure, since the semantics of precondition is exactly the set of all valid input states: $\sem{A_0\vee A_1}=\mathrm{span}\{\sem{A_0}\vee\sem{A_1}\}=\{\alpha \ket{+0}_{q_1q_2}+\ket{+1}_{q_1q_2}:\alpha,\beta\in\mathbb{C}\} = \sem{X_1}$.
    As Theorem A.11 suggested, the rules (If) and (Meas) maintain the universality and completeness of reasoning about broader QEC codes.

\vspace{-0.06cm}

    \paragraph{Projection-based vs. satisfaction-based approach}
    Although quantum logic offers richer algebraic structures, it is limited in expressiveness compared to observable-based satisfaction approaches~\cite{qwp,ying2012floyd} and effect algebras~\cite{effectalgebra1,effectalgebra2}: it cannot express or reason about the probabilistic properties of programs. However, this limitation is tolerable for reasoning about QEC codes. On one hand, errors in QEC codes are discretized as Pauli errors and do not directly require modeling the probability. On the other hand, a QEC code can perfectly correct discrete errors with non-probabilistic constraints. Therefore, representing and reasoning about the probabilistic attributes of QEC codes is unnecessary.

\subsection{Satisfaction Relation and Entailment}

In this section, we first review the representation of program states and then define the satisfaction relation, which specifies when the program states meet the truth condition of the assertion under a given interpretation.

\vspace{-0.06cm}

    \paragraph{Quantum states as density operators.}
    The quantum system after a measurement is generally an ensemble of pure state $\{p_i,|\psi_i\>\}$, i.e., the system is in $|\psi_i\>$ with probability $p_i$. It is more convenient to express quantum states as partial density operators instead of pure states~\cite{nielsen2010quantum}. Formally, we write $\rho\triangleq \sum_ip_i|\psi_i\>\<\psi_i|\in\cD(\cH)$, where $\<\psi_i|$ is the dual state, i.e., the conjugate transpose of $|\psi_i\>$.

\vspace{-0.06cm}

    \paragraph{Classical-quantum states.} We follow~\cite{feng2021quantum} to define the program state in our language as a classical-quantum state $\mu : \mathtt{CMem}\rightarrow\mathcal{D}(\mathcal{H})$, which is a map from classical states to partial density operators over the whole quantum system.
    In particular, the singleton state, i.e., the classical state $m$ associated with quantum state $\rho$, is denoted by $(m,\rho)$. 

\vspace{-0.06cm}

    \paragraph{Satisfaction relation.}
    A one-to-one correspondence exists between projective operators and subspace, i.e., $X = \{|\psi\>: \mathtt{P}_X |\psi\> = |\psi\>\}$. Therefore, there is a standard way to define the satisfaction relation in projection-based approach~\cite{zhou2019applied,unruh2019quantum}, i.e., a quantum state $\rho$ satisfies a subspace $X$, written $\rho\models X$, if and only if $\mathrm{supp}(\rho)\subseteq X$, or equivalently, $\mathtt{P}_X\rho \mathtt{P}_X = \rho$ (or $\mathtt{P}_X\rho = \rho$) where $\mathtt{P}_X$ is the corresponding projective operation of $X$. The satisfaction relation of classical-quantum states is a point-wise lifting: 
    
    \begin{definition}[Satisfaction relation]
    Given a classical-quantum state $\mu$ and an assertion $A\in AExp$, the satisfaction relation is defined as: $\mu \models A$ iff for all $m \in \mathtt{CMem}$, $\mu(m)\models \sem{A}_m$.
    \end{definition}
    The satisfaction relation faithfully characterizes the relationship of stabilizer generators and their stabilizer states, i.e., for a Pauli expression $P$, $|\psi\>\<\psi|\models P$ iff $|\psi\>$ is a stabilizer state of $\sem{P}_m$ for any $m\in \mathtt{CMem}$.
    We further define the entailment between two assertions: 
    \begin{definition}[Entailment] For $A,B\in AExp$, the entailment and logical equivalence are: 
    \begin{enumerate}
        \item $A$ entails $B$, denoted by $A\models B$, if for all classical-quantum states $\mu$, $\mu\models A$ implies $\mu\models B$.
        \item $A$ and $B$ are logically equivalent, denoted by $A \eqmodels B$, if $A\models B$ and $B\models A$.
    \end{enumerate}
    \end{definition}
    The entailment relation is also a point-wise lifting of the inclusion of subspaces, i.e., $A\models B$ iff for all $m$, $\sem{A}_m \subseteq \sem{B}_m$. As a consequence, the proof systems of quantum logic remain sound if its entailment is defined by inclusion, e.g., a Hilbert-style proof system for $AExp$ is presented in Appendix \ref{proof-sys-assert}.
    In the (consequence) rule (Fig. \ref{infer-rule}) , strengthening the precondition and weakening the postcondition are defined as entailment relations of assertions. Therefore, entailment serves as a basis for verification conditions, which are established according to the consequence rule. 

To conclude this section, we point out that the introduction of our assertion language enables us to leverage the following observation in efficient QEC verification: 

\begin{observation}
Verifying the correctness of quantum programs requires verification for all states within the state space. By introducing phase factor $(-1)^b$ to Pauli expressions, we can circumvent the need to verify each state individually. Consider a QEC code in which a logical state $\vert b_1\cdots b_k\rangle_L$ is stabilized by the set of generators and logical operators $\langle g_1, \cdots, g_{n-k},(-1)^{b_1}\bar{Z}_1,\cdots,(-1)^{b_k}\bar{Z}_k\rangle$. We can simultaneously verify the correctness for all logical states from the set  $\{\vert b_1\cdots b_k\rangle_L :  b_1,\cdots,b_k\in\{0,1\}\}$, without introducing exponentially many assertions. 
\end{observation}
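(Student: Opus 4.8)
The plan is to show that one \emph{parametric} assertion
$A \equiv g_1 \wedge \cdots \wedge g_{n-k} \wedge (-1)^{b_1}\bar{Z}_1 \wedge \cdots \wedge (-1)^{b_k}\bar{Z}_k$, in which $b_1,\dots,b_k$ are left as \emph{free} Boolean variables, already encodes the whole logical basis $\{\vert b_1\cdots b_k\rangle_L : b_i\in\{0,1\}\}$, and then to appeal to the point-wise definition of satisfaction to conclude that checking a Hoare triple against $A$ is equivalent to checking it against all $2^k$ instantiations simultaneously.

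First I would fix an arbitrary classical memory $m$ and read off the concrete values $\beta_i \triangleq \sem{b_i}_m \in \{0,1\}$. Unfolding the semantics of $\wedge$ as intersection of subspaces gives $\sem{A}_m = \bigcap_{i=1}^{n-k}\sem{g_i}_m \cap \bigcap_{j=1}^{k}\sem{(-1)^{b_j}\bar{Z}_j}_m$, and since $\sem{P}_m$ denotes the $+1$-eigenspace of $\sem{P}_m$, this intersection is exactly the set of states stabilized simultaneously by $g_1,\dots,g_{n-k}$ and by $(-1)^{\beta_1}\bar{Z}_1,\dots,(-1)^{\beta_k}\bar{Z}_k$. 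By the stabilizer-code preliminaries these $n$ operators are independent, mutually commuting, and generate a group with $-I$ absent; hence their joint $+1$-eigenspace has dimension $2^{\,n-n}=1$ and equals $\mathrm{span}\{\vert\beta_1\cdots\beta_k\rangle_L\}$ by the very definition of the logical state. Thus $\sem{A}_m$ singles out one logical basis state, determined by the values $m$ assigns to $b_1,\dots,b_k$.

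Next I would invoke the satisfaction relation, which quantifies universally over classical memories: $\mu\models A$ iff $\mu(m)\models\sem{A}_m$ for every $m\in\mathtt{CMem}$. As $m$ ranges over memories realizing each of the $2^k$ assignments $(\beta_1,\dots,\beta_k)$, the single assertion $A$ imposes, for every such assignment at once, the constraint that the quantum part be $\vert\beta_1\cdots\beta_k\rangle_L$. Consequently a correctness formula $\{A\}\,S\,\{A'\}$ is sound for all $2^k$ logical states simultaneously, while $A$ is merely a conjunction of $n$ Pauli expressions carrying symbolic phases, polynomial in $n$ and never an explicit disjunction or enumeration of the $2^k$ states; this is precisely the claimed avoidance of exponential blow-up.

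The main obstacle is the step linking the free Boolean parameters to the universally quantified satisfaction relation: one must argue that leaving $b_1,\dots,b_k$ symbolic and verifying once is genuinely equivalent to, not merely weaker than, checking all $2^k$ concrete logical states. This rests on two facts I would make explicit: that the quantum meet $\wedge$ coincides with eigenspace intersection, so that commutativity of the stabilizers makes the joint codespace exactly one-dimensional rather than collapsing to $\{0\}$; and that satisfaction and entailment are defined point-wise over $\mathtt{CMem}$, so soundness of the program logic transfers uniformly across every instantiation of the phases. The remaining steps, namely unfolding $\sem{\cdot}_m$ and invoking the $[[n,k,d]]$ stabilizer-code definition, are routine.
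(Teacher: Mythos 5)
Your proposal is correct and takes essentially the same route the paper intends: the Observation is stated without a separate proof precisely because it follows from the ingredients you unfold — the point-wise satisfaction relation over classical memories, the interpretation of $\wedge$ as intersection of $+1$-eigenspaces, and the stabilizer-code definition under which each instantiation $(\beta_1,\dots,\beta_k)$ of the free Boolean phases yields the one-dimensional span of $\vert\beta_1\cdots\beta_k\rangle_L$. Your elaboration of the dimension count ($n$ independent commuting generators give a $2^{n-n}=1$ dimensional joint eigenspace) and of the transfer across all $2^k$ instantiations via the universal quantification over $\mathtt{CMem}$ matches exactly how the paper uses this observation, e.g.\ in the Steane-code correctness formula with the symbolic phase $b$.
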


\section{A Programming Language for QEC Codes and Its Logic}\label{prog-logic}
In this section, we introduce our programming language and the program logic specifically designed  for QEC programs.
\subsection{Syntax and Semantics}
The set of program commands $\qprog$ is defined as follows:
    \begin{align*}
        \qprog:\quad 
        S \Coloneqq{} & \qskip \mid \qinit{q_i} \mid \qut{U_1}{q_i} \mid \qut{U_2}{q_iq_j}  
        &\qquad &\mbox{where:} 
        \\
        & \qassign{e}{x} \mid  
        \qassign{\mathbf{meas}[P]}{x} 
         \mid S \fatsemi S && U_1 \in \{X, Y, Z, H, S, T\}\\            
        & \qqif{b}{S}{S} \mid \qqwhile{b}{S} && U_2 \in \{CNOT, CZ, iSWAP\}
    \end{align*}
where $\qskip$ denotes the empty program, and $\qinit{q_i}$ resets the $i$-th qubit to ground state $|0\>$. A restrictive but universal gate set is considered for unitary transformation, with single qubit gates from $\{X,Y,Z,H,S,T\}$ and two-qubit gates from $\{CNOT, CZ, iSWAP\}$, where $i$ and $j$, as the indexes of unitaries, are constants and $i\neq j$ for two-qubit gates. $\qassign{e}{x}$ is the classical assignment. In quantum measurement $\qassign{\mathbf{meas}[P]}{x}$, $P\in PExp$ is a Pauli expression which defines a projective measurement $\{M_0 = \mathtt{P}_{\sem{P}_m}, M_1 = \mathtt{P}_{\sem{P}_m^\bot}\}$; after performing the measurement, the outcome is stored in classical variable $x$. $S \fatsemi S$ is the sequential composition of programs. In if/loop commands, guard $b\in BExp$ is a Boolean expression, and the execution branch is determined by its value $\sem{b}_m$.

Our language is a subset of languages considered in~\cite{feng2021quantum}, and we follow the same theory of defining operational and denotational semantics. In detail, a classical-quantum configuration is a pair $\langle S, (m,\rho)\rangle$, where $S$ is the program that remains to be executed with extra symbol $\downarrow$ for termination, and $(m,\rho)$ the current singleton states of the classical memory and quantum system. The transition rules for each construct are presented in Fig. \ref{op-semantic}. We can further define the induced denotational semantics $\sem{S} : (\mathtt{CMem}\times\mathcal{D}(\mathcal{H}))\rightarrow(\mathtt{CMem}\rightarrow\mathcal{D}(\mathcal{H}))$, which is a mapping from singleton states to classical-quantum states~\cite{feng2021quantum}.
We review the technical details in Appendix \ref{sec:app-denotational-semantics}.

\begin{figure}
\small
\flushleft
    \begin{align*} 
    & \mbox{(Skip)} \ \langle \mathbf{skip},(m,\rho)\rangle \rightarrow \langle \downarrow,(m,\rho)\rangle &
    \mbox{(Init)} \ \langle \qassign{\vert 0\rangle}{q_i}, (m,\rho)\rangle \rightarrow \langle \downarrow, (m,
    \mbox{$\sum_{k = 0,1}$}\vert 0\rangle_{q_i}\langle k\vert \rho\vert k\rangle_{q_i}\langle 0\vert)\rangle  \\
    & \mbox{(Unit1)} \ \langle \qut{U}{q_i},(m, \rho)\rangle \rightarrow  \langle \downarrow, (m, U_{q_{i}}\rho U_{q_{i}}^{\dag})\rangle &
    \mbox{(Unit2)} \ \langle \qut{U}{q_iq_j},(m, \rho)\rangle \rightarrow  \langle \downarrow, (m, U_{q_{i,j}}\rho U_{q_{i,j}}^{\dag})\rangle \\
    & \mbox{(Assign)}\ \langle \qassign{e}{x},(m, \rho)\rangle \rightarrow  \langle \downarrow, (m[\sem{e}_m/x],\rho)\rangle &
    \mbox{(Meas)} \ \inference{M_{0} = \mathtt{P}_{\sem{P}_m}, M_{1} = \mathtt{P}_{\sem{P}_m^\bot}}{\hspace{-0.2cm}\langle \qassign{\mathbf{meas}[P]}{x}, (m,\rho)\rangle \rightarrow \langle \downarrow,(m[j/x],M_j\rho M_j^{\dag})\rangle\hspace{-0.2cm}} \\
    & \mbox{(Seq)} \  \inference{\langle S_1,(m,\rho) \rangle \rightarrow \langle S_1',(m',\rho')\rangle}{\hspace{-0.2cm}\langle S_1\fatsemi S_2, (m,\rho)\rangle \rightarrow \langle S_1'\fatsemi S_2,(m', \rho')\rangle\hspace{-0.2cm}} &
    \mbox{(If-F)} \ \inference{\sem{b}_m = \mathbf{false}}{\hspace{-0.2cm}\langle \qqif{b}{S_1}{S_0},(m,\rho)
    \rangle \rightarrow \langle S_0, (m,\rho)\rangle\hspace{-0.2cm}} \\ 
    \medskip
    & \mbox{(While-F)}\ \inference{\sem{b}_m = \mathbf{false}}{\hspace{-0.2cm}\langle\qqwhile{b}{S},(m,\rho)\rangle \rightarrow \langle \downarrow,(m,\rho)\rangle\hspace{-0.2cm}} \hspace{-1cm} & 
    \mbox{(If-T)} \ \inference{\sem{b}_m = \mathbf{true}}{\hspace{-0.2cm}\langle \qqif{b}{S_1}{S_0},(m,\rho)\rangle \rightarrow \langle S_1, (m,\rho)\rangle\hspace{-0.2cm}}  \\
    & \mbox{(While-T)} \
    \inference{\sem{b}_m = \mathbf{true}}{\langle\qqwhile{b}{S},(m,\rho)\rangle \rightarrow \langle S\fatsemi\qqwhile{b}{S},(m,\rho)\rangle}\hspace{-6cm} &
    \end{align*}
\vspace{-0.35cm}
\caption{Operational semantics for QEC programs.}
\label{op-semantic}
\end{figure}

    \paragraph{Expressiveness of the programming language}
    Our programming language supports Clifford + T gate set and Pauli measurements. Therefore, it is capable of expressing all possible quantum operations, in an approximate manner. The claim of expressiveness can be proved by the following observations:
    \begin{enumerate}
        \item Clifford + $T$ is a universal gate set~\cite{nielsen2010quantum}. Thus, according to the Solovay-Kitaev theorem, any unitary $U$ can be approximated within error $\epsilon$ using $\Theta(\log^c(1/\epsilon))$ gates from this set, where $c$ is a constant whose value depends on the proof.
        \item Measurement in any computational basis $|m\rangle = |a_1a_2\cdots a_n\rangle$ is performed by the projector $\mathtt{P}_m = \frac{\Pi_{i=1}^{n}(I + (-1)^{a_i}Z_i)}{2^n}$, which can be expressed using our measurement statements $\qmeas{x}{(-1)^{a_i}Z_i}$. Further, projective measurements along with unitary operations are sufficient to implement any POVM measurement.
    \end{enumerate}

\subsection{Correctness formula and proof system} \label{corr-proof-sys}

\begin{definition}[Correctness formula]
\label{correct-formula}
    The correctness formula for QEC programs is defined by the Hoare triple $\{A\} S \{B\}$, where $S\in Prog$ is a QEC program, $A,B\in AExp$ are the pre- and post-conditions.
    A formula $\{A\} S \{B\}$ is valid in the sense of partial correctness, written as $\models \{ A \} S \{ B \}$, if for any singleton state $(m,\rho)$: 
    $(m,\rho) \models A$ implies $\sem{S}(m,\rho) \models B$.
\end{definition}

The proof system of QEC program is presented in Fig. \ref{infer-rule}. Most of the inference rules are directly inspired from~\cite{ying2012floyd, zhou2019applied, feng2021quantum}.  
We use $A[e/x]$ (or $A[e_1/x_1,e_2/x_2,\cdots]$) to denote the (simultaneous) substitution of variable $x$ or constant constructor $x\in \{X_r,Y_r,Z_r\}$ with expression $e$ in assertion $A$. 
Based on the syntax of our assertion language and program constructors, we specifically design the following rules:
\begin{itemize}
    \item Rule (Init) for initialization. Previous works~\cite{ying2012floyd,feng2021quantum} do not present syntax for assertion language and give the precondition based on the calculation of semantics, which, however, cannot be directly expressed in $AExp$. We derive the rule (Init) from the fact that initialization can be implemented by a computational measurement followed by a conditional $X$ gate. As shown in the next section, the precondition is indeed the weakest precondition and semantically equivalent to the one proposed in~\cite{zhou2019applied}.
    \item Rules for unitary transformation. We provide the rules for Clifford + $T$ gates, controlled-Z ($CZ$) gate, as well as $iSWAP$ gate, which are easily implemented in superconducting quantum computers. It is interesting to notice that, even for two-qubit unitary gates, the pre-conditions can still be written as the substitution of elementary Pauli expressions. 
\end{itemize}

\paragraph{Reasoning about Pauli errors}
To model the possible errors occurring in the QEC program, we further introduce a syntax sugar $\qut{U}{[b]q_i}$ for `$\qqif{b}{\qut{U}{q_i}}{\qskip}$' command, which means if the guard $b$ is true then apply Pauli error $U\in\{X,Y,Z\}$ on $q$, otherwise skip. The corresponding derived rules are:
\begin{equation*}
    \hspace{-0.3cm}
    \begin{array}{ll}
       \big\{ A[(-1)^b Y_i/Y_i, (-1)^{b}Z_i/Z_i]\big\} \ \qut{X}{[b]q_i} \ \{A \} & \big\{ A[(-1)^{b}X_i/X_i, (-1)^{b}Z_i/Z_i]\big\} \ \qut{Y}{[b]q_i} \ \big\{A \big\} \\[0.1cm]
       \big\{ A[(-1)^{b}X_i/X_i, (-1)^bY_i/Y_i]\big\} \ \qut{Z}{[b]q_i} \ \big\{A \big\}. & 
    \end{array}
\end{equation*}

\begin{example}[Derivation of the precondition using the proof system]
    Consider a fragment of QEC program which describes the error correction stage of 3-qubit repetition code: $\qqfor{1\dots 3}{\qut{X}{[x_i]q_i}}$. This program corrects possible $X$ errors indicated by $x_i$. Starting from the post-condition $Z_1Z_2\wedge Z_2Z_3 \wedge (-1)^b Z_1$, we derive the weakest pre-condition for this program: 
    \begin{align*}
        \big\{ Z_1Z_2 \wedge (-1)^{x_3} Z_2Z_3 \wedge (-1)^b Z_1\big\}\  & \qut{X}{[x_3]q_3} \ \big\{ Z_1Z_2 \wedge Z_2Z_3 \wedge (-1)^b Z_1\big\} \\
        \big\{ (-1)^{x_2}Z_1Z_2 \wedge (-1)^{x_3+x_2} Z_2Z_3 \wedge (-1)^b Z_1\big\}\  &\qut{X}{[x_2]q_2} \ \big\{ Z_1Z_2 \wedge (-1)^{x_3} Z_2Z_3 \wedge (-1)^b Z_1\big\} \\
        \big\{ \mbox{\small$(-1)^{x_2+x_1} Z_1Z_2 \wedge (-1)^{x_3 + x_2} Z_2Z_3 \wedge (-1)^{b+x_1} Z_1$} \big\}\  &\qut{X}{[x_1]q_1} \ \big\{ \mbox{\small$(-1)^{x_2}Z_1Z_2 \wedge (-1)^{x_3+x_2}Z_2Z_3 \wedge (-1)^b Z_1$}\big\} 
    \end{align*}
    We break down the syntax sugar as a sequence of subprograms and use the inference rules for Pauli errors to derive the weakest pre-condition.
\end{example}

\begin{figure}[t]
\centering
\small
\begin{align*}
    &\mbox{(Skip)} \ \vdash\{A\} \ \qskip \ \{A\} \hspace{-5cm} 
    & \mbox{(Init)} \ \vdash\{ (Z_i \wedge A) \vee (-Z_i \wedge A[-Y_i/Y_i, -Z_i/Z_i]) \} \ \qassign{\vert 0\rangle}{q_i} \  \{A\} \\[0.1cm]
    &\mbox{(Assign)} \ \vdash\{A[e/x]\} \qassign{e}{x} \ \{ A \} \hspace{-5cm} 
    & \mbox{(Meas)} \ \vdash\{ (P \wedge A[0/x])\vee (\neg P \wedge A[1/x])\} \ \qassign{\mathbf{meas}[P]}{x} \ \{A\} \\[-0.1cm]
   \cline{1-3} \\[-0.45cm]
   &\mbox{(U-X)} \ \vdash\{ A[-Y_i/Y_i,-Z_i/Z_i]\} \ \qut{X}{q_i} \ \{A \} \hspace{-5cm} &
    \mbox{(U-Y)} \ \vdash\{ A[-X_i/X_i, -Z_i/Z_i]\} \ \qut{Y}{q_i} \ \{A \}\\[0.1cm]
   &\mbox{(U-Z)} \ \vdash\{ A[-X_i/X_i, -Y_i/Y_i]\} \ \qut{Z}{q_i} \ \{A \} \hspace{-5cm} &
    \mbox{(U-H)} \ \vdash\{A[Z_i/X_i, -Y_i/Y_i, X_i/Z_i]\} \ \qut{H}{q_i} \ \{A\} \\[0.1cm]
   &\mbox{(U-S)} \ \vdash\{A[-Y_i/X_i, X_i/Y_i]\} \ \qut{S}{q_i} \ \{A\} \hspace{-5cm} &
    \mbox{(U-T)} \ \vdash\{A[\mbox{$\frac{1}{\sqrt{2}}$}(X_i - Y_i) /X_i, \mbox{$\frac{1}{\sqrt{2}}$}(X_i + Y_i)/Y_i] \ \qut{T}{q_i} \ \{ A \} \\[0.1cm]
   &\mbox{(U-CNOT)}  \ \vdash\{A[X_iX_j/X_i, Y_iX_j/Y_i, Z_iY_j/Y_j, Z_iZ_j/Z_j]\} \ \qut{CNOT}{q_i q_j} \ \{ A \} \hspace{-5.5cm} & \\[0.1cm]
   &\mbox{(U-CZ)} \ \vdash\{ A[X_iZ_j/X_i, Y_iZ_j/Y_i, Z_iX_j/X_j, Z_iY_j/Y_j] \} \ \qut{CZ}{q_iq_j} \ \{ A \} \hspace{-5.5cm} & \\[0.1cm]
   &\mbox{(U-iSWAP)} \ \vdash\{A[Z_iY_j/X_i, -Z_iX_j/Y_i, Z_j/Z_i, Y_iZ_j/X_j, -X_iZ_j/Y_j, Z_i/Z_j] \} \ \qut{iSWAP}{q_iq_j} \ \{ A \} \hspace{-7cm} & \\[-0.1cm]
   \cline{1-3} \\[-0.45cm]
   & \mbox{(Seq)} \ \inference{\vdash\{A\}S_1\{B\} \quad \vdash\{B\} S_2\{C\} }{\vdash\{A\} S_1\fatsemi S_2 \{C\}} \hspace{-5cm} 
   & \mbox{(If)}  \ \inference{ \vdash\{A_0\} S_0 \{B\}\quad \vdash\{ A_1\}S_1 \{B\}  }{\vdash\{(\neg b \wedge A_0) \vee (b\wedge A_1) \} \ \qqif{b}{S_1}{S_0} \ \{B\} } \\[0.1cm]
   &\mbox{(While)} \ \inference{\vdash\{b \wedge A\} S \{ A \} }{\vdash\{ A \} \ \qqwhile{b}{S} \ \{\neg b \wedge B\}} \hspace{-5cm} 
   & \mbox{(Con)} \ \inference{ A \models A' \quad \vdash\{A'\} S \{B'\} \quad B^{\prime}\models B}{ \vdash\{A\} S \{A\}} 
\end{align*}
\caption{Inference rules for reasoning about QEC programs. For simplicity, we write $-P$ for $(-1)^\mathbf{true}P\in PExp$, write $P_1-P_2$ for $P_1+(-1)^\mathbf{true}P_2\in PExp$, where $P, P_1, P_2\in PExp$, and write $\frac{1}{\sqrt{2}}$ for $\frac{\sqrt{2}}{2^1}\in SExp$. }
\label{infer-rule}
\end{figure}

\subsection{Soundness theorem}
In this subsection, we present the soundness of our proof system and sketch the proofs.
\begin{theorem}[Soundness]
\label{thm-sound}
The proof system presented in Fig. \ref{infer-rule} is sound for partial correctness; that is, 
for any $A,B\in AExp$ and $S\in Prog$, $\vdash \{A\} S \{ B\}$ implies $\models \{A\} S \{ B\}$.
\end{theorem}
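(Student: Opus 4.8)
The plan is to prove soundness by structural induction on the program $S$, establishing for each construct that $\vdash \{A\} S \{B\}$ implies $\models \{A\} S \{B\}$, i.e., that whenever $(m,\rho) \models A$ we have $\sem{S}(m,\rho) \models B$. Unfolding the definition of the satisfaction relation for classical-quantum states, this means showing that for every classical memory $m'$ reachable in $\sem{S}(m,\rho)$, the support of the resulting density operator lies in $\sem{B}_{m'}$. Since soundness is inherited automatically by the structural rules (Seq) and (Con) once the base cases hold---(Seq) from transitivity of the denotational semantics and the induction hypotheses, (Con) from the fact that entailment $A' \models B'$ is exactly pointwise subspace inclusion---the real work lies in the atomic and branching rules.

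First I would handle the unitary and error rules (U-X through U-iSWAP). The key observation, made available by Theorem \ref{thm-pauli-closed}, is that each gate $U$ acts on Pauli expressions by conjugation $P \mapsto U^\dag \sem{P}_m U$, and the substitutions appearing in the rules (e.g. $A[-Y_i/Y_i, -Z_i/Z_i]$ for $X$) are precisely the syntactic encodings of these conjugations. Concretely, I would show the semantic substitution lemma $\sem{A[Q/p]}_m = U^\dag \sem{A}_m U$ as subspaces, using that conjugation by a unitary maps the $+1$-eigenspace of $\sem{P}_m$ to the $+1$-eigenspace of $U^\dag\sem{P}_m U$ and commutes with all quantum-logic connectives (orthocomplement, intersection, span of union, Sasaki implication are all preserved by unitary conjugation). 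Then $\rho \models \sem{A[Q/p]}_m$ is equivalent to $U\rho U^\dag \models \sem{A}_m$, which is exactly the soundness claim since $\sem{\qut{U}{q_i}}(m,\rho) = (m, U\rho U^\dag)$. The error rules for $\qut{U}{[b]q_i}$ then follow by case analysis on $\sem{b}_m$, reducing to the bare unitary rule or to (Skip).

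Next I would treat the branching constructs (Meas), (Init), and (If), where the quantum-logic interpretation of $\vee$ is essential. For (Meas), the outcome splits $\rho$ into $M_0\rho M_0^\dag$ with memory $m[0/x]$ and $M_1\rho M_1^\dag$ with $m[1/x]$, where $M_0 = \mathtt{P}_{\sem{P}_m}$ and $M_1 = \mathtt{P}_{\sem{P}_m^\bot}$. Assuming $(m,\rho) \models \sem{(P \wedge A[0/x]) \vee (\neg P \wedge A[1/x])}_m$, I would show that projecting by $M_0$ forces the support into $\sem{P}_m$ and, combined with the disjunct structure, into $\sem{A[0/x]}_m = \sem{A}_{m[0/x]}$; symmetrically for $M_1$. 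Here the crucial algebraic fact is that for commuting projections the meet $P \wedge A$ behaves classically, so $M_0(\rho)$ automatically satisfies $P$ and the $\neg P$ disjunct is annihilated---this is exactly where interpreting $\vee$ as span-of-union rather than set union is needed, as motivated by Example \ref{exam-failure-disjunction}. The rule (Init) is then derived by viewing initialization as a measurement of $Z_i$ followed by a conditional $X$ correction, matching the given precondition. The rule (If) follows directly from the operational rules (If-T)/(If-F) by case analysis on $\sem{b}_m$, using that the guard $b$ commutes with everything so the $\vee$ again degenerates to a classical case split.

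Finally, (While) requires the loop invariant argument: I would show that $\{A\}$ is preserved across iterations using the induction hypothesis on the body, and that the denotational semantics of the loop, defined as the least fixed point (supremum of finite unfoldings), preserves the invariant in the limit. The main obstacle I anticipate is the (Meas) case, specifically verifying rigorously that the span-of-union semantics of $\vee$ correctly captures all valid post-measurement supports---one must confirm that $\Supp(M_0 \rho M_0^\dag + M_1\rho M_1^\dag)$-style reasoning is compatible with the pointwise satisfaction at the two distinct memories $m[0/x]$ and $m[1/x]$, and that no cross terms between the two branches are lost. The subtlety, flagged in the paper's own discussion, is that quantum disjunction does not distribute over meet, so I would need to lean on the commutativity of $\mathtt{P}_{\sem{P}_m}$ with the projections defining $A$ (which holds because $P$ appears as a conjunct) to collapse the Sasaki structure into the desired classical behavior on each branch.
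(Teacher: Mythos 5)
Your proposal is correct and would go through, but it is organized differently from the paper's proof. You verify the validity of each rule directly by structural induction; the paper instead factors the argument through a weakest-liberal-precondition calculus. Concretely, the paper first defines, for arbitrary $S$ and $f_B:\mathtt{CMem}\rightarrow\mathcal{S}(\mathcal{H})$, the map $wlp.S.f_B(m_{in}) \triangleq \bigwedge_{m_{out}}\ker\bigl(\sem{S}^{\prime\ast}_{m_{in},m_{out}}\bigl(\mathtt{P}_{f_B(m_{out})^\bot}\bigr)\bigr)$ via the dual superoperator, proves through a support/kernel duality lemma that $\models\{wlp.S.f_B\}\,S\,\{f_B\}$ and that this is the weakest such precondition, and then shows by structural induction (Theorem \ref{thm-definability}) that every rule of Fig.~\ref{infer-rule} except (While) and (Con) produces \emph{exactly} the $wlp$ of its postcondition. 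Soundness of those rules is then immediate, (Con) is trivial, and (While) is handled separately by induction on the syntactic approximations $(\mathbf{while})^n$ together with a limit argument---essentially your fixed-point step. The per-construct computations largely coincide with yours: your substitution lemma for unitaries is the paper's appeal to Theorem \ref{thm-pauli-closed}; your commuting-conjunct argument for (Meas) is the paper's Sasaki-projection equivalence (using that $P$ commutes with $P\wedge A[0/x]$ since the latter is contained in $P$, which is the precise form of the commutativity you invoke); and your reduction of (Init) to a $Z_i$-measurement followed by a conditional $X$ is literally the paper's. The difference in organization is not cosmetic, however: by establishing equalities with $wlp$ rather than one-directional inclusions, the paper obtains weak relative completeness for loop-free programs (Theorem \ref{thm-weak-complete}) as a byproduct, which your direct soundness induction does not yield; conversely, your route is more elementary and self-contained, avoiding the dual superoperator and the kernel characterization altogether, at the cost of proving only the one-directional statement actually demanded by the theorem.
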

The soundness theorem can be proved in two steps. First of all, we provide the rigorous definition of the weakest liberal precondition $wlp.S.f_B$ for any program $S\in \qprog$ and mapping $f_B:\mathtt{CMem} \rightarrow \mathcal{S}(\mathcal{H})$ and prove the correctness of this definition. Subsequently, we use structural induction to prove that for any $A,B \in AExp$ and $S\in Prog$ such that $\vdash \{A\} S \{ B\}$, $\sem{A} \models wlp.S.\sem{B}$. Proofs are discussed in detail in Appendix A.7.

\section{Verification Framework and a Case Study}
\label{sec:VC-framework}

    Now we are ready to assemble assertion logic and program logic presented in the previous two section into a framework of QEC verification. 
    
    \subsection{Verification Conditions}
    \label{verify-cond}
        As Theorem A.11 suggests, all rules except for (While) and (Con) give the weakest liberal precondition with respect to the given postconditions. Then the standard procedure like the weakest precondition calculus can be used to verify any correctness formula $\{ A \} S \{ B \} $, as discussed in~\cite{ying2024foundations}:
        \begin{enumerate}
            \item Obtain the expected precondition $A^\prime$ in $\{A^\prime\} S \{B\}$ by applying inference rules of the program logic backwards.
            \item Generate and prove the \textit{verification condition} (VC) $A\models A^\prime$ using the assertion logic.
        \end{enumerate}
        Dealing with VC requires additional efforts, particularly in the presence of non-commuting pairs of Pauli expressions. 
        However for QEC programs, there exists a general form of verification condition, which can be derived from the correctness formula:

        \begin{definition}[Correctness formula for QEC programs]
        Consider a program $S = \textbf{Corr}(E, U)$, which is generalized from the QEC program in Table \ref{tab:prog}. It operates on a stabilizer code with a minimal generating set $\{g_1, \cdots, g_{n-k}, \bar{L}_{n-k+1}, \cdots, \bar{L}_n\}$ containing $n$ independent and commuting Pauli expressions. The correctness formula of this program can be expressed as follows:
        \begin{equation}
        \bigg\{ \bigwedge_i g_i \wedge \bigwedge_j \bar{L}_j\bigg\}  \ S \ \bigg\{ \bigwedge_i g_i \wedge \bigwedge_j \bar{U}\bar{L}_j\bar{U}^{\dag} \bigg\}
        \label{def:corr-qec}
        \end{equation}
        \end{definition}
        
        The verification condition to be proven is derived from this correctness formula with the aid of inference rules, as demonstrated below\footnote{Here, we assume the error in the correction step is always Pauli errors; otherwise, two verification conditions of the form Eqn. (\ref{wp-stabilizer}) are generated that separately deal with error before measurement and error in correction step.}: 
        \begin{equation}
         \bigg( \bigwedge_{i}g_i \wedge \bigwedge_{j} \bar{L}_j \bigg) \ \wedge P_c \models \bigvee_{\bs\in \{0,1\}^{n-k}} \bigg(
         \bigwedge_{i} (-1)^{r_i(\bs) + h_i(\be)} g_i^\prime \wedge
         \bigwedge_{j} (-1)^{r_j(\bs)+h_{j}(\be)}\bar{L}_j^{\prime}
         \bigg).
        \label{wp-stabilizer}
        \end{equation}

        In Eqn. \eqref{wp-stabilizer}, $P_c$ represents a classical assertion for errors, $i, j$ range over $\{1,\cdots,n-k\}$, $\{n-k+1,\cdots,n\}$ respectively, The vector $\bs$ encapsulates all possible measurement outcomes (syndromes) and $\be$ represents the error configuration. The semantics of $g_i, g_i', \bar{L}_j, \bar{L}_j'$ are normal operators. The terms $r_i(\bs), r_j(\bs)$ denote the sum of all corrections effective for the corresponding operators, while $h_i(\be),h_j(\be)$ account for the total error effects on the operators caused by the injected errors. The details of derivation are provided in Appendix \ref{eff-def-app}.

        Let us consider how to prove Eqn. (\ref{wp-stabilizer}) in the following three cases:
        \begin{enumerate}
        \item $\{g_i'\}\subseteq \{g_i\}$ and $\{\bar{L}_j'\}\subseteq \{\bar{L}_j\}$.
        The entailment is then equivalent to check $P_c\models\bigvee_{\bs}\big(\bigwedge_i (r_i(\bs)+h_i(\be) = 0) \wedge \bigwedge_j (r_j(\bs)+h_j(\be) = 0)\big)$, which can be proved directly by SMT solvers.
        \item All $g_i, g_i', \bar{L}_j, \bar{L}_j'$ commute with each other. Since $\{g_i, \bar{L}_j \}$ is a minimal generating set, any $g_i'$ or $\bar{L}_j'$ can be written as the product of $\{g_i, \bar{L}_j \}$ up to a phase $\pm 1$, e.g., $(-1)^{\alpha_i}g_i' = \prod_{i\in \mathcal{I}_{i'}}g_i\prod_{j\in \mathcal{J}_{j'}}\bar{L}_j$,
        $(-1)^{\alpha_j}\bar{L}_j' = \prod_{i\in \mathcal{I}_{i'}}g_i\prod_{j\in \mathcal{J}_{j'}}\bar{L}_j$, so the entailment is equivalent to check $P_c\models\bigvee_{\bs}\big(\bigwedge_i (r_i(\bs)+h_i(\be) = \alpha_i) \wedge \bigwedge_j (r_j(\bs)+h_j(\be) = \alpha_j)\big)$.
        \item There exist non-commuting pairs\footnote{We assume no error happens in the correction step; otherwise, we deal them in two separate VCs.}. 
        We consider the case that the total errors are less than the code distance; furthermore, $g_i'$ is ordered such that $g_i' = Ug_iU^\dag$ for some unitary $U$, which can be easily achieved by preserving the order of subterms during the annotation step (1).
        
        The key idea to address this issue involves eliminating all non-commuting terms on the right-hand side (RHS) and identifying a form that is logically equivalent to the RHS. We briefly discuss the steps of how to eliminate the non-commuting terms, as outlined below:
        \begin{enumerate}
        \item Find the set $\mathcal{G}  \subseteq \{g_i'\}$ such that any element $g_i'\in \mathcal{G}$ differs from $g_i$ up to a phase; Find the set
        $\mathcal{L}\subseteq\{\bar{L}_j'\}$ such that $\bar{L}_j'$ differs from $\bar{L}_j$ up to a phase.
        \item Update $\mathcal{G}$ and $\mathcal{L}$ by multiplying some
        $g_i'\in \mathcal{G}$ onto those elements, until $\mathcal{L}$ is
        empty and any $g_i' \in \mathcal{G}$ differs from $g_i$ in only one qubit. 
        \item Replace those $g_i'$ with $g_i$, and check if the phases of the remaining items are the same for all $2^k$ terms. If so, this
        problem can be reduced to the commuting case, since we can successfully
        use $(P\wedge Q) \vee (\neg P \wedge Q) = Q$ ($P$ and $Q$ commute with each other) to eliminate all non-commuting elements.
         \end{enumerate} 
        To illustrate how our ideas work, we provide an concrete example in Section \ref{subse-nonpauli}, which illustrates how to correct a single $T$ error in the Steane code.
    \end{enumerate}

\textbf{Soundness of the above methods.}

After proposing the methods to handle the verification condition (VC), we now discuss the soundness of our methods case by case:

$\bullet$ \textit{Commuting case.} If all $g_i, g_i', \bar{L}_j, \bar{L}'_j$ commute with each other, then the equivalence of the VC proposed in case (2) and Eqn. \eqref{wp-stabilizer} can be guaranteed by the following proposition: 

\begin{proposition}
Given a verification condition of the form:
\begin{equation}
\Big((-1)^{b_1} P_1 \wedge \dots \wedge (-1)^{b_n} P_n \Big) \wedge P_c \models  \bigvee_{\bm{s}} \Big( (-1)^{b_1'} P_1^\prime\wedge \dots \wedge (-1)^{b_n'} P_n^\prime\Big)
\label{vc-expr-app}
\end{equation}
where $\left\{(-1)^{b_1}P_1, \dots, (-1)^{b_n}P_n\right\}$, $\left\{(-1)^{b_1'}P_1^{\prime}, \dots, (-1)^{b_n'}P_n^{\prime}\right\}$ are independent and commuting generators of two stabilizer groups $S,S' \subseteq \mathcal{G}_n$, $\mathcal{G}_n$ is the n-qubit Pauli group. $S$ and $S'$ satisfy $-I \notin S, S'$. If $\{P_1, \dots, P_n, P_1', \dots, P_n'\}$ commute with each other, then:
\begin{enumerate}[label={\Roman*.}]
    \item For all $i$, there exist a unique $\alpha_i\in \{0,1\}$ and $\{i_j\}\in 2^{[n]}, s.t. (-1)^{\alpha_i}P_i' = \Pi_j P_{i_j}$.
    \item $P_c \models  \bigwedge_{i=1}^{n}(b_i' = \alpha_i + \sum_j b_{i_j})$ implies $A\models A'$, where $A, A'$ are left and right hand side of Expression {(\ref{vc-expr-app})}.
\end{enumerate}
\end{proposition}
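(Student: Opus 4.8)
The plan is to handle the two parts separately: Part I is a statement about the symplectic structure of the Pauli group, and Part II follows from a direct eigenvalue computation that invokes Part I.

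For Part I, I would pass to the standard identification of the Pauli group modulo global phase with the symplectic space $\mathbb{F}_2^{2n}$, under which two Pauli strings commute iff their images are orthogonal for the symplectic form, and a product of Pauli strings corresponds to the $\mathbb{F}_2$-sum of their images. Since $\{P_1,\dots,P_n\}$ are $n$ independent and commuting generators (with $-I\notin S$, so independence of generators matches $\mathbb{F}_2$-linear independence of images), their images span an $n$-dimensional isotropic, hence Lagrangian, subspace $V\subseteq\mathbb{F}_2^{2n}$. Each $P_i'$ commutes with every $P_k$, so its image lies in $V^\perp$; Lagrangianness gives $V^\perp=V$, so the image of $P_i'$ is an $\mathbb{F}_2$-combination of the images of the $P_k$, i.e. $P_i'$ equals $\prod_j P_{i_j}$ for some index set $\{i_j\}$ up to a scalar. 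Uniqueness of $\{i_j\}$ follows since the $P_k$ form a basis of $V$. It then remains to pin down the scalar: because the $P_{i_j}$ pairwise commute and each is Hermitian, their product is Hermitian, which rules out the factors $\pm i$ and forces $\prod_j P_{i_j}=(-1)^{\alpha_i}P_i'$ for a unique $\alpha_i\in\{0,1\}$.

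For Part II, I would unfold the entailment through its semantics: $A\models A'$ means $\sem{A}_m\subseteq\sem{A'}_m$ for every $m\in\mathtt{CMem}$. For $m\not\models P_c$ the left side is $0_{\mathcal{H}}$ and the inclusion is trivial, so fix $m\models P_c$. Because $\{(-1)^{b_i}P_i\}$ are $n$ independent commuting generators of $S$ with $-I\notin S$, their common $+1$-eigenspace is one-dimensional (the $k=0$ case of the codespace dimension $2^k$), spanned by a single stabilizer state $\ket{\phi}$, and this subspace is exactly $\sem{A}_m$. Using $P_k\ket{\phi}=(-1)^{b_k}\ket{\phi}$ together with the product representation from Part I, I would compute
\[
P_i'\ket{\phi}=(-1)^{\alpha_i}\prod_j P_{i_j}\ket{\phi}=(-1)^{\alpha_i+\sum_j b_{i_j}}\ket{\phi},
\]
so $\ket{\phi}$ is stabilized by $(-1)^{b_i'}P_i'$ precisely when $b_i'=\alpha_i+\sum_j b_{i_j}\pmod 2$. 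Under the hypothesis $P_c\models\bigwedge_i(b_i'=\alpha_i+\sum_j b_{i_j})$ this holds simultaneously for all $i$ at the relevant syndrome, so $\ket{\phi}$ lies in the corresponding disjunct of $A'$, and by $\sem{\cdot\vee\cdot}_m$ being the span of the union we conclude $\sem{A}_m=\Span\{\ket{\phi}\}\subseteq\sem{A'}_m$.

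The main obstacle I anticipate lies in Part I rather than Part II: one must verify that the scalar relating $P_i'$ to $\prod_j P_{i_j}$ is a genuine sign $\pm 1$ and not $\pm i$, since otherwise $(-1)^{\alpha_i}$ would not capture it and the clean eigenvalue identity in Part II would fail. This is exactly where the Hermiticity-plus-commutativity argument is indispensable. A secondary point to treat carefully is that the disjunction over $\bs$ on the right only requires membership in a \emph{single} disjunct, so it suffices that the phase-matching condition holds at one admissible syndrome; I would make explicit that the hypothesis on $P_c$ supplies such a syndrome for every $m\models P_c$.
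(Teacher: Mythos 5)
Your proof is correct, and its skeleton matches the paper's own two-step argument: Part I reduces to the fact that $\langle P_1,\dots,P_n\rangle$ is a \emph{maximal} commuting subgroup of $\mathcal{G}_n$, so anything commuting with every $P_k$ must lie in it up to a scalar, and Part II combines one-dimensionality of the stabilized subspace with the phase identity $P_i'\ket{\phi}=(-1)^{\alpha_i+\sum_j b_{i_j}}\ket{\phi}$, so that the left-hand side lands inside a single disjunct of the right-hand side. Where you genuinely differ is the justification of Part I: the paper cites the external maximality result of \cite{sarkar2021sets} (a maximal commuting subgroup of the $n$-qubit Pauli group has $2^n$ elements) and argues by contraposition that a $P_i'$ not expressible as $\Pi_j P_{i_j}$ up to phase would anticommute with some $P_j$, whereas you derive the same fact from first principles via the symplectic picture over $\mathbb{F}_2^{2n}$, identifying the span $V$ of the images of the $P_k$ as a Lagrangian subspace and using $V^\perp=V$. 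Your route is self-contained and, more importantly, settles two points the paper's proof passes over in silence: uniqueness of the index set $\{i_j\}$ (the images form a basis of $V$) and the fact that the relating scalar is $\pm1$ rather than $\pm i$ (Hermiticity of $P_i'$ and of the commuting product), which is exactly what makes $\alpha_i\in\{0,1\}$ well-defined and the eigenvalue computation in Part II legitimate. For Part II the paper works at the level of assertion rewriting---replacing the conjunction of the $(-1)^{b_i}P_i$ by the equivalent conjunction $\bigwedge_i(-1)^{\alpha_i+\sum_j b_{i_j}}P_i'$ and then matching phases under $P_c$---while you evaluate everything on the stabilizer state $\ket{\phi}$ spanning $\sem{A}_m$ for each $m\models P_c$; these are the same computation in different dress, though your per-memory-state unfolding (including the degenerate case $m\not\models P_c$, and the explicit remark that membership in one disjunct suffices because $\vee$ is interpreted as span of the union) is the more rigorous rendering of what the paper's assertion-level equivalences assert.
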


The proof leverages the observation that any $P_i'$ which commutes with all elements in a stabilizer group $S$ can be written as products of generators of $S$~\cite{nielsen2010quantum}. We further use $P \wedge Q = QP$ to reformulate the LHS of Expression (\ref{vc-expr-app}) and generate terms that differs from the RHS only by phases. The detailed proof of this proposition is postponed to Appendix \ref{eff-verify-app}. 

$\bullet$ \textit{Non-commuting case.} The soundness of this case can be demonstrated by separately proving the soundness of step (a), (b) and step (c).
\begin{enumerate}
\item \textit{Step (a) and (b)}: Consider the check matrix $H$. If step (b) fails for some error configuration $\be$ with weight $w_{\be}\leq d-1$, then there exists a submatrix $H_{sub}$ of size $(n-k)\times w_{\be}$, with columns being the error locations. The rank of the submatrix is $r<w_{\be}$, leading to a contradiction with the definition of $d$ being the minimal weight of an undetectable error. This is because there exists another $\be'$ whose support is within that of $\be$, and $H\be' = 0$.
\item \textit{Step (c)}: The soundness is straightforward since $(P\wedge Q) \vee (\neg P \wedge Q) = Q$ whenever $P$ and $Q$ commute, which is the only formula we use to eliminate non-commuting elements.
\end{enumerate}

\subsection{Case study: Steane code (continued)} 
\label{case-steane}
To illustrate the general procedure of our verification framework, let us consider the 7-qubit Steane code presented in Section \ref{example-steane} with $Y$ and $T$ errors ($H$ errors is deferred to Appendix \ref{case2-steane-app}. 
\subsubsection{Case I: Reasoning about Pauli $Y$ errors}
We first verify the correctness of Steane code with Pauli $Y$ errors. We choose $Y$ error because its impact on stabilizer codes is equivalent to the composite effect of $X$ and $Z$ errors on the same qubit. In this scenario, the verification condition (VC) to be proved is generated from the precondition:\footnote{The notations in Eqn. \eqref{eqn:steane} may be a bit confusing, therefore we provide Table \ref{tab:value} to help explain the relationships of those notations.  For details of the derivation please refer to Appendix \ref{case1-steane-app}}.

\begin{equation}
\left\{\bigg(\sum_{i=1}^{7} e_i\leq 1\bigg) \wedge  \bigg((-1)^{b}\bar{Z} \wedge  \bigwedge_{i = 1}^{6}g_i\bigg)\right\} \models \Bigg\{\bigvee_{\bs\in\{0,1\}^6} \bigg((-1)^{b+r_7(\bs)+h_7(\be)}\bar{Z}\wedge \bigwedge_{i=1}^{6} (-1)^{r_i(\bs)+h_i(\be)} g_i\bigg) \Bigg\}.
\label{eqn:steane}
\end{equation}
No changes occur in Pauli generators $\bar{Z}$ and $g_i$, therefore according to case (1) in the proof of Eqn. \eqref{wp-stabilizer}, the verification condition is equivalent with $P_c \sqsubseteq P_c^\prime$, where $P_c = \sum_{i=1}^7 e_i \leq 1$, $P_c^\prime = \bigvee_{\bs\in\{0,1\}^6}\bigwedge_{i=1}^{7}(r_i(\bs) + h_i(\be) = 0)$.
\begin{table}
  \caption{Symbols and values appear in Eqn. \eqref{eqn:steane}}
  \label{tab:value}
  \small
  \begin{tabular}{c|c|c|c|c|c}
    \hline
    Symbols  & Values & Symbols  & Values & Symbols & Values \\
    \hline
    $r_7(\bs)$& $\sum_{i=1}^7 f_{z,i}$ & $h_7(\be)$ & $\sum_{i=1}^7 e_i$ & &  \\
    \hspace{-0.16cm}$h_1(\be)$, $h_4(\be)$\hspace{-0.10cm} & $e_1 + e_3 + e_5 + e_7 $ & \hspace{-0.08cm}$h_2(\be)$, $h_5(\be)$\hspace{-0.10cm} & $e_2 + e_3 + e_6 + e_7$  & \hspace{-0.08cm}$h_3(\be)$, $h_6(\be)$\hspace{-0.10cm} & $e_4 + e_5 + e_6 + e_7$ \\
    $r_1(\bs)$ & \hspace{-0.08cm}$f_{z,1} + f_{z,3} + f_{z,5} + f_{z,7}$\hspace{-0.10cm} & $r_2(\bs)$ & \hspace{-0.08cm}$f_{z,2} + f_{z,3} + f_{z,6} + f_{z,7}$\hspace{-0.10cm} & $r_3(\bs)$ & \hspace{-0.08cm}$f_{z,4} + f_{z,5} + f_{z,6} + f_{z,7}$\hspace{-0.16cm} \\
    $r_4(\bs)$ & \hspace{-0.08cm}$f_{x,1} + f_{x,3} + f_{x,5} + f_{x,7}$\hspace{-0.10cm} & $r_5(\bs)$ & \hspace{-0.08cm}$f_{x,2} + f_{x,3} + f_{x,6} + f_{x,7}$\hspace{-0.10cm} & $r_6(\bs)$ & \hspace{-0.08cm}$f_{x,4} + f_{x,5} + f_{x,6} + f_{x,7}$\hspace{-0.16cm} \\
  \hline
\end{tabular}
\end{table}
We can prove the VC if the minimum-weight decoder $f$ satisfies $P_f$: 
\begin{equation*}
P_f \triangleq \left(\sum_{i=1}^{7} x_i \leq \sum_{i=1}^{7} e_i\right) \bigwedge \left(\sum_{i=1}^{7} z_i \leq \sum_{i=1}^{7} e_i\right) \bigwedge \left(\bigwedge_{i=1}^{6}(r_i(\bs) = \bs_i)\right).
\end{equation*}

\noindent This $P_f$ we give describes the necessary condition of a decoder: the corrections $r_i(\bs)$ are applied to eliminate all non-zero syndromes on the stabilizers; and weight of corrections should be less than or equal to weight of errors. Alternatively, if we know that $f$ satisfies $P_f$ (e.g., the decoder is given), we can identify $P_c$ by simplifying $P_c^\prime$ without prior knowledge of $P_c$. Instead, if we are aiming to design a correct decoder $f$,
we may extract the condition $P_f$ from the requirement $P_c\sqsubseteq P_c^\prime$. 

\subsubsection{Case II: Non-Pauli $T$ Errors}
\label{subse-nonpauli}
Here we only show the processing of specific error locations $\bm{e}_{p5} = 1$, e.g., the propagated error before logical $H$, to illustrate the heuristic algorithm proposed in Section \ref{sec:VC-framework}. The general situation only makes the formula encoding more complicated but does not introdce fundamental challenges.

We consider the logical $|+\>_L$ and $|-\>_L$ state stabilized by the stabilizer generators and logical $\bar{X}$. The verification condition generated by the program should become
\footnote{
Only logical $\bar{X}$ is considered, since logical $\bar{Z}$ is an invariant at the presence of $T$ errors because $T^{\dag} Z T = Z$.
}: 
\begin{equation}
\left(\bigwedge_{i=1}^{6} g_i \right) \wedge (-1)^b\bar{X} \models \bigvee_{\bs\in\{0,1\}^6} \Bigg(\bigg(\bigwedge_{i=1}^{6}(-1)^{\bs_i}g_i^{\prime}\bigg) \wedge (-1)^{b + r(\bs)} \bar{X}'\Bigg).
\label{eqn:T-err-wp}
\end{equation}
In which $r(\bs) = \sum_{i=1}^{7}cx_i$ is the sum of X corrections, regarding the decoder as an implicit function of $\bs$. We denote the group stabilized by $g_1,\cdots,g_6,\bar{X}$ as $\cS$. 
The injected non-Pauli error $\bm{T_5}$ changes all $X_5$ to $\sq(Y_5 - X_5)$, therefore the elements in set $\{g_1',\cdots, g_6',\bar{X}'\}$ are: 
$ g_1' = \sq X_1X_3(X_5-Y_5)X_7, \  g_2' = X_2X_3X_6X_7, \ g_3' = \sq X_4(X_5-Y_5)X_6X_7 , \bar{X}' = \sq X_1X_2X_3X_4(X_5-Y_5)X_6X_7, \  g_4' = Z_1Z_3Z_5Z_7,\  g_5' = Z_2Z_3Z_6Z_7, \ g_6' = Z_4Z_5Z_6Z_7$.
\paragraph{$\bullet$\ Step I: Update $\cG$ and $\cL$.}
We obtain a subset from $\{g_1',\cdots,g_6',\bar{X}'\}$ whose elements differ from the corresponding ones in $\{g_1,\cdots g_6, \bar{X}\}$, which is $\{g_1', g_3', \bar{X}'\}$. Now pick $j_x = 1$ from this set and update $g_3'$ and $\bar{X}'$, we can obtain a generator set of $\cS'$: 
$g_1'= \sq X_1X_3(X_5-Y_5)X_7,\  g_2' = X_2X_3X_6X_7, \ g_3'' = X_1X_3X_4X_6 ,\ 
\bar{X}''=  X_2X_4X_6, \  g_4' = Z_1Z_3Z_5Z_7,\  g_5 = Z_2Z_3Z_6Z_7, \ g_6' = Z_4Z_5Z_6Z_7.$
We update $g_3, \bar{X}$ at the same time and obtain another set of generators for $\cS$: 
$\cS = \{ X_1 X_3 X_5 X_7, X_2 X_3 X_6 X_7$ , $X_1 X_3 X_4 X_6, X_2X_4X_6, Z_1Z_3Z_5Z_7, Z_2Z_3Z_6Z_7,Z_4 Z_5 Z_6Z_7 \}$.
The generator sets only differ by $g_1$ and $g_1'$.
\paragraph{$\bullet$ Step II: Remove non-commuting terms, check the phases of remaining elements. } The weakest liberal precondition on the right-hand side is now transformed into another equivalent form:
\begin{align}
\bigvee_{\bs\in\{0,1\}^6} \Bigg((-1)^{\bs_1}g_1' \wedge (-1)^{\bs_2}g_2' \wedge (-1)^{\bs_2 + \bs_3} g_3'' \wedge \bigg(\bigwedge_{i =4}^{6} (-1)^{\bs_i}g_i'\bigg) \wedge (-1)^{b+r(\bs)+\bs_1} \bar{X}''\Bigg).
\label{eqn:T-err-wp2}
\end{align}
For $P', Q$ whose elements are commute with each other, we can leverage $(P'\wedge Q)\vee (\neg P' \wedge Q) = Q$ to reduce the verification condition Eqn. \eqref{eqn:T-err-wp} to the commuting case. In this case we have $P = g_1$, $P' = g_1'$ and $Q$ being other generators, which is guaranteed by Step I. 
To prove the entailment in Eqn. \eqref{eqn:T-err-wp}, it is necessary to find two terms in Eqn. \eqref{eqn:T-err-wp2} whose phases only differ in $s_1$. Now rephrase each phase to $t_i$ and find that Eqn. \eqref{eqn:T-err-wp} has an equivalent form: 
\begin{equation}
\bigg(\bigwedge_{i=1}^{6} g_i\bigg) \wedge (-1)^b\bar{X} \models \bigvee_{\bm{t}\in\{0,1\}^7} \Bigg( (-1)^{t_1}g_1' \wedge (-1)^{t_2}g_2' \wedge (-1)^{t_3}g_3'' \wedge \bigg(\bigwedge_{i = 4}^{6} (-1)^{t_i}g_i'\bigg) \wedge (-1)^{b + t_7}\bar{X}''\Bigg).
\label{eqn:T-err-wp3}
\end{equation}
The map $\bm{t} = u(\bs)$ is $t_1 = s_1, t_2 = s_2, t_3 = s_2 + s_3, t_4 = s_4, t_5 = s_5, t_6 = s_6, t_7 = \sum_{i=1}^7 c_i + s_1 $, which comes from the multiplication in Step I. 
To prove the entailment in Eqn. (\ref{eqn:T-err-wp3}), we pick $\bm{t}$ according to step (c) in Section \ref{verify-cond} and use $\bm{t} = u(\bs)$ as constraints to check phases of the remaining items. In this case the values of $\bs_0$ and $\bs_1$ are straightforward:
$\bs_0 = (0,0,0,0,0,0)$ and $\bs_1 = (1,0,1,0,0,0)$.
Then what remains to check is whether $t_7 = \sum_{i=1}^7 cx_i + s_1 = 0$, which can be verified through the following logical formula for decoder: $H_z(\bm{cx}) = \bm{s_z} \wedge (\sum_i cx_i \leq \sum_i ex_i \leq 1) \implies \sum_{i = 1}^7 cx_i + s_1 = 0$.\footnote{The stabilizer generator $g_1$ is transformed to a $Z$-check after the logical Hadamard gate, so parity-check of $Z$ are encoded in the logical formula and the syndrome $s_1$ guides the $X$ corrections.}

\section{Tool implementation}
\label{sec:tool-implementation}
As summarized in Fig. \ref{fig:framework}, 
we implement our QEC verifiers  at two levels: a verified QEC code verifier in the Coq proof assistant~\cite{coq} for mechanized proof of scalable codes, and an automatic QEC verifier \veriq based on Python and SMT solver for small and medium-scale codes.


\paragraph{Verified QEC verifier}
Starting from first principles, we formalize the semantics of classical-quantum programs based on ~\cite{feng2021quantum}, and then build the verified prover, proving the soundness of its program logic. This rules out the possibility that the program logic itself is flawed, especially considering that it involves complex classical-quantum program semantics and counterintuitive quantum logic.
This is achieved by $\sim$4700 lines of code based on the CoqQ project~\cite{zhou2023coqq}, which offers rich theories of quantum computing and quantum logic, as well as a framework for quantum program verification.
We further demonstrate its functionality in verifying scalable QEC codes using repetition code as an example, where the size of the code, i.e., the number of physical qubits, is handled by a meta-variable in Coq.

\vspace{-0.07cm}

\paragraph{Automatic QEC verifier \veriq}
We propose \veriq, an automatic QEC code verifier implemented as a Python package. It consists of three components: 
\begin{enumerate}
    \item Correctness formula generator. This module processes the user-provided information of the given stabilizer code, such as the parity-check matrix and logical algorithms to be executed, and generates the correctness formula expressed in plain context as the verification target.
    \item Verification condition generator. This module consists of 1) a parser that converts the program, assertion, and formula context into corresponding abstract syntax trees (AST), 2) a precondition generator that annotates the program according to inference rules (as Theorem A.11 suggests, all rules except (While) and (Con) give the weakest liberal precondition), and 
    3) a VC simplifier that produces the condition to be verified with only classical variables, leveraging assertion logic and techniques proposed in Section \ref{verify-cond}.
    \item SMT checker. This component adopts Z3~\cite{de2008z3} to encode classical verification conditions into formulae of SMT-LIBv2 format, and invokes appropriate solvers according to the type of problems. We further implement a parallel SMT checking framework in our tool for enhanced performance.
\end{enumerate}
Readers can refer to Appendix \ref{details-imp} for specific details on the tool implementation.

\section{Evaluation of \veriq }
\label{sec:evaluation}

We divide the functionalities of \veriq into two modules: the first module focuses on verifying the general properties of certain QEC codes, while
the second module aims to provide alternative solutions for large QEC codes whose scales of general properties have gone beyond the upper limit of verification capability. In this case, we allow users to impose extra constraints on the error patterns.

Next, we provide the experimental results aimed at evaluating the functionality of our tool. In particular, we are interested in the performance of our tool regarding the following functionalities:
\begin{enumerate}
    \item The effectiveness and scalability when verifying the general properties for program implementations of QEC codes.
    \item The performance improvement when extra constraints of errors are provided by users. 
    \item The capability to verify the correctness of realistic QEC scenarios with regard to fault-tolerant quantum computation. 
    \item Providing a benchmark of the implementation of selected QEC codes with verified properties.
\end{enumerate}

The experiments in this section are carried out on a server with 256-core AMD(R) EPYC(TM) CPU @2.45GHz and 512G RAM, running Ubuntu 22.04 LTS. Unless otherwise specified, all verification tasks are executed using 250 cores. The maximum runtime is set to $24$ hours. 
\subsection{Verify general properties}
We begin by examining the effectiveness and scalability of our tool when verifying the general properties of QEC codes.

\paragraph{\textbf{Methodology}}
We select the rotated surface code as the candidate for evaluation, which is a variant of Kitaev's surface code~\cite{kitaev1997quantum, dennis2002topological} and has been repeatedly used as an example in Google's QEC experiments based on superconducting quantum chips~\cite{Acharya2023suppress,acharya2024quantumerrorcorrectionsurface}. As depicted in Fig. \ref{fig:surface}, a $d = 5$ rotated surface code is a $5\times 5$ lattice, with data qubits on the vertices and surfaces between the vertices representing stabilizer generators. The logical operators $\bar{X}_L$(green horizontal) and $\bar{Z}_L$ (black vertical) are also shown in the figure. Qubits are indexed from left to right and top to bottom. 

For each code distance $d = 2t + 1$, we generate the corresponding Hoare triple and verify the error conditions necessary for accurate decoding and correction, as well as for the precise detection of errors. The encoded SMT formula for accurate decoding and correction is straightforward and can be referenced in Section \ref{case-steane}:
\begin{equation}
\forall e_1, \dots, e_n, \ \exists s_1,\dots, s_{n-k}, \ \sum_{i=1}^{n} e_i \leq \Big\lfloor \frac{d-1}{2}\Big\rfloor \Rightarrow \bigvee_{\bm{s}\in\{0,1\}^{n}}\left(\bigwedge_{i=1}^{n}\left(r_i(\bs) + h_i(\be) = 0\right)\wedge P_f\right).
\label{eqn:verify}
\end{equation}
To verify the property of precise detection, the SMT formula can be simplified as the decoding condition is not an obligation:
\begin{equation}
\left(1 \leq \sum_{i=1}^{n} e_i \leq d_{t} - 1\right) \Rightarrow \left(\bigwedge_{i = k}^{n}(s_i = 0)\right) \wedge \left(\bigvee_{i=0}^{k-1}(h_i(\be) \neq 0)\right). 
\label{eqn:detect}
\end{equation}
Eqn. \eqref{eqn:detect} indicates that there exist certain error patterns with weight $\leq d_t$ such that all the syndromes are $0$ but an uncorrectable logical error occurs. We expect an \textit{unsat} result for the actual code distance $d$ and all the trials $d_{t} \leq d$.
If the SMT solver reports a \textit{sat} result with a counterexample, it reveals a logical error that is undetectable by stabilizer generators but causes a flip on logical states. In our benchmark we verify this property on some codes with distance being $2$, which are only capable of detecting errors. They are designed to realize some fault-tolerant non-Clifford gates, not to correct arbitrary single qubit errors. 

Further, our implementation supports parallelization to tackle the exponential scaling of problem instances. We split the general task into subtasks by enumerating the possible values of $e_i$ on selected qubits and delegating the remaining portion to SMT solvers. We denote $N(\text{bits})$ as the number of $e_i$ whose values have been enumerated, and $N(\text{ones})$ as the count of $e_i$ with value $1$ among those already enumerated. We design a heuristic function $ET = 2d * N(\text{ones}) + N(\text{bits})$, which serves as the termination condition for enumeration. 

Given its outstanding performance in solving formulas with quantifiers, we employ \texttt{CVC5}~\cite{DBLP:conf/tacas/BarbosaBBKLMMMN22} as the SMT solver to check the satisfiability of the logical formulas in this paper. 
\begin{figure}
\begin{minipage}{0.43\textwidth}
    \centering
    \includegraphics[width = 1\linewidth]{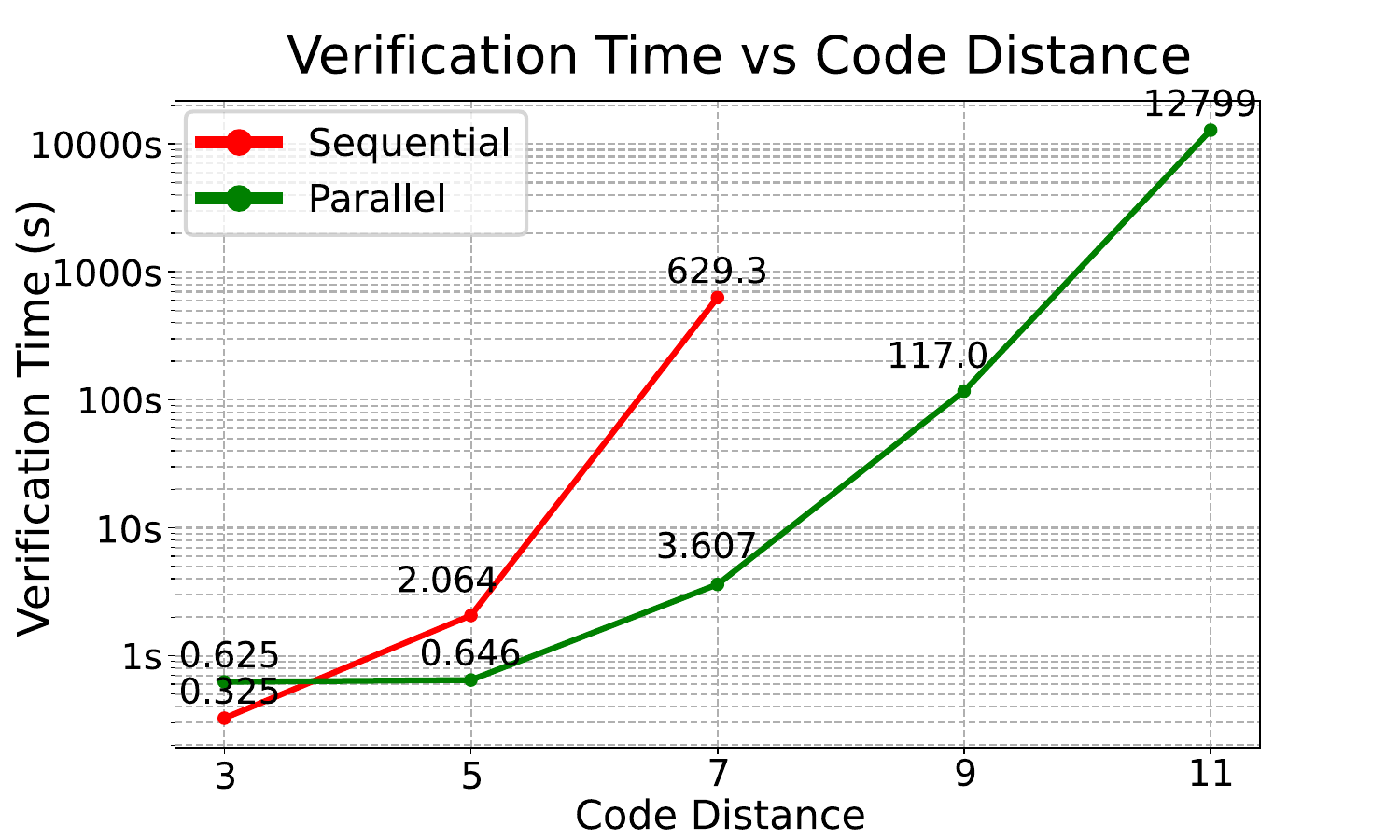}
\caption{Time consumed when verifying surface code in sequential/parallel.}
\label{fig:verify1}
\end{minipage}  
\quad
\begin{minipage}{0.53\textwidth}
    \centering
    \includegraphics[width = \linewidth]{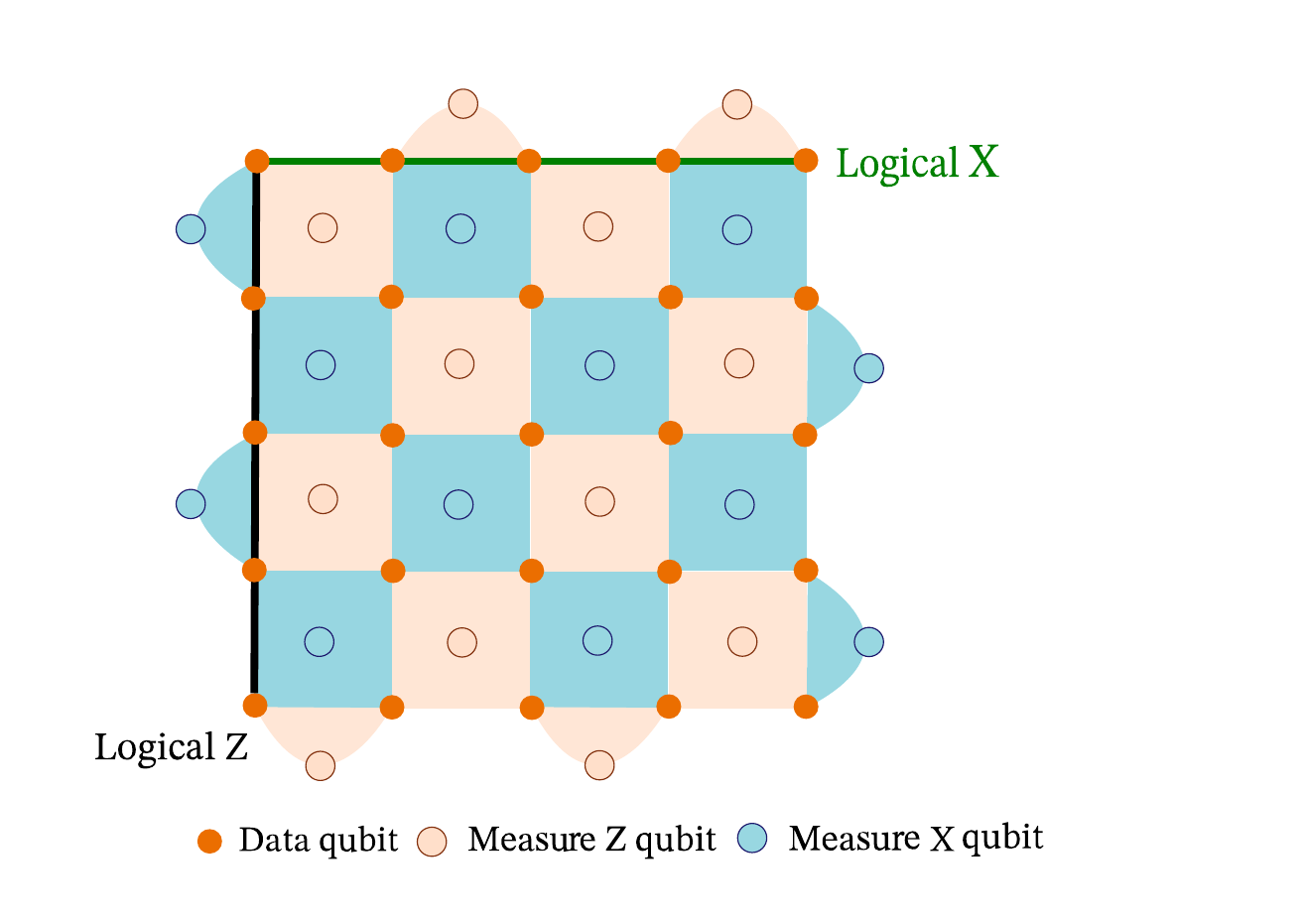}
    \caption{Scheme of a rotated surface code with $d = 5$. Each coloured tile associated with the measure qubit in the center is a stabilizer (Flesh: $Z$ check, Indigo: $X$ check). }
    \label{fig:surface}    
    \end{minipage}
\vspace{-0.05cm}
\end{figure}

\begin{figure}
\centering
\includegraphics[width = .89\linewidth]{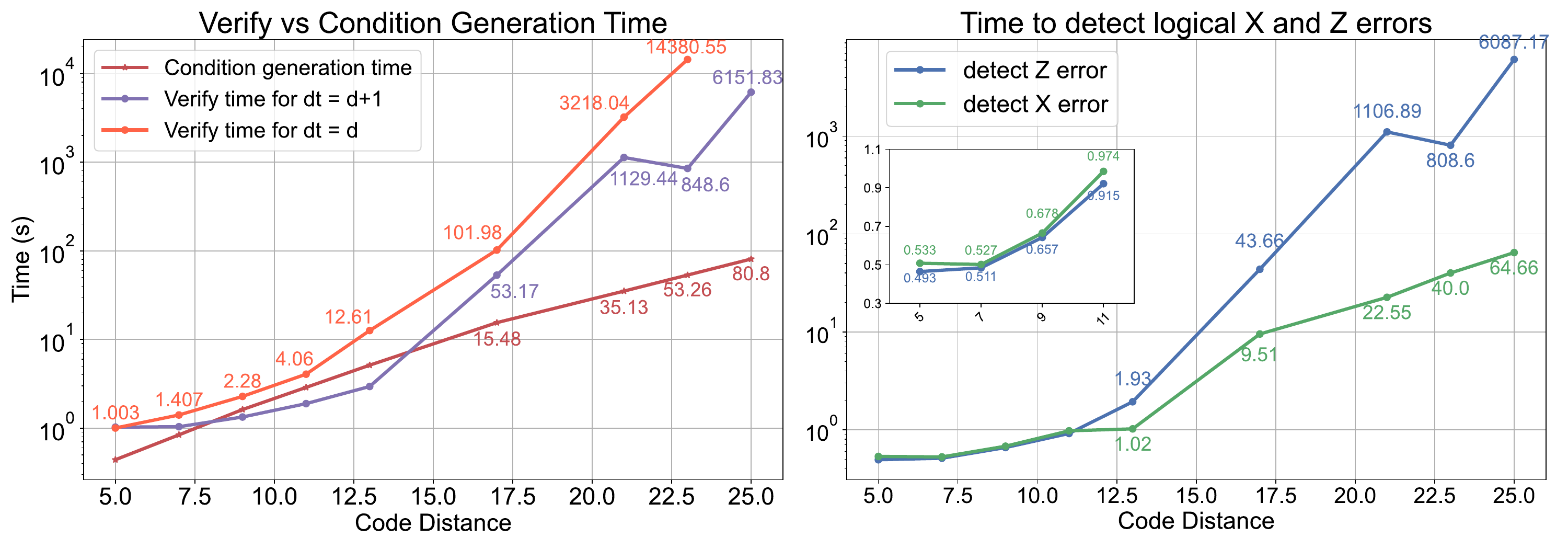}
\vspace{-0.2cm}
\caption{Time consumed when verifying precise detection properties on surface code with distance $d$.}
\vspace{-0.3cm}
\label{fig:verify2}
\end{figure}

\paragraph{\textbf{Results}}
\textit{Accurate Decoding and Correction}: Fig. \ref{fig:verify1} illustrates the total runtime required to verify the error conditions for accurate decoding and correction, employing both sequential and parallel methods. The figure indicates that while both approaches produce correct results, our parallel strategy significantly improves the efficiency of the verification process. In contrast, the sequential method exceeded the maximum runtime of 24 hours at $d = 9$; we extended the threshold for solvable instances within the time limit to $d = 11$.

\textit{Precise Detection of Errors}: For a rotated surface code with distance $d$, we first set $d_t = d$ to verify that all error patterns with Hamming weights $w < d$ can be detected by the stabilizer generators. Afterward, we set $d_t = d + 1$ to detect error patterns that are undetectable by the stabilizer generators but cause logical errors. The results show that all trials with $d_t = d$ report \textit{unsat} for Eqn. \eqref{eqn:detect}, and trials with $d_t = d + 1$ report \textit{sat} for Eqn. \eqref{eqn:detect}, providing evidence for the effectiveness of this functionality. The results indicate that, without prior knowledge of the minimum weight, this tool can identify and output the minimum weight undetectable error. Fig. \ref{fig:verify2} illustrates the relationship between the time required for verifying error conditions of precise detection of errors and the code distance.

\subsection{Verify correctness with user-provided errors}
Constrained by the exponential growth of problem size, verifying general properties limits the size of QEC codes that can be analyzed. Therefore, we allow users to autonomously impose constraints on errors and verify the correctness of the QEC code under the specified constraints. We aim for the enhanced tool, after the implementation of these constraints, to increase the size of verifiable codes. Users have the flexibility to choose the generated constraints or derive them from experimental data, as long as they can be encoded into logical formulas supported by SMT solvers. The additional constraints will also help prune the solution space by eliminating infeasible enumeration paths during parallel solving.
\paragraph{\textbf{Results}}
We briefly analyze the experimental data~\cite{Acharya2023suppress, acharya2024quantumerrorcorrectionsurface} and observe that the error detection probabilities of stabilizer generators tend to be uniformly distributed. Moreover, among the physical qubits in the code, there are always several qubits that exhibit higher intrinsic single-qubit gate error rates. Based on these observations, we primarily consider two types of constraints and evaluate their effects in our experiment. For a rotated surface code with distance $d$, the explicit constraints are as follows: 
\begin{itemize}
    \item Locality: Errors occur within a set containing $\frac{d^2 -1}{2}$ randomly chosen qubits. The other qubits are set to be error-free. 
    \item Discreteness: Uniformly divide the total $d^2$ qubits into $d$ segments, within each segment of $d$ qubits there exists no more than one error. 
\end{itemize}
The other experimental settings are the same as those in the first experiment. 

Fig. \ref{fig:test} illustrates the experimental results of verification with user-provided constraints. We separately assessed the results and the time consumed for verification with the locality constraint, the discreteness constraint, and both combined. We will take the average time for five runs for locality constraints since the locations of errors are randomly chosen. Obviously both constraints contribute to the improvement of efficiency, yet yield limited improvements if only one of them is imposed; When the constraints are imposed simultaneously, we can verify the $d = 19$ surface code which has $361$ qubits within $\sim 100$ minutes. 

\paragraph{\textbf{Comparison with \textsc{Stim}~\cite{gidney2021stim}}}
\textsc{Stim} is currently the most widely used and state-of-the-art stabilizer circuit simulator that provides fast performance in sampling and testing large-scale QEC codes.
However, simply using \textsc{Stim} in sampling or testing does not provide a complete check for QEC codes, as it will require a large number of samples. For example, we can verify a $d = 19$ surface code with $361$ qubits in the presence of both constraints, which require testing on $\sum_{i=0}^{18}\binom{18}{i}(18)^i = 19^{18} \approx 2^{76}$ samples that are beyond the testing scope.
\begin{figure}
\begin{minipage}{0.47\linewidth}
\centering
\includegraphics[width = 0.95\linewidth]{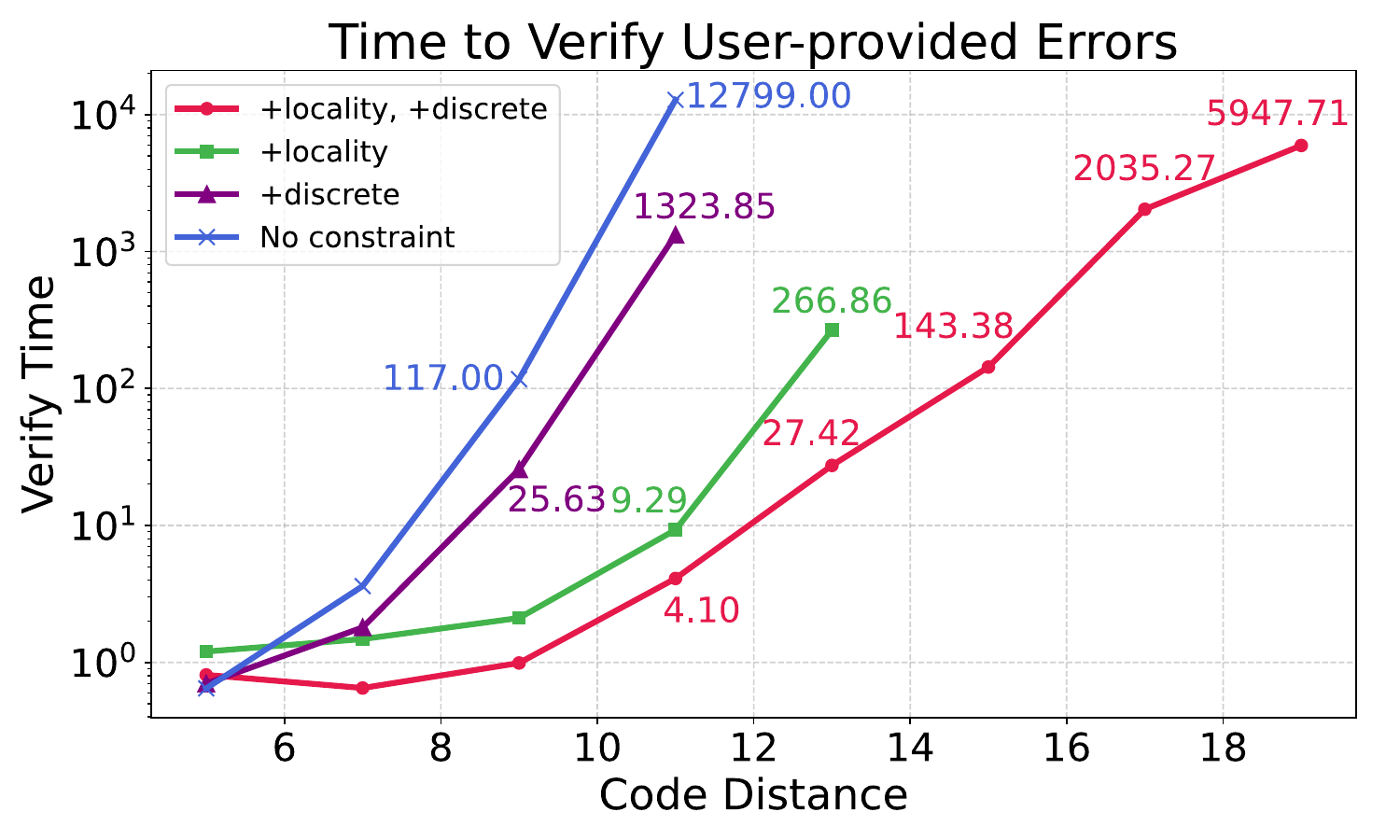}
\caption{Time consumed to verify the correctness of surface code with distances ranging from $5$ to $19$.}
\label{fig:test}
\end{minipage}
\quad
\begin{minipage}{0.47\linewidth}
\centering
\includegraphics[width = \linewidth]{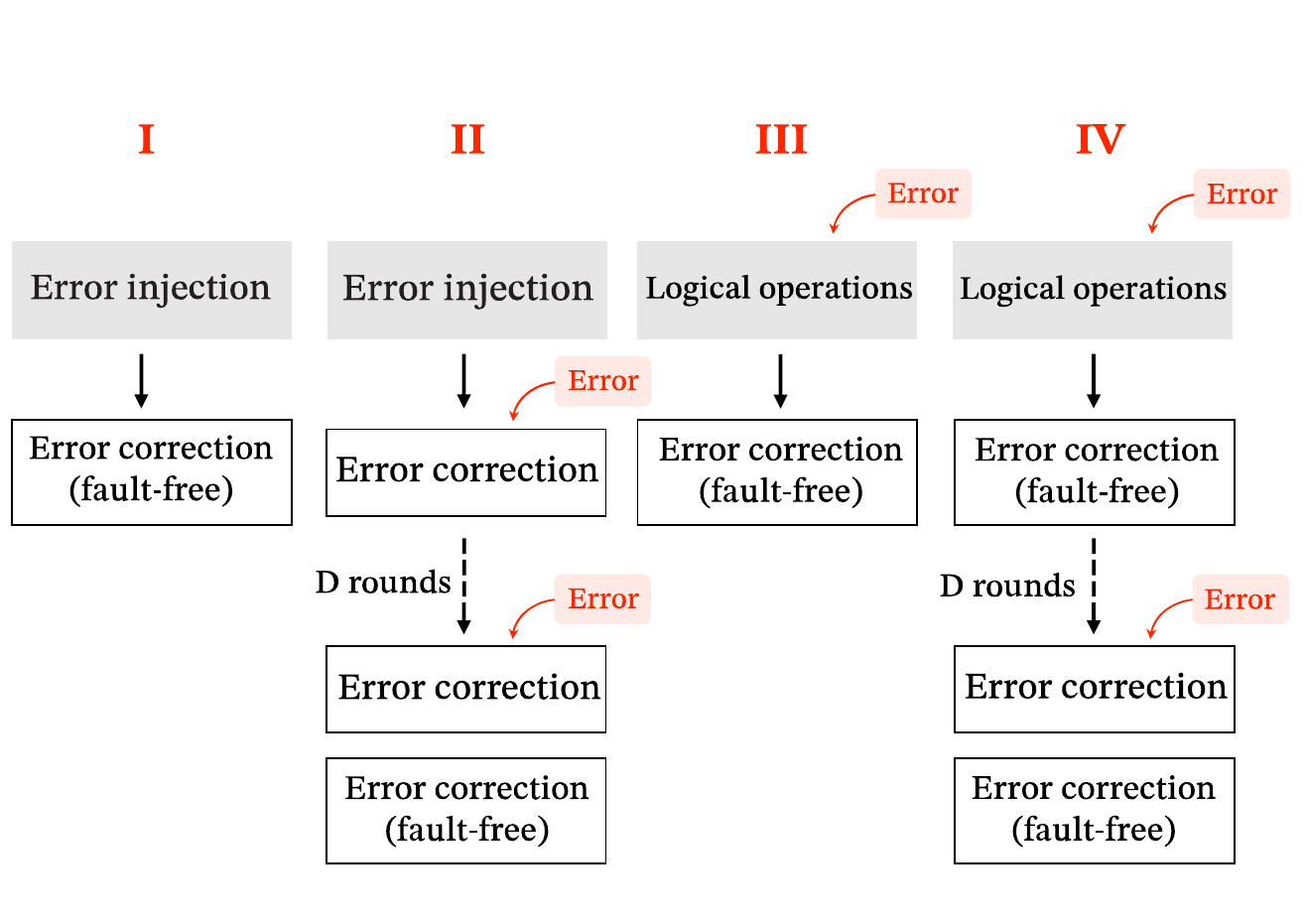}
\caption{Realistic fault-tolerant scenarios that are supported for verification.}
\label{fig:sce}
\end{minipage}
\end{figure}
\subsection{Towards fault-tolerant implementation of operations in quantum hardware}
We are interested in whether our tool has the capability to verify the correctness of fault-tolerant implementations for certain logical operations or measurements. In Fig. \ref{fig:sce} we conclude the realistic fault-tolerant computation scenarios our tools support. In particular, we write down the programs of two examples encoded by Steane code and verify the correctness formulas in our tool. The examples are stated as follows:
\begin{enumerate}
  \item A fault-tolerant logical GHZ state preparation.
    \item An error from the previous cycle remains uncorrected and got propagated through a logical $CNOT$ gate. 
\end{enumerate} 
We provide the programs used in the experiment in Fig. \ref{fig:GHZ} and Fig. \ref{fig:cnot}. The program \textbf{Steane}$(E)_i$ denotes an error correction process over $i^{\rm{th}}$ logical qubit encoded using the Steane code. 
\begin{figure}
\begin{minipage}{0.45\linewidth}
\centering
\vspace{-0.1cm}
\begin{align*}
&\qqfor{8 \cdots 14}{\qut{H}{q_i}} \ \fatsemi  \\
& \qqfor{1\cdots 3}{\textbf{Steane}(E)_{i}} \ \fatsemi \\
& \qqfor{8\cdots 14}{\qut{CNOT}{q_{i}, q_{i-7}}}\ \fatsemi \\
& \qqfor{1\cdots 7}{\qut{CNOT}{q_{i}, q_{i+7}}} \ \fatsemi \\
& \qqfor{1\cdots 3}{\textbf{Steane}(E)_{i}} \\[-0.83cm]
\end{align*}
\caption{QEC for logical GHZ state preparation.}
\label{fig:GHZ}
\end{minipage}
\qquad
\begin{minipage}{0.45\linewidth}
\centering
\begin{align*}
&\\[-0.3cm]
& \qqfor{1\cdots 7}{\qut{U}{[ep_{(i)}]q_i}} \ \fatsemi \\
& \qqfor{1\cdots 7}{\qut{CNOT}{q_{i}, q_{i + 7}}} \ \fatsemi \\
& \qqfor{1\cdots 2}{\textbf{Steane}(E)_{i}}\\
&\\[-1.03cm]
\end{align*}
\caption{QEC for logical CNOT gate with propagated errors.}
\label{fig:cnot}
\end{minipage}

\end{figure}

\subsection{A benchmark for qubit stabilizer codes}
We further provide a benchmark of $14$ qubit stabilizer codes selected from the broader quantum error correction code family, as illustrated in Table \ref{tab:benchmark}. We require the selected codes to be qubit-based and have a well-formed parity-check matrix. For codes that lack an explicit parity-check matrix, we construct the stabilizer generators and logical operators based on their mathematical construction and verify the correctness of the implementations. For codes with odd distances, we verify the correctness of their program implementations in the context of accurate decoding and correction. However, some codes have even code distances, including examples such as the 3D $[[8,3,2]]$ color code~\cite{Kubica_2015} and the Campbell-Howard code~\cite{campbell2017unified}, which are designed to implement non-Clifford gates like the $T$-gate or Toffoli gate with low gate counts. These codes have a distance of $2$, allowing error correction solely through post-selection rather than decoding. In such cases, the correctness of the program implementations is ensured by verifying that the code can successfully detect any single-qubit Pauli error. We list these error-detection codes at the end of Table \ref{tab:benchmark}.

\begin{table}
\caption{A benchmark of qubit stabilizer codes with logical-free scenario ($EMC$) considered in Table \ref{tab:comp}. We report their parameters $[[n, k, d]]$ and the properties we verified with the time spent. Parameters with variables indicate that this code has a scalable construction. If the exact $d$ is unknown, we provide an estimation given by our tool in the bracket.}
\label{tab:benchmark}
\vspace{-0.3cm}
\renewcommand{\arraystretch}{1}
{\begin{tabular}{|c|c|c|}
\hline
\multicolumn{3}{|c|}{\textbf{Target: Accurate Correction}} \\ \hline
Code Name & Parameters & Verify time(s) \\ 
\hline 
Steane code~\cite{steane1996multiple} &$[[7,1,3]]$ & 0.095 \\
Surface code~\cite{dennis2002topological} ($d = 11$) &  $[[d^2, 1, d]]$  & 12799 \\ 
Six-qubit code~\cite{calderbank1997quantum} &$[[6,1,3]]$  & 0.252 \\
Quantum dodecacode~\cite{calderbank1997quantum} & $[[11,1,5]]$  & 0.587 \\
Reed-Muller code~\cite{steane1996quantumreed} ($r = 8$) & $[[2^r - 1, 1, 3]]$  & 1868.56 \\
XZZX surface code~\cite{BonillaAtaides2021} ($d_x = 9, d_z = 11$) & $[[d_x\times d_z, 1, \min(d_x, d_z)]]$  & 1067.16 \\
Gottesman code ~\cite{gottesman1997stabilizer} ($r = 8$) & $[[2^r, 2^r - r -2, 3]]$  & 587.00\\
Honeycomb code~\cite{landahl2011fault} ($d = 5$) & $[[19, 1, 5]]$ & 1.55 \\
\hline
\multicolumn{3}{|c|}{\textbf{Target: Detection}} \\ \hline
Tanner Code I~\cite{leverrier2022quantumtannercodes} & $[[343, 31, d \geq 4]]$ & $7086.36$\\
Tanner Code II~\cite{leverrier2022quantumtannercodes} &$[[125, 53, 4]]$ & $1667.81$ \\
\hline
Hypergraph Product~\cite{tillich2014ldpc, Kovalev2012hgp, breuckmann2021quantum} & $[[98,18,4]]$ & 289.37   \\ \hline
\multicolumn{2}{|l|}{\textit{Error-Detection codes}}  &  \\ \hline
3D basic color code~\cite{Kubica_2015} ($d_z = 2$) & $[[8,3,2]]$ & 2.88  \\
Triorthogonal code~\cite{bravyi2012magic} $(k = 64)$ & $[[3k+8,k,d_x = 6, d_z = 2]] $ & $144.94$ \\
Carbon code~\cite{grassl2013leveraging} & $[[12,2,4]]$  & 4.80\\
Campbell-Howard code~\cite{campbell2017unified} ($k = 2$) & $[[6k +2, 3k, 2]]$ & 3.05 \\
\hline
\end{tabular}}
\vspace{-0.1cm}
\end{table}

\newcommand{\qbricks}{{\textsc{Qbricks}}}
\newcommand{\qwire}{{$\mathcal{Q}$\textsc{wire}}}
\newcommand{\reqwire}{{$Re\mathcal{Q}$\textsc{wire}}}
\newcommand{\sqir}{s\textsc{qir}\xspace}
\section{Related Work}\label{sec:related_work}
In addition to the works compared in Section~\ref{introduction}, we briefly outline verification techniques for quantum programs and other works that may be used to check QEC programs.

\paragraph{Formal verification with program logic}
Program logic, as a well-explored formal verification technique, plays a crucial role in the verification of quantum programs. Over the past decades, numerous studies have focused on developing Hoare-like logic frameworks for quantum programs~\cite{baltag2004logic,brunet2004dynamic,chadha2006reasoning,feng2007proof,kakutani2009logic}.
\textit{Assertion Logic}. \cite{wu2021qecv, Rand2021gottesman,Rand2021StaticAO} began utilizing stabilizers as atomic propositions. \cite{unruh2019quantum} proposed a hybrid quantum logic in which classical variables are embedded as special quantum variables. Although slightly different, this approach is essentially isomorphic to our interpretation of logical connectives.
\textit{Program Logic}. 
Several works have established sound and relatively complete (hybrid) quantum Hoare logics, both satisfaction-based~\cite{ying2012floyd, feng2021quantum} and projection-based~\cite{zhou2019applied}. However, these works did not introduce (countable) assertion syntax, meaning their completeness proofs do not account for the expressiveness of the weakest (liberal) preconditions.
\cite{wu2021qecv,wu2024towards,sundaram2022hoare} focus on reasoning about stabilizers and QEC code, with our substitution rules for unitary statements drawing inspiration from their work.
\textit{Program logic in the verification of QEC codes and fault-tolerant computing}. Quantum relational logic~\cite{unruh2019quantumr,LU21,BHY19} is designed for reasoning about relationships, making it well-suited for verifying functional correctness by reasoning equivalence between ideal programs and programs with errors. 
Quantum separation logic~\cite{LLSS22,ZBH21,qafny,qstar}, through the application of separating conjunctions, enables local and modular reasoning about large-scale programs, which is highly beneficial for verifying large-scale fault-tolerant computing. Abstract interpretation~\cite{yu2021quantum} uses a set of local projections to characterize properties of global states, thereby breaking through the exponential barrier. It is worth investigating whether local projections remain effective for QEC codes.

\vspace{-0.5em}

\paragraph{Symbolic techniques for quantum computation}
General quantum program testing and debugging methods face the challenge of excessive test cases when dealing with QEC programs, which makes them inefficient.
Symbolic techniques have been introduced into quantum computing to address this issue~\cite{carette2023symbolic,bauer2023symqv,10.1145/3519939.3523431,chen2023automata,10.1145/3445814.3446750,Fang2024symbolic,fang2024symphase}.
Some of these works aim to speed up the simulation process without providing complete verification of quantum programs, while others are designed for quantum circuits and do not support the control flows required in QEC programs.
The only approach capable of handling large-scale QEC programs is the recent work that proposed symbolic stabilizers~\cite{Fang2024symbolic}.
However, this framework is primarily designed to detect bugs in the error correction process that do not involve logical operations and do not support non-Clifford gates.

\vspace{-0.5em}

\paragraph{Mechanized approach for quantum programming}
The mechanized approach offers significant advantages in terms of reliability and automation, leading to the development of several quantum program verification tools in recent years (see recent reviews~\cite{CVLreview, LSZreview}). Our focus is primarily on tools that are suitable for writing and reasoning about quantum error correction (QEC) code at the circuit level.
\textit{Matrix-based approaches}. \qwire~\cite{paykin2017qwire,RPZ17} and \sqir~\cite{hietala2021verified} formalize circuit-like programming languages and low-level languages for intermediate representation, utilizing a density matrix representation of quantum states. These approaches have been extended to develop verified compilers~\cite{rand2019reqwire} and optimizers~\cite{hietala2021verified}.
\textit{Graphical-based approaches}. \cite{VyZX22,VyZX23,shah2024vicarvisualizingcategoriesautomated}, provide a certified formalization of the ZX-calculus~\cite{coecke2011interacting,PyZX}, which is effective for verifying quantum circuits through a flexible graphical structure.
\textit{Automated verification}. 
\qbricks~\cite{chareton2020deductive} offers a highly automated verification framework based on the Why3~\cite{bobot2011why3} prover for circuit-building quantum programs, employing path-sum representations of quantum states~\cite{amy2018towards}. 
\textit{Theory formalization}. 
Ongoing libraries are dedicated to the formalization of quantum computation theories, such as QuantumLib~\cite{zweiflerquantumlib}, Isabelle Marries Dirac (IMD)~\cite{BLH21, bordg2020isabelle}, and CoqQ~\cite{zhou2023coqq}. QuantumLib is built upon the Coq proof assistant and utilizes the Coq standard library as its mathematical foundation. IMD is implemented in Isabelle/HOL, focusing on quantum information theory and quantum algorithms. CoqQ is written in Coq and provides comprehensive mathematical theories for quantum computing based on the Mathcomp library~\cite{Mathcomp,mathcomp-analysis}. Among these, CoqQ has already formalized extensive theories of subspaces, making it the most suitable choice for our formalization of program logic.

\vspace{-0.5em}

\paragraph{Functionalities of verification tools for QEC programs}
Besides the comparison of theoretical work on program logic and other verification methods, we also compare the functionalities of our tool \veriq with those of other verification tools for QEC programs. We summarize the functionalities of the tools in Table~\ref{tab:comp}. VERITA~\cite{wu2021qecv,wu2024towards} adopts a logic-based approach to verify the implementation of logical operations with fixed errors. \texttt{QuantumSE}~\cite{Fang2024symbolic} is tailored for efficiently reporting bugs in QEC programs and shows potential in handling logical Clifford operations. Stim~\cite{gidney2021stim} employs a simulation-based approach, offering robust performance across diverse fault-tolerant scenarios but limited to fixed errors. Our tool \veriq{} is designed for both general verification and partial verification under user-provided constraints, supporting all aforementioned scenarios.
\begin{table}
\caption{Comparison of scenarios and functionalities between \veriq and other tools. For scenarios, we denote $\bar{L}$ for logical gate implementation, $E$ for error injection, $M$ for measurement (error detection),
$C$ ($C_E$) for error correction (with error injection).
We further identify three functionalities, $\mathbf{C}$ for general verification of correctness, $\mathbf{R}$ for reporting bugs, and $\mathbf{F}$ for fixed errors, that evaluated by $\blacktriangle$ if implemented,
    $\circ$ if potentially supported but not yet implemented
    and $-$ if cannot handle or beyond design. n/a indicates that $\mathbf{F}$ is unavailable in the error-free scenario.}
    \vspace{-0.2cm}
    \centering

    \begin{tabular}{|c|*{3}{>{\centering\arraybackslash}p{0.36cm}|}*{3}{>{\centering\arraybackslash}p{0.36cm}|}*{3}{>{\centering\arraybackslash}p{0.36cm}|}*{3}{>{\centering\arraybackslash}p{0.36cm}|}}
        \hline
        \multirow{2}{*}{\diagbox{Scenarios}{Tools}} & \multicolumn{3}{c|}{\multirow{2}{*}{\veriq{}}} & \multicolumn{3}{c|}{\textsc{VERITA}} & \multicolumn{3}{c|}{\textsf{QuantumSE}} & \multicolumn{3}{c|}{\textsc{Stim}} \\
        & \multicolumn{3}{c|}{} & \multicolumn{3}{c|}{\cite{wu2024towards,wu2021qecv}} & \multicolumn{3}{c|}{\cite{Fang2024symbolic}} & \multicolumn{3}{c|}{\cite{gidney2021stim}} \\
        \hline
       Functionality&$\mathbf{C}$&\mbox{\textbf{R}}&\textbf{F}& \textbf{C}&\textbf{R}&\textbf{F}&\textbf{C}&\textbf{R}&\textbf{F}&\textbf{C}&\textbf{R}&\textbf{F}\\
        \hline
        \cline{1-13}
        \cline{1-13}
        error-free ($\bar{L}$) & $\blacktriangle$ &$\circ$ & n/a &$\blacktriangle$&$\circ$ &n/a& $\circ$& $\circ$  &n/a&$\circ$ &$\circ$&n/a \\
        \hline
        logical-free ($EMC$) & $\blacktriangle$ & $\circ$ & $\circ$ & $-$ & $-$ & $\blacktriangle$ & $\blacktriangle$ & $\blacktriangle$ & $\circ$ & $-$ & $-$ & $\blacktriangle$ \\
        \hline 
        error in correction step ($\bar{L} MC_E$) & $\blacktriangle$&  $\circ$ & $\circ$ & $-$ & $-$ & $\circ$ & $\blacktriangle$ & $\blacktriangle$ & $\circ$ & $-$ & $-$ & $\blacktriangle$ \\
        \hline
        one cycle ($E\bar{L} EMC$) & $\blacktriangle$ &$\circ$ & $\circ$ & $-$& $-$ & $\blacktriangle$ & $\blacktriangle$ &$\blacktriangle$ &$\circ$  & $-$ &$-$& $\blacktriangle$ \\
        \hline
        multi cycles ($E\bar{L}EMCE\bar{L}EMC \cdots$) & $\blacktriangle$& $\circ$ &$\circ$ & $-$ &$-$ &$\blacktriangle$ & $\circ$ & $\circ$ & $\circ$ & $-$ & $-$ & $\blacktriangle$ \\
        \hline
    \end{tabular}
    \label{tab:comp}
\end{table}

\section{Discussion and Future Works}

In this paper, we propose an efficient verification framework for QEC programs, within which we define the assertion logic along with program logic and establish a sound proof system. We further develop an efficient method to handle verification conditions of QEC programs. We implement our QEC verifiers at two levels: a verified QEC verifier and a Python-based automated QEC verifier.

Our work still has some limitations. First of all, the gate set we adopt in the programming language is restricted, and the current projection-based logic is unable to reason about probabilities. Last but not least, while our proof system is sound, its completeness- especially for programs with loops- remains an open question.

Given the existing limitations, some potential directions for future advancements include:
\begin{enumerate}
\item \textit{Addressing the completeness issue of the proof system.} We are able to prove the (relative) completeness of our proof system for finite QEC programs without infinite loops. However, it is still open whether the proof system is complete for programs with while-loops. This issue is indeed related to the next one.

\item \textit{Extending the gate set to enhance the expressivity of program logic}. The Clifford + $T$ gate set we use in the current program logic is universal but still restricted in practical applications. It is desirable to extend the syntax of factors and assertions for the gate sets beyond Clifford + $T$.

\item \textit{Generalizing the logic to satisfaction-based approach.} Since any Hermitian operator can be written as linear combinations of Pauli expressions, our logic has the potential to incorporate the so-called satisfaction-based approach with Hermitian operators as quantum predicates, which helps to reason about the success probabilities of quantum QEC programs.
\item \textit{Exploring approaches to implementing an automatic verified verifier.} The last topic is to explore tools like $F^{*}$~\cite{swamy2016dependent, martínez2019metafproofautomationsmt}, a proof-oriented programming language based on SMT, for incorporating the formally verified verifier and the automatic verifier described in this paper into a single unified solution.
\end{enumerate}

\section*{Acknowledgement}
We thank Bonan Su for kind discussions regarding on crafting the introduction section and Huiping Lin for for the revisions made to the introduction of stabilizer codes. In addition, we thank anonymous referees for helpful comments and suggestions.
This research was supported by the National Key R\&D Program of China under Grant No. 2023YFA1009403.

\vspace{-0.2cm}

\section*{Data Availability Statement}
The code for of this work (both the Coq formalization and the automatic verifier \veriq) is available at \url{https://github.com/Chesterhuang1999/Veri-qec}, or at \url{ https://doi.org/10.5281/zenodo.15248774} (evaluated artifact~\cite{huang_2025artifact}). The appendices are provided as the supplementary material, or see our extended version~\cite{Huang2025efficientextend}.

\vspace{-0.1cm}

\bibliographystyle{ACM-Reference-Format}
\bibliography{refs}
\clearpage
\appendix
\section{Supplementary Materials for Section \ref{assert-logic} and Section 4} \label{supp-assert}
Here we provide technical details for Section 3 regarding the assertion logic in Section \ref{assert-logic}. All lemmas and theorems are proved in our Coq implementation based on CoqQ~\cite{zhou2023coqq}.

\subsection{A Syntax of Basic Expression}\label{comp-expr}

We first claim the expressivity of $SExp$ and $PExp$ discussed in the main context.
\begin{proposition}[Expressivity of $SExp$ and $PExp$]
    Any constant $s\in \mathbb{Z}[\frac{1}{\sqrt{2}}]$ is expressible in $SExp$. Any constant $W$ belonging to the Pauli group over qubits $1,\cdots, n$ is expressible in $PExp$.
\end{proposition}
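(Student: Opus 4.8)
The plan is to treat the two claims separately, each by an explicit constructive argument that assembles the desired constant bottom-up from the grammar primitives; no genuine induction on an unknown structure is needed, since each target has a fixed normal form.

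For the first claim I would start from the presentation $s=(x+y\sqrt{2})/2^{t}$ with $x,y\in\mathbb{Z}$ and $t\in\mathbb{N}$, which the excerpt already records as a description of $\mathbb{Z}[\frac{1}{\sqrt 2}]$. The integer $1$ is expressible as $(-1)^{\mathbf{false}}$ and $-1$ as $(-1)^{\mathbf{true}}$, so every integer $n$ is obtained by iterated application of the $S_1+S_2$ production (and $0$ as $(-1)^{\mathbf{false}}+(-1)^{\mathbf{true}}$); this yields $SExp$ terms $s_x,s_y$ with $\sem{s_x}_m=x$ and $\sem{s_y}_m=y$. Since $\sqrt 2$ is itself a primitive of $SExp$, the term $s_x+s_y\,\sqrt 2$ evaluates to $x+y\sqrt 2$, and applying the production $S/2^{t}$ with the constant exponent $t$ gives a term evaluating to $(x+y\sqrt 2)/2^{t}=s$. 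As a sanity check this already covers the frequently used factor $\sq=\sqrt 2/2^{1}$.

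For the second claim, write a Pauli-group element in its canonical form $W=i^{t}\,p_{1}\otimes\cdots\otimes p_{n}$ with $t\in\{0,1,2,3\}$ and $p_{j}\in\{I,X,Y,Z\}$. Each non-identity tensor factor $p_{j}$ is the atom $X_{j}$, $Y_{j}$, or $Z_{j}$, and by the cylindrical-extension convention the identity factors may simply be omitted; their ordered product $\prod_{j:\,p_{j}\neq I}(p_{j})_{j}$ is then a $PExp$ term whose semantics is exactly $p_{1}\otimes\cdots\otimes p_{n}$. If every $p_j=I$, I instead use $X_{1}X_{1}$, whose semantics is the identity since $X^{2}=I$. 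It then remains only to supply the phase $i^{t}$ via the productions $sP$ and $P_{1}P_{2}$.

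The step I expect to be the only real obstacle is producing the phase, and specifically the imaginary unit. The scalar ring $SExp$ evaluates inside $\mathbb{Z}[\frac{1}{\sqrt 2}]\subseteq\mathbb{R}$, so the productions $(-1)^{b}$ and $-S$ can generate only the real phases $\pm 1$; the factor $i$ can never appear as a scalar. The resolution is to generate $i$ through the non-commutative product of atoms: since $XYZ=iI$ on a single qubit, the $PExp$ term $X_{1}Y_{1}Z_{1}$ satisfies $\sem{X_{1}Y_{1}Z_{1}}_m=iI_{\cH}$. Hence $i^{t}I$ is expressible for every $t$, as a product of $t$ copies of $X_{1}Y_{1}Z_{1}$ (one may equivalently absorb the real sign using a leading $(-1)^{\mathbf{true}}$ when $t\in\{2,3\}$). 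Multiplying this phase term by the product of atoms constructed above gives a $PExp$ term whose semantics is $W$. This completes the construction; the verification that each assembled term has the claimed semantics is a direct unfolding of the interpretation clauses for $SExp$ and $PExp$.
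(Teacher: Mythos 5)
Your proposal is correct. Note that the paper itself gives no pen-and-paper proof of this proposition: it is stated as a claim in Appendix A.1, with the surrounding text delegating its verification to the Coq formalization, so there is no written argument to compare yours against. Your construction is the natural one and it is sound: the $SExp$ part follows immediately from the normal form $(x+y\sqrt{2})/2^{t}$, building integers from $(-1)^{\mathbf{false}}$ and $(-1)^{\mathbf{true}}$ by iterated sums and then applying the $\sqrt{2}$, product, sum, and $S/2^{t}$ productions. For $PExp$ you correctly exploit the paper's own observation that products of cylindrically extended atoms realize tensor products, and you handle the all-identity case via $X_1X_1$. Most importantly, you identified the one genuinely subtle point --- that the scalar ring $\mathbb{Z}[\frac{1}{\sqrt{2}}]$ is real, so the phases $\pm i$ of the Pauli group cannot come from the $sP$ production --- and resolved it correctly with $\sem{X_1Y_1Z_1}_m = iI$ (since $XY=iZ$ and $Z^2=I$), so that $i^{t}$ is obtained as a $t$-fold product of $X_1Y_1Z_1$. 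This is exactly the observation any complete proof of the second claim must contain, and your verification that each assembled term has the claimed semantics is indeed a routine unfolding of the interpretation clauses.
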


We further specify the boolean expressions $BExp$ and integer expressions $IExp$ for \veriq\  as: 
\begin{align*}
IExp:\quad a \Coloneqq\ &n\in\mathbb{N} \mid x \mid - a \mid a_1+a_2 \mid a_1\times a_2\\ 
BExp:\quad  b \Coloneqq\ & \mathbf{true} \mid \mathbf{false} \mid x \mid a_1 == a_2 \mid a_1 \leq a_2\\ 
& \mid \neg b \mid b_1 \wedge b_2 \mid b_1 \vee b_2 \mid b_1 \rightarrow b_2 .
\end{align*}
Here, $n$ are constant natural numbers, $x$ appears in $IExp$ and $BExp$ are program variables of type integer and bool, respectively.
There exists type coercion between $BExp$ and $IExp$: boolean value $\mathbf{true}$ and $\mathbf{false}$ are identified with 1 and 0, respectively. Their semantics $\sem{a}_m$ and $\sem{b}_m$ are defined conventionally as a mapping from classical state $m\in\mathtt{CMem}$ to integers and bools:
\begin{align*}
    &\sem{n}_m \triangleq n, \quad 
     \sem{x}_m \triangleq m(x), \quad 
     \sem{-a}_m \triangleq -\sem{a}_m, \\
    &\sem{a_1+a_2}_m \triangleq \sem{a_1}_m+\sem{a_2}_m, \quad
     \sem{a_1\times a_2}_m \triangleq \sem{a_1}_m\times\sem{a_2}_m, \\
    &\sem{\mathbf{true}}_m \triangleq \mathbf{true}, \quad
     \sem{\mathbf{false}}_m \triangleq \mathbf{false}, \quad 
     \sem{x}_m \triangleq m(x), \quad \\
    &\sem{a_1 == a_2}_m \triangleq \sem{a_1}_m == \sem{a_2}_m, \quad
     \sem{a_1 \leq a_2}_m \triangleq \sem{a_1}_m \leq \sem{a_2}_m, \\
    &\sem{\neg b}_m \triangleq \neg \sem{b}_m, \quad
     \sem{b_1 \wedge b_2}_m \triangleq \sem{b_1}_m \wedge \sem{b_2}_m, \\
    &\sem{b_1 \vee b_2}_m \triangleq \sem{b_1}_m \vee \sem{b_2}_m, \quad
     \sem{b_1 \rightarrow b_2}_m \triangleq \sem{b_1}_m \rightarrow \sem{b_2}_m.
\end{align*}

\subsection{The Pauli Expression is Closed under Basic Unitary Transformation}

To provide proof rules for the unitary transformation of single-qubit gates $U_1\in\{X,Y,Z,H,S,T\}$ and two-qubit gates $U_2\in\{CNOT, CZ, iSWAP\}$ for the program logic, we need first examine the properties that, for any $P\in PExp$, is $U_{1i}^\dag\sem{P}_m U_{1i}$ and $U_{2ij}^\dag\sem{P}_m U_{2ij}$ expressible in $PExp$? Here, we give an affirmative result stated below:

\begin{theorem}[Theorem 3.1]
For any Pauli expression $P$ defined in Eqn. \eqref{pauli-exp} and single-qubit gate $U_1$ acts on $q_i$ or two-qubit gate $U_2$ acts on $q_iq_j$, their exists another Pauli expression $Q\in PExp$, such that for all $m\in\mathtt{CMem}$:
$$
\sem{Q}_m = U_{1i}^\dag\sem{P}_m U_{1i},\quad \mbox{or}, \quad \sem{Q} = U_{2ij}^\dag\sem{P}_m U_{2ij}.
$$
\end{theorem}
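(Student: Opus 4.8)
The plan is to prove the statement by structural induction on the Pauli expression $P$ as generated in Eqn.~\eqref{pauli-exp}, exploiting three properties of conjugation by a fixed unitary $U$: it is additive, multiplicative ($U^\dag(MN)U = (U^\dag MU)(U^\dag NU)$ since $UU^\dag = I$), and commutes with scalar multiplication ($U^\dag(sM)U = s\,U^\dag MU$). Crucially, $U$ does not depend on the classical memory $m$, and the only $m$-dependence in $\sem{P}_m$ enters through the $SExp$ scalars; hence it suffices to construct $Q$ syntactically once and then verify $\sem{Q}_m = U^\dag\sem{P}_m U$ for all $m$ uniformly.

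For the inductive step, the three compound constructors are discharged directly from these algebraic properties. If $P = sP_1$ with $s\in SExp$, the induction hypothesis yields $Q_1$ with $\sem{Q_1}_m = U^\dag\sem{P_1}_m U$, and we take $Q = sQ_1$. If $P = P_1P_2$, we take $Q = Q_1Q_2$, using multiplicativity; if $P = P_1 + P_2$, we take $Q = Q_1 + Q_2$, using additivity. In each case closure is immediate because $PExp$ is by definition closed under scalar multiplication, products, and sums, so no new syntactic machinery is needed here.

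All the content therefore lives in the base case $P = p_r$, a single elementary Pauli $p\in\{X,Y,Z\}$ on qubit $r$. This reduces to a finite case analysis over the (finitely many) gates in $\{X,Y,Z,H,S,T\}$ and $\{CNOT,CZ,iSWAP\}$ together with the relationship between the qubit indices: if $U$ acts only on qubits disjoint from $r$, then $U$ commutes with $p_r$ and $Q = p_r$; otherwise one computes $U^\dag p_r U$ explicitly. For the Clifford generators these conjugates are again (signed) Pauli strings — exactly the substitutions recorded in rules (U-X) through (U-CNOT) etc.\ of Fig.~\ref{infer-rule} — and each can be read off and checked by a direct $2\times2$ or $4\times4$ matrix computation.

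The \emph{main obstacle}, and the conceptual heart of the argument, is the $T$ gate, which does not preserve the Pauli group: $T^\dag X T = \sq(X - Y)$ and $T^\dag Y T = \sq(X + Y)$ (while $T^\dag Z T = Z$). These results are \emph{not} Pauli strings, but they are legitimate Pauli \emph{expressions}, precisely because the scalar $\sq = \tfrac{\sqrt 2}{2^1}$ lies in the ring $\mathbb{Z}[\tfrac{1}{\sqrt 2}]$ captured by $SExp$ in Eqn.~\eqref{s2ring-exp}. This is the whole reason the $SExp$ factor was introduced, and it is what makes the inductive product step nontrivial in appearance: conjugating a product of $X$-type factors by $T$ yields a product of sums, which expands into an $SExp$-weighted sum of Pauli strings — still inside $PExp$, but only thanks to the presence of both the sum and scalar constructors. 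Once this case is verified, the induction closes. The remaining effort is purely the bookkeeping of the finite gate/index case split, which I would organize as a table rather than grinding through each subcase in prose.
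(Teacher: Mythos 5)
Your proposal is correct and follows essentially the same route as the paper's own proof: structural induction on $PExp$, with the compound cases ($sP$, $P_1+P_2$, $P_1P_2$) discharged by additivity, scalar-commutation, and multiplicativity of conjugation (via $UU^\dag = I$), and the base case $p_r$ handled by a finite case split on the gate and qubit indices, where the $T$ gate is the one case that forces scalars from $\mathbb{Z}[\tfrac{1}{\sqrt 2}]$ (e.g. $T^\dag X T = \tfrac{1}{\sqrt 2}(X-Y)$). The only cosmetic difference is that the paper packages the construction of $Q$ as an explicit syntactic substitution $P[\cdot/\cdot]$ defined per gate before running the induction, whereas you build $Q$ recursively case by case; the underlying argument is identical.
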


\begin{proof}
We prove it by induction on the structure of $PExp$. The proofs of all gates are similar, we here only present the case for $T$ gate and $CNOT$ gate.

\noindent$\bullet$\ ($T$ gate). Define the substitution of any $P\in PExp$ as
$$P'\triangleq P\big[\sq(X_i - Y_i) /X_i, \sq(X_i + Y_i)/Y_i\big],$$
where $i$ is the qubit $q_i$ the $T$ gate acts on, and $P[e_1/x_1,e_2/x_2,\cdots]$ are simultaneous substitution of constant constructor $x\in \{X_r,Y_r,Z_r\}$ with expression $e$ in $P$. We then show that $P'$ is the desired $Q$.

\textbf{Base case.} For elementary expression $P\equiv p_r$, if $r \neq i$, then:
$$T^\dag_{q_i}\sem{p_r}_mT_{q_i} = I_1\otimes\cdots\otimes T^\dag_{i} I_i T_i\otimes\cdots\otimes p_r\otimes\cdots I_n = I_1\otimes\cdots\otimes I_i \otimes\cdots\otimes p_r\otimes\cdots I_n = \sem{p_r}_m = \sem{p_r'}_m,$$
i.e., we do not need to change $p_r$ in the case of $r \neq i$.
On the other hand, note that:
$$ T^\dag X T = \sq(X-Y),\quad T^\dag Y T = \sq(X+Y),\quad T^\dag Z T = Z,$$
so we obtain:
\begin{align*}
&T^\dag_{q_i}\sem{X_i}_mT_{q_i} = I_1\otimes\cdots\otimes \sq(X_i-Y_i)\otimes\cdots \otimes I_n = \sem{\sq(X_i - Y_i)}_m = \sem{X_i'}_m  \\
&T^\dag_{q_i}\sem{Y_i}_mT_{q_i} = \sem{\sq(X_i + Y_i)}_m = \sem{Y_i'}_m, \qquad
T^\dag_{q_i}\sem{Z_i}_mT_{q_i} = \sem{Z_i}_m = \sem{Z_i'}_m.
\end{align*}

\textbf{Induction step.} $P\equiv sP$. Note that 
$$T^\dag_{q_i}\sem{sP}_mT_{q_i} = T^\dag_{q_i}\sem{s}_m\sem{P}_mT_{q_i} = \sem{s}_m(T^\dag_{q_i}\sem{P}_mT_{q_i}) = \sem{s}_m\sem{P'}_m = \sem{sP'}_m = \sem{(sP)'}_m.$$

$P\equiv P_1 + P_2$. Observe that 
\begin{align*}
    T^\dag_{q_i}\sem{P_1 + P_2}_mT_{q_i} &= T^\dag_{q_i}(\sem{P_1}_m + \sem{P_1}_m)T_{q_i} = 
T^\dag_{q_i}\sem{P_1}_mT_{q_i} + T^\dag_{q_i}\sem{P_1}_mT_{q_i} \\
&= \sem{P_1'}_m + \sem{P_2'}_m = \sem{P_1'+P_2'}_m = \sem{(P_1+P_2)'}_m.
\end{align*}

$P\equiv P_1P_2$. By noticing that $T^\dag T = I$, we have:
\begin{align*}
    T^\dag_{q_i}\sem{P_1P_2}_mT_{q_i} &= T^\dag_{q_i}(\sem{P_1}_m\sem{P_2}_m)T_{q_i} = 
(T^\dag_{q_i}\sem{P_1}_mT_{q_i})(T^\dag_{q_i}\sem{P_2}_mT_{q_i}) \\
&= \sem{P_1'}_m\sem{P_2'}_m = \sem{P_1'P_2'}_m = \sem{(P_1P_2)'}_m.
\end{align*}

\noindent$\bullet$\ ($CNOT$ gate). Define the substitution of any $P\in PExp$ as
$$P'\triangleq P[X_iX_j/X_i, Y_iX_j/Y_i, Z_iY_j/Y_j, Z_iZ_j/Z_j],$$
and $P'$ is the desired $Q$.
The induction step is the same as of $T$. For the base case, we shall analyze the case that $r = i$ or $r = j$ or $r \neq i, j$. First, we observe the following facts:
\begin{align*}
    &CNOT_{ij}(X_i\otimes I_j)CNOT_{ij} = X_i\otimes X_j,\quad
     CNOT_{ij}(I_i\otimes X_j)CNOT_{ij} = I_i\otimes X_j\\
    &CNOT_{ij}(Y_i\otimes I_j)CNOT_{ij} = Y_i\otimes X_j,\quad
     CNOT_{ij}(I_i\otimes Y_j)CNOT_{ij} = Z_i\otimes Y_j\\
    &CNOT_{ij}(Z_i\otimes I_j)CNOT_{ij} = Z_i\otimes I_j,\quad
     CNOT_{ij}(I_i\otimes Z_j)CNOT_{ij} = Z_i\otimes Z_j.
\end{align*}

For $r \neq i, j$, $CNOT_{ij}^\dag\sem{p_r}_m CNOT_{ij} = \sem{p_r}_m = \sem{p_r'}_m$.

If $r = i$, then for example $X_i$, we calculate :
\begin{align*}
    CNOT_{ij}\sem{X_i}_mCNOT_{ij} &= \bigotimes_{k\neq i, j} I_k \otimes (CNOT_{ij}(X_i\otimes I_j)CNOT_{ij}) = \bigotimes_{k\neq i, j} I_k \otimes X_i\otimes X_j \\
    &= \big(\bigotimes_{k\neq i} I_k \otimes X_i\big)\big(\bigotimes_{k\neq j} I_k \otimes X_j\big) =
    \sem{X_iX_j}_m = \sem{X_i'}_m.
\end{align*}
The rest cases $Y_i,Z_i$ and $X_j, Y_j, Z_j$ are similar.
\end{proof}

\newcommand{\si}{\rightsquigarrow}
\renewcommand{\sc}{\doublecap}
\subsection{A Brief Review of Hilbert Subspace} \label{review-subspace}

We first briefly review the basic operations regarding subspaces of Hilbert space $\cH$.
Since we focus on the finite-dimensional case, any subspace of $\cH$ is always closed.
\begin{itemize}
    \item (span) Given a set of states $S\subseteq \cH$, its span $\Span\{S\} \in \cS(\cH)$ is defined by
    $$\Span\{S\} = \Big\{\sum_{i\in I}\lambda_i|\phi_i\>: I\mbox{ is a finite index set, }\lambda_i\in \mathbb{C}\mbox{, and }|\phi_i\>\in S\Big\}.$$
    \item (kernel) Given a linear operator $A$ on $\cH$, its kernel $\ker(A) \in \cS(\cH)$ is defined by
    $$\ker(A) = \{|\psi\>\in\cH : A|\psi\> = 0\}.$$
    \item (+1-eigenspace) Given a linear operator $A$ on $\cH$, its +1-eigenspace $E_1(A) \in \cS(\cH)$ is defined by
    $$E_1(A) = \{|\psi\>\in\cH : A|\psi\> = |\psi\>\}.$$
    \item (complement, or orthocomplement) For a given subspace $S\in \cS(\cH)$, its orthocomplement $S^\bot \in \cS(\cH)$ is defined by
    $$S^\bot = \{|\psi\> : \forall\ |\phi\>\in S,\ |\psi\> \perp |\phi\> \}.$$
    Orthocomplement is involutive, i.e., $S^{\bot\bot} = S$.
    \item (support) Given a linear operator $A$ on $\cH$, its support $\mathrm{supp}(A) \in \cS(\cH)$ is defined as the orthocomplement of its kernel, i.e., $\mathrm{supp}(A) = \ker(A)^\bot$. Support is idempotent, i.e., 
    $\Supp(\Supp(A)) = \Supp(A)$.
    \item (meet, or intersection, or disjunction) Given two subspaces $S, T\in \cS(\cH)$, their meet $S\wedge T \in \cS(\cH)$ is defined as the intersection:
    $$S\wedge T = S\cap T \equiv \{|\phi\> : |\phi\>\in S\mbox{ and } |\phi\>\in T\}.$$
    \item (join, or conjunction, or span of the union) Given two subspaces $S, T\in \cS(\cH)$, their join $S\vee T \in \cS(\cH)$ is defined as:
    $$S\vee T = \Span\{ S\cup T \}.$$
    It holds that: $(S\vee T)^\bot = S^\bot \wedge T^\bot$ and $(S\wedge T)^\bot = S^\bot \vee T^\bot$. Generally, there is no distributivity of $\vee$ and $\wedge$.
    \item (commute) Given two subspaces $S, T\in \cS(\cH)$, we say $S$ commutes with $T$, written $S\mathtt{C}T$, if $S = (S\wedge T)\vee (S\wedge T^\bot).$ Commutativity plays an essential role in reasoning about Hilbert space. Some properties include: 
    \begin{align*}
        &S\mathtt{C}T\mbox{ iff }T\mathtt{C}S, \quad S\mathtt{C}S, \quad S\subseteq T\mbox{ implies }S\mathtt{C}T, \quad S\mathtt{C} T\mbox{ implies }S\mathtt{C}T^\bot.
    \end{align*}
    Distributivity of meet and join holds when commutativity is assumed: if \emph{two} of $S\mathtt{C}T_1, S\mathtt{C}T_2, T_1\mathtt{C}T_2$ hold, then:
    $$S\wedge (T_1\vee T_2) = (S\wedge T_1) \vee (S\wedge T_2), \quad S\vee (T_1\wedge T_2) = (S\vee T_1) \wedge (S\vee T_2).$$
    \item (Sasaki implication) Given two subspaces $S, T\in \cS(\cH)$, the Sasaki implication $S\rightsquigarrow T \in \cS(\cH)$ is defined by 
    $$S\rightsquigarrow T = S^\bot\vee (S\wedge T).$$
    Sasaki implication is viewed as an extension of classical implication in quantum logic since it satisfies Birkhoff-von Neumann requirement: $S \rightsquigarrow T = I$ if and only if $S \subseteq T$, and the compatible import-export law: if $S$ commutes with $T$, then for any $W$, $S \wedge T \subseteq W$ if and only if $S \subseteq W \rightsquigarrow T$.
    \item (Sasaki projection) Given two subspaces $S, T\in \cS(\cH)$, the Sasaki projection $S\doublecap T \in \cS(\cH)$ is defined by 
    $$S\doublecap T = S\wedge (S^\bot\vee T).$$
    Sasaki projection is a ``dual'' of implication, i.e., $(S\sc T)^\bot = S\si T^\bot$, $(S\si T)^\bot = S\sc T^\bot$. It preserves order for the second parameter, i.e., $T_1\subseteq T_2$ implies $S\sc T_1\subseteq S\sc T_2$. $\Supp(P_S A P_S) = P_S\sc \Supp(A)$ which appears useful for reasoning about measurement~\cite{feng2023refinement}.
\end{itemize}

\subsection{A Hilbert-style Proof System of Assertion Logic}\label{proof-sys-assert}

The proof system presented in Fig. \ref{proof-sys-ass} is sound for quantum logic, and thus is also sound for our assertions, as its semantics is a point-wise lifting of quantum logic. We say two assertions $A,B$ commute, written $A\mathtt{C}B$, if for all $m$, $\sem{A}_m\mathtt{C}\sem{B}_m$.
\begin{figure}
\begin{align*}
& 1. \ \inferrule{}{\neg \neg A \vdash A}  && 2. \ \inferrule{}{A\vdash A}  && 3. \ \inferrule{}{A \vdash \top} && 4. \ \inferrule{}{\bot \vdash A}  \\
& 5. \ \inferrule{\Gamma \vdash A \quad \Gamma \vdash B}{\Gamma \vdash A\wedge B}  &&6. \ \inferrule{\Gamma \vdash A_1 \wedge A_2}{\Gamma \vdash A_i} && 7. \ \inferrule{A\vdash B}{\Gamma \wedge A \vdash B}  &&8.\ \inferrule{\Gamma \vdash A \quad \Gamma^{\prime} \vdash A }{\Gamma \vee \Gamma^{\prime} \vdash A}  \\ 
&9. \ \inferrule{\Gamma \vdash A_i}{\Gamma \vdash A_1 \vee A_2} 
&& 10. \ \inferrule{A\vdash B \Rightarrow C \quad A 
\vdash B}{A \vdash C} && 11. \ \inferrule{A\wedge B \vdash C\quad A\mathtt{C}B}{A \vdash B \Rightarrow C} && 
\end{align*}
\caption{A Hilbert-style proof system for assertion logic.}
\label{proof-sys-ass}
\end{figure}

We also provide two auxiliary laws to help simplify special Pauli expressions:
\begin{proposition}
For any $P,Q \in PExp$, the following laws are correct: 
$$\mathrm{i}) \ P \wedge Q \eqmodels P \wedge QP, \qquad \mathrm{ii}) \ P \wedge - P \eqmodels \mathbf{false}.$$
\label{ass-law-app}
\end{proposition}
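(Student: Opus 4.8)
The plan is to prove both logical equivalences by unfolding the definition of $\eqmodels$ into a pointwise statement about subspaces. Since $A \eqmodels B$ means $A \models B$ and $B \models A$, and entailment is characterized as inclusion of subspaces at every classical memory, it suffices to show $\sem{A}_m = \sem{B}_m$ for each $m \in \mathtt{CMem}$. So I fix an arbitrary $m$, write $A \triangleq \sem{P}_m$ and $B \triangleq \sem{Q}_m$ for the \emph{operator} semantics of the Pauli expressions, and recall that the \emph{assertion} semantics of a Pauli expression is its $+1$-eigenspace, $\sem{P}_m = E_1(A)$. Under this reading, $\sem{P \wedge Q}_m = E_1(A) \cap E_1(B)$, the product $\sem{QP}_m = BA$, and $\sem{-P}_m = -A$, so $-P$ read as an assertion is $E_1(-A)$, i.e.\ the $-1$-eigenspace of $A$.

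For part (i), the key observation is that on $E_1(A)$ the operators $B$ and $BA$ act identically: if $A|\psi\rangle = |\psi\rangle$ then $BA|\psi\rangle = B(A|\psi\rangle) = B|\psi\rangle$. Consequently, for such $|\psi\rangle$, we have $B|\psi\rangle = |\psi\rangle$ iff $BA|\psi\rangle = |\psi\rangle$, which gives $E_1(A) \cap E_1(B) = E_1(A) \cap E_1(BA)$ directly by double inclusion. Since $m$ was arbitrary, $P \wedge Q \eqmodels P \wedge QP$. I would stress that this identity requires \emph{no} commutativity hypothesis on $A$ and $B$; it is a purely algebraic fact about eigenspaces, which is exactly what makes it a safe rewriting rule in the verification-condition procedure of Section~\ref{verify-cond}.

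For part (ii), I observe that a single vector cannot lie in both the $+1$- and $-1$-eigenspaces of $A$ unless it is zero: if $A|\psi\rangle = |\psi\rangle$ and $(-A)|\psi\rangle = |\psi\rangle$, i.e.\ $A|\psi\rangle = -|\psi\rangle$, then $|\psi\rangle = -|\psi\rangle$, forcing $|\psi\rangle = 0$. Hence $E_1(A) \cap E_1(-A) = \{0\} = 0_{\mathcal{H}} = \sem{\mathbf{false}}_m$, which yields $P \wedge -P \eqmodels \mathbf{false}$.

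The arguments are elementary, so there is no real mathematical obstacle; the only care required is bookkeeping the two overloaded readings of $\sem{\cdot}_m$ (operator versus subspace) and checking that the assertion connective $\wedge$ and the phase $-1$ are interpreted exactly as intersection and operator negation. For the mechanized version in CoqQ, the main effort is to expose the $+1$- and $-1$-eigenspaces as the appropriate subspace objects and then discharge the two double-inclusion goals, both of which collapse to the single-vector computations above.
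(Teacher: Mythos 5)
Your proof is correct: both laws follow exactly as you argue, by pointwise equality of subspaces at each classical memory $m$, with (i) reduced to the observation that $B$ and $BA$ agree on the fixed points of $A$ (no commutativity or Hermiticity of the Pauli-expression semantics is needed) and (ii) to the fact that no nonzero vector is fixed by both $A$ and $-A$. The paper gives no pen-and-paper proof of this proposition---it states it and defers to the Coq/CoqQ mechanization---so your elementary eigenspace argument supplies precisely the reasoning the paper leaves implicit, and it is the natural argument the formalization would discharge.
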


\subsection{Denotational Semantics of QEC Programs}
\label{sec:app-denotational-semantics}
\citet{feng2021quantum} gives the induced denotational semantics of the classical-quantum program, the structural representation of each construct is as follows:
\begin{proposition}[c.f. \cite{feng2021quantum}]
The denotational semantics for QEC programs enjoy the following structure representation:
\begin{enumerate}
    \item $\sem{\mathbf{skip}}(m,\rho) = (m,\rho)$;
    \item $\sem{q_i \coloneq \vert 0\rangle}(m,\rho) = (m,\sum_{k = 0,1}|0\>_{q_i}\<k|\rho|k\>_{q_i}\<0|)$;
    \item $\sem{\qut{U}{q_i}}(m,\rho) = (m,U_{q_i}\rho U_{q_i}^{\dag})$;
    \item $\sem{\qut{U}{q_iq_j}}(m,\rho) = (m,U_{q_{i,j}}\rho U_{q_{i,j}}^{\dag})$;
    \item $\sem{x \coloneq e}(m,\rho) = (m[\sem{e}_m/x], \rho)$;
    \item $\sem{S_1\fatsemi S_2}(m,\rho) = \sum_{o\in\mathtt{CMem}}\sem{S_2}(o,\sem{S_1}(m,\rho)(o))$;
    \item $\sem{\qassign{\mathbf{meas}[P]}{x}}(m,\rho) = (m[0/x], \mathtt{P}_{\sem{P}_m}\rho\mathtt{P}_{\sem{P}_m}) + (m[1/x], \mathtt{P}_{\sem{P}_m^\bot}\rho \mathtt{P}_{\sem{P}_m^\bot})$;
    \item $\sem{\qqif{b}{S_1}{S_0}}(m,\rho) = \left\{ \begin{aligned}
  &\sem{S_0}(m,\rho), \ b \equiv \mathbf{false}\\
  &\sem{S_1}(m,\rho), \ b \equiv \mathbf{true}
  \end{aligned}\right.$;
    \item $\sem{\qqwhile{b}{S}}(m,\rho) = \lim_n (\sem{(\mathbf{while})^n}(m,\rho)).$
\end{enumerate}
Note that projection is Hermitian, so we omit ${}^\dag$ in (7). $(\mathbf{while})^n$ is the $n$-th syntactic approximation of $\mathbf{while}$, i.e., $(\mathbf{while})^0 = \mathbf{abort}$, and $(\mathbf{while})^{(n+1)} = \qqif{b}{S\fatsemi(\mathbf{while})^n}{\mathbf{skip}}$.
As mentioned, we do not lift the input state from singleton to the general classical-quantum state, (6) is thus slightly different from \cite{feng2021quantum}. In (9), as the sequence always converges, we simply write $\lim$ instead of the least upper bound in \cite{feng2021quantum}.
\label{prop-struct}
\end{proposition}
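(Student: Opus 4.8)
The plan is to derive each clause of Proposition~\ref{prop-struct} from the operational semantics of Fig.~\ref{op-semantic}, following the standard recipe by which a denotational semantics is induced from a small-step transition system \cite{feng2021quantum}. Recall that the induced semantics is defined by collecting terminal configurations: $\sem{S}(m,\rho)$ sends each classical memory $m'\in\mathtt{CMem}$ to the sum of the partial density operators $\rho'$ taken over all terminating reductions $\langle S,(m,\rho)\rangle\to^{*}\langle\downarrow,(m',\rho')\rangle$, yielding a classical-quantum state in $\mathtt{CMem}\to\cD(\cH)$. With this definition fixed, clauses (1)--(5) and (7) are the \emph{one-step} cases, while (6), (8) and (9) require compositional reasoning.

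For the atomic constructs I would simply unfold the unique applicable rule. For $\qskip$, $\qinit{q_i}$, $\qut{U}{q_i}$, $\qut{U}{q_iq_j}$ and $\qassign{e}{x}$, rules (Skip), (Init), (Unit1), (Unit2), (Assign) each provide the \emph{only} transition out of the initial configuration, and it steps directly to $\downarrow$; hence the set of reachable terminal configurations is a singleton and the collected sum is exactly the right-hand side of the corresponding clause. The measurement statement $\qmeas{x}{P}$ is the one atomic command with two outgoing transitions: rule (Meas) produces $(m[0/x],M_0\rho M_0)$ and $(m[1/x],M_1\rho M_1)$ with $M_0=\mathtt{P}_{\sem{P}_m}$ and $M_1=\mathtt{P}_{\sem{P}_m^\bot}$; summing both terminal configurations gives clause (7), the subnormalized outcomes carrying the Born-rule weights implicitly.

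The sequential and conditional cases rest on a single \emph{path-factorization lemma}: every terminating reduction of $S_1\fatsemi S_2$ from $(m,\rho)$ decomposes uniquely as a terminating reduction $\langle S_1,(m,\rho)\rangle\to^{*}\langle\downarrow,(o,\sigma)\rangle$ followed by a terminating reduction of $S_2$ from $(o,\sigma)$, which I would prove by induction on the length of the first segment using the shape of rule (Seq). Collecting terminal paths on both sides and using that $\sem{S_2}$ is linear in its density-operator argument then lets me interchange the summation, giving clause (6) as $\sum_{o\in\mathtt{CMem}}\sem{S_2}(o,\sem{S_1}(m,\rho)(o))$. The conditional is simpler: rules (If-T)/(If-F) deterministically rewrite the configuration to the selected branch $\langle S_b,(m,\rho)\rangle$ without touching the state, so the terminal paths of the $\qif$ command coincide with those of that branch, yielding clause (8).

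The main obstacle is the while-loop, clause (9), the only non-finitary case. Here I would introduce the syntactic approximants $(\qwhile)^0=\mathbf{abort}$ and $(\qwhile)^{n+1}=\qqif{b}{S\fatsemi(\qwhile)^n}{\qskip}$, and observe that any terminating computation of $\qqwhile{b}{S}$ unrolls the guard only finitely often, hence factors through some $(\qwhile)^n$. Using the CPO structure of classical-quantum states under the L\"owner order, one checks that $\sem{(\qwhile)^n}(m,\rho)$ is monotonically nondecreasing in $n$ and bounded, so the sequence converges; its limit is the least upper bound and coincides with the sum over all finite terminating unrollings, which is precisely $\sem{\qqwhile{b}{S}}(m,\rho)$. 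The delicate points---monotonicity of the approximants, existence of suprema, and continuity of the semantic constructors so that the limit commutes with composition---are exactly the domain-theoretic facts established for the full language in \cite{feng2021quantum}; since our language is a syntactic subset, I would invoke those results after checking that each rule in Fig.~\ref{op-semantic} is the specialization of the corresponding Feng--Ying rule.
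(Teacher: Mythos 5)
Your proposal is correct, but it does something the paper deliberately does not: the paper offers no proof of Proposition~\ref{prop-struct} at all. It is stated ``c.f.~\cite{feng2021quantum}'' and justified by the observation that the language here is a syntactic subset of the Feng--Ying classical-quantum language, with the only adaptations flagged in the trailing remark (inputs are singletons $(m,\rho)$ rather than general cq-states, which changes the shape of clause~(6), and the increasing sequence in clause~(9) is written as a plain limit rather than a least upper bound); the authors additionally discharge these facts inside their Coq development rather than on paper. Your reconstruction---inducing $\sem{S}$ by collecting terminal configurations of Fig.~\ref{op-semantic}, unfolding the unique transition for the atomic commands, summing the two (Meas) branches for clause~(7), a path-factorization lemma for (Seq), and the approximant/monotone-convergence argument for while---is essentially the proof that lives in \cite{feng2021quantum}, so what you buy is a self-contained argument, at the cost of re-proving imported infrastructure. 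Three small points deserve tightening if you carry this out. First, since measurement makes the transition relation branching, ``decomposes uniquely'' in your factorization lemma must be read per computation path: reductions of $S_1\fatsemi S_2$ form a tree, and the bijection is between terminating paths of the composite and pairs of terminating paths of $S_1$ and $S_2$. Second, for clause~(6) finite linearity of $\sem{S_2}$ in the density-operator argument is not quite enough: when $S_1$ contains loops, $\sem{S_1}(m,\rho)(o)$ is itself a countable sum, so you need continuity (normality) of $\sem{S_2}$ with respect to increasing sequences in the L\"owner order to interchange the summations---a fact you correctly anticipate importing from \cite{feng2021quantum}, but it belongs to clause~(6), not only to clause~(9). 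Third, you should note why the collected sums land in $\cD(\cH)$ at all, i.e.\ that the operators applied along any antichain of branching paths are jointly trace-nonincreasing; this is what bounds the sequence in clause~(9) and makes your appeal to the CPO structure legitimate.
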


It is alternative to express denotational semantics as $\sem{S}' : \mathtt{CMem}\rightarrow\mathtt{CMem}\rightarrow\mathcal{QO}(\mathcal{H})$; for given input and output classical state $m_{in}$ and $m_{out}$, the evolution of quantum system is described by quantum operation $\sem{S}'_{m_{in},m_{out}}$, and $\sem{S}'_{m_{in},m_{out}}(\rho) = \sem{S}(m_{in},\rho)(m_{out})$.
Some structure representations of $\sem{S}'$ are as follows:

\begin{enumerate}
    \item $\sem{\mathbf{skip}}'_{m,m} = \cI$ and $\sem{\mathbf{skip}}'_{m,m'} = 0$ if $m \neq m'$;
    \item $\sem{q_i \coloneq |0\>}'_{m,m}(\rho) = \sum_{k = 0,1}|0\>_{q_i}\<k|\rho|k\>_{q_i}\<0|$ and $\sem{q_i \coloneq |0\>}'_{m,m'} = 0$ if $m \neq m'$;
    \item $\sem{\qut{U}{q_i}}'_{m,m}(\rho) = U_{i}\rho U_{i}^{\dag}$ and $\sem{\qut{U}{q_i}}'_{m,m'} = 0$ if $m \neq m'$;
    \item $\sem{\qut{U}{q_iq_j}}'_{m,m}(\rho) = U_{ij}\rho U_{ij}^{\dag}$ and $\sem{\qut{U}{q_iq_j}}'_{m,m'} = 0$ if $m \neq m'$;
    \item $\sem{x \coloneq e}'_{m,m[\sem{e}_m/x]} = \cI$ and $\sem{x \coloneq e}'_{m,m'} = 0$ if $m[\sem{e}_m/x] \neq m'$;
    \item $\sem{S_1\fatsemi S_2}'_{i,o} = \sum_{m\in\mathtt{CMem}}\sem{S_2}_{m,o}' \circ \sem{S_1}_{i,m}'$;
\end{enumerate}

\subsection{Weakest Liberal Precondition and Definability}
In the main text, we have already defined the satisfaction relation, entailment, as well as correctness formula for $AExp$. However, for the purpose of showing the definability of the weakest liberal precondition and weak completeness of program logic, we extended the definition to its semantics domain:

\begin{definition}[Extended satisfaction relation]
Given a classical-quantum state $\mu$ and a mapping $f_A : \mathtt{CMem}\rightarrow \mathcal{S}(\mathcal{H})$, the satisfaction relation is defined as: $\mu \models f_A$ iff for all $m \in \mathtt{CMem}$, $\mu(m)\models f_A(m)$.
When $A \in AExp$, $\mu \models A$ iff $\mu \models \sem{A}$.
\end{definition}

\begin{definition}[Extended entailment] Let $f_{A_1},f_{A_2}$ be the mappings $\mathtt{CMem}\rightarrow \mathcal{S}(\mathcal{H})$. Then: 
\begin{enumerate}
    \item $f_{A_1}$ entails $f_{A_2}$, denoted by $f_{A_1}\models f_{A_2}$, if for all classical-quantum states $\mu$, $\mu\models f_{A_1}$ implies $\mu\models f_{A_2}$.
    \item $f_{A_1}$ and $f_{A_2}$ are equivalent, denoted $f_{A_1} \eqmodels f_{A_2}$, if $f_{A_1}\models f_{A_2}$ and $f_{A_2}\models f_{A_1}$.
\end{enumerate}
Whenever $A_1, A_2\in AExp$, $A_1\models A_2$ iff $\sem{A_1}\models\sem{A_2}$, and $A_1\eqmodels A_2$ iff $\sem{A_1}\eqmodels \sem{A_2}$. 
\end{definition}

\begin{definition}[Extended correctness formula]
\label{correct-formula-app}
The correctness formula for QEC programs is defined by the Hoare triple $\{f_A\} S \{f_B\}$, where $S\in Prog$ is a quantum program, $f_A,f_B : \mathtt{CMem}\rightarrow\mathcal{S}(\mathcal{H})$ are the pre- and post-conditions.

The formula $\{f_A\} S \{f_B\}$ is true in the sense of partial correctness, written in $\models \{ f_A \} S \{ f_B \}$, if for any singleton cq-state $(m,\rho)$: 
$(m,\rho) \models f_A$ implies $\sem{S}(m,\rho) \models f_B$. Whenever $A,B\in AExp$, $\models\{A\}S\{B\}$ iff $\models\{\sem{A}\}S\{\sem{B}\}$.
\end{definition}

\begin{definition}[Weakest liberal precondition]\label{def-wlp}
For any program $S\in Prog$ and $f_B : \mathtt{CMem}\rightarrow\mathcal{S}(\mathcal{H})$, we define the function $wlp.S.f_B : \mathtt{CMem}\rightarrow\mathcal{S}(\mathcal{H})$ as:
$$wlp.S.f_B(m_{in}) \triangleq \bigwedge_{m_{out}} \ker\left(\sem{S}^{\prime\ast}_{m_{in},m_{out}}(\mathtt{P}_{f_B(m_{out})^\bot})\right)$$ 
where $\sem{S}^{\prime\ast}_{m_{in},m_{out}}$ is the dual super-operator of $\sem{S}_{m_{in},m_{out}}$, and $\ker$ is the kernal of linear operators as defined in Appendix \ref{review-subspace}.
$\models \{wlp.S.f_B\} S \{B\}$ and furthermore,
$wlp$ is well-defined in the sense that, for any $f_A$ such that $\models \{ f_A \} S \{ f_B \}$, it holds that $f_A\models wlp.S.f_B$.
\end{definition}

We first claim a technical lemma:
\begin{lemma}
For any density operator $\rho$, quantum operation $\cE$ and subspace $S$, we have:
    $$\Supp(\cE(\rho)) \subseteq S\mbox{ iff } \Supp(\rho) \subseteq \ker(\cE^\ast(\mathtt{P}_{S^\bot})).$$
\end{lemma}
\begin{proof}
Observe the following facts:
$$\Supp(A)\subseteq Q\mbox{ iff }\Tr(A\mathtt{P}_{Q^\bot}) = 0, \qquad
\Tr(AB) = 0\mbox{ iff }\Supp(A)\Supp(B) = 0$$
where $A, B$ are positive semi-definite operators, $Q$ is a subspace.
\begin{align*}
    &\Supp(\cE(\rho)) \subseteq S \Leftrightarrow\ 
    \Supp(\cE(\rho))\mathtt{P}_{S^\bot} = 0 \\ \Leftrightarrow\ 
    & \Tr(\cE(\rho)\mathtt{P}_{S^\bot}) = 0  \Leftrightarrow\ 
     \Tr(\rho\cE^\ast(\mathtt{P}_{S^\bot})) = 0 \\ \Leftrightarrow\ 
    & \Supp(\rho)\Supp(\cE^\ast(\mathtt{P}_{S^\bot})) = 0 \Leftrightarrow\ 
     \Tr(\rho\mathtt{P}_{\ker(\cE^\ast(\mathtt{P}_{S^\bot}))^\bot}) = 0 \\ \Leftrightarrow\ 
    & \Supp(\rho)\subseteq \ker(\cE^\ast(\mathtt{P}_{S^\bot}))
\end{align*}
\end{proof}
\begin{proof}[Proof of Definition \ref{def-wlp}]
    We show $\models \{wlp.S.f_B\} S \{B\}$ and the well-definedness as:
    \begin{align*}
        &\forall\, (m,\rho),\ (m,\rho)\models wlp.S.f_B \\ \Leftrightarrow\ 
        &\forall\, (m,\rho),\ \Supp(\rho)\subseteq wlp.S.f_B(m) \\ \Leftrightarrow\ 
        &\forall\, (m,\rho),\ \Supp(\rho)\subseteq \bigcap_{o} \ker\left(\sem{S}^{\prime\ast}_{m,o}(\mathtt{P}_{f_B(o)^\bot})\right) \\ \Leftrightarrow\ 
        &\forall\, (m,\rho), o,\ \Supp(\rho)\subseteq \ker\left(\sem{S}^{\prime\ast}_{m,o}(\mathtt{P}_{f_B(o)^\bot})\right) \\ \Leftrightarrow\ 
        &\forall\, (m,\rho), o,\ \Supp(\sem{S}^{\prime}_{m,o}(\rho))\subseteq f_B(o) \\ \Leftrightarrow\ 
        &\forall\, (m,\rho), \sum_o \sem{S}^{\prime}_{m,o}(\rho) \subseteq f_B(o) \\ \Leftrightarrow\ 
        &\forall\, (m,\rho), \sem{S}(m,\rho) \models f_B
    \end{align*}
    Since $(m,\rho)\models wlp.S.f_B$ must holds, so $f_A\models wlp.S.f_B$.
\end{proof}

As a corollary of the above proof, we have: 
\begin{corollary}
    For all $f_A, f_B$ and $S$, if for all $(m,\rho)$, $(m,\rho)\models f_A$ iff $\sem{S}(m,\rho)\models f_B$, then $f_A = wlp.S.f_B$.
\end{corollary}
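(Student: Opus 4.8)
The plan is to leverage the chain of equivalences already established in the proof of Definition~\ref{def-wlp}, which shows that $wlp.S.f_B$ itself satisfies the defining property: for every singleton state $(m,\rho)$,
\[
(m,\rho)\models wlp.S.f_B \iff \sem{S}(m,\rho)\models f_B.
\]
Combining this with the hypothesis $(m,\rho)\models f_A \iff \sem{S}(m,\rho)\models f_B$, I immediately obtain that $f_A$ and $wlp.S.f_B$ are satisfied by exactly the same singleton states, i.e. $(m,\rho)\models f_A \iff (m,\rho)\models wlp.S.f_B$ for all $(m,\rho)$. It then remains only to upgrade this agreement of the satisfaction relation into a genuine equality of the two subspace-valued maps $f_A, wlp.S.f_B : \mathtt{CMem}\to\mathcal{S}(\cH)$.

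To carry out this upgrade, I would fix an arbitrary classical state $m\in\mathtt{CMem}$ and show $f_A(m)=wlp.S.f_B(m)$ as subspaces. The key observation is that a subspace $X\in\mathcal{S}(\cH)$ is completely recovered from the pure states it contains, and that $\ket{\psi}\in X$ is equivalent to $\Supp(\ket{\psi}\bra{\psi})\subseteq X$, i.e. to $(m,\ket{\psi}\bra{\psi})\models X$ by the singleton satisfaction relation. Instantiating the established agreement at the singleton state $(m,\ket{\psi}\bra{\psi})$ for each pure $\ket{\psi}$ then yields $\ket{\psi}\in f_A(m) \iff \ket{\psi}\in wlp.S.f_B(m)$; since two subspaces containing the same vectors coincide, $f_A(m)=wlp.S.f_B(m)$. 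As $m$ was arbitrary, $f_A = wlp.S.f_B$.

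The argument is essentially immediate once the preceding proof is in hand, so there is no serious computational obstacle. The one point that deserves care is the final reduction: the agreement of satisfaction, which a priori ranges over all partial density operators $\rho$, must be specialized to pure (rank-one) states in order to test membership of individual vectors, using $\Supp(\ket{\psi}\bra{\psi})=\Span\{\ket{\psi}\}$. This step relies only on the finite-dimensional subspace theory reviewed in Appendix~\ref{review-subspace} together with the singleton satisfaction relation $(m,\rho)\models f \iff \Supp(\rho)\subseteq f(m)$, so it presents no real difficulty but is the logical crux that converts the biconditional on states into equality of the two maps.
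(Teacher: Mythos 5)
Your proposal is correct and follows essentially the same route as the paper, which presents this statement as an immediate corollary of the chain of equivalences in the proof of Definition~\ref{def-wlp} showing $(m,\rho)\models wlp.S.f_B \iff \sem{S}(m,\rho)\models f_B$. The paper leaves the final step implicit, and your explicit upgrade from agreement of satisfaction to equality of the maps---testing pure states $(m,\ket{\psi}\bra{\psi})$ via $\Supp(\ket{\psi}\bra{\psi})=\Span\{\ket{\psi}\}$---is exactly the right way to fill that gap.
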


To analyze the completeness of the proof system, it is necessary to explore the expressivity of the assertion language, that is, whether there exists an assertion semantically equivalent to the weakest precondition for the given postcondition which is expressed in the syntax.
\begin{theorem}[Weak definability]
\label{thm-definability}
    For any program $S\in Prog$ that does not contain while statements and post-condition $B\in AExp$, there exists an assertion $A\in AExp$, such that:
    $$\sem{A} = wlp.S.\sem{B}.$$
\end{theorem}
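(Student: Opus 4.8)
The plan is to proceed by structural induction on the while-free program $S$, at each step exhibiting the witness assertion $A$ as exactly the syntactic precondition produced by the backward inference rules of Fig.~\ref{infer-rule}, and then certifying the semantic equality with the weakest liberal precondition. The key simplification is the corollary following Definition~\ref{def-wlp}: to establish $\sem{A}=wlp.S.\sem{B}$ it suffices to prove the biconditional $(m,\rho)\models A \iff \sem{S}(m,\rho)\models B$ for every singleton state $(m,\rho)$. This spares us from ever computing the kernel-of-dual-superoperator form of the $wlp$ directly; instead we reason about supports of the structurally described output state given by Proposition~\ref{prop-struct}, using $\rho\models X \iff \Supp(\rho)\subseteq X$.

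First I would fix, for each while-free constructor, the candidate precondition read off from Fig.~\ref{infer-rule}, and observe that it is a genuine $AExp$ term: the classical-assignment, measurement, and conditional rules only introduce substitutions together with the connectives $\wedge,\vee,\neg$, while the unitary rules introduce substitutions of the shape $A[Q/p_i]$ with $Q\in PExp$, whose closure inside $PExp$ is precisely Theorem~\ref{thm-pauli-closed} (closedness under Clifford~$+$~$T$). Hence well-formedness is immediate, and definability is never obstructed at the syntactic level; all the work lies in the semantic equalities.

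Next come the base cases. The cases $\qskip$ and classical assignment are routine once the substitution lemma $\sem{B[e/x]}_m=\sem{B}_{m[\sem{e}_m/x]}$ is in hand. For a unitary $U$ on $q_i$ the output support is $U(\Supp\rho)$, and the biconditional $\Supp(U\rho U^\dagger)\subseteq\sem{B}_m \iff \Supp(\rho)\subseteq\sem{A}_m$ follows from the intertwining $\sem{Q}_m=U^\dagger\sem{P}_m U$ furnished by Theorem~\ref{thm-pauli-closed}, i.e. conjugation by $U$ is implemented by the prescribed substitution; the only care needed is tracking the phase factors $(-1)^b$ carried along. The inductive cases are then short: for sequential composition I would first prove the compositionality identity $wlp.(S_1\fatsemi S_2).f = wlp.S_1.(wlp.S_2.f)$ from Proposition~\ref{prop-struct}(6), apply the induction hypothesis to $S_2$ with postcondition $B$ to obtain $A_2$, and apply it again to $S_1$ with postcondition $A_2$; for $\qqif{b}{S_1}{S_0}$ I would take the rule's $(\neg b\wedge A_0)\vee(b\wedge A_1)$ with $A_0,A_1$ supplied by the induction hypothesis and split on the value of $\sem{b}_m$ using the denotational semantics of the conditional.

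I expect the genuinely delicate cases to be measurement and initialization, precisely because they introduce a $\vee$ into the precondition and quantum logic is non-distributive. For $\qmeas{x}{P}$ the output is the ensemble $(m[0/x],P\rho P)+(m[1/x],P^\bot\rho P^\bot)$, so the biconditional reduces to showing that $\Supp(\rho)\subseteq(\sem{P}_m\wedge\sem{B}_{m[0/x]})\vee(\sem{P}_m^\bot\wedge\sem{B}_{m[1/x]})$ holds if and only if both $\Supp(P\rho P)\subseteq\sem{B}_{m[0/x]}$ and $\Supp(P^\bot\rho P^\bot)\subseteq\sem{B}_{m[1/x]}$. The step from the branch conditions back to a single support containment rests on the Sasaki-projection identity $\Supp(P_S\rho P_S)=P_S\sc\Supp(\rho)$ and on the commutation of $\sem{P}_m$ with each disjunct, which restores exactly enough distributivity to recombine the two branches; initialization is then handled analogously, or by treating $\qinit{q_i}$ as a $Z_i$-measurement followed by a conditional $X_i$ as the paper already suggests. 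Establishing this measurement lemma cleanly, and confirming that the relevant subspaces commute so that the needed distributive law applies, is the main obstacle; everything else is bookkeeping around substitution and support.
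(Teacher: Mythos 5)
Your proposal is correct and follows essentially the same route as the paper's proof: structural induction with the candidate assertions read off from the backward rules of Fig.~\ref{infer-rule}, the corollary reducing $\sem{A} = wlp.S.\sem{B}$ to the singleton-state biconditional, compositionality of $wlp$ for sequencing, and the Sasaki-projection identity $\Supp(\mathtt{P}_S\rho \mathtt{P}_S)=\mathtt{P}_S\doublecap\Supp(\rho)$ plus commutation-restored distributivity for the measurement case. Your treatment of initialization as a $Z_i$-measurement followed by a conditional $X_i$ is also exactly what the paper does.
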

\begin{proof}
    We prove it by induction on the structure of the program $S$.
    \begin{itemize}
        \item $S\equiv \mathbf{skip}.$ By notice that $wlp.\mathbf{skip}.\sem{B} = \sem{B}$.
        \item $S\equiv \qut{U_1}{q_i}$ or $S\equiv \qut{U_2}{q_iq_j}$.
            Observe that $wlp.\qut{U_1}{q_i}.\sem{B} = U_{1i}^\dag \sem{B} U_{1i}$ and 
            $wlp.\qut{U_2}{q_iq_j}.\sem{B} = U_{2ij}^\dag \sem{B} U_{2ij}$. According to Theorem \ref{thm-pauli-closed}, in the case that $U_1\in\{X,Y,Z,H,S,T\}$ and $U_2\in\{CNOT,CZ,iSWAP\}$, $A$ is obtained by corresponding substitution of $p_r$ in B.
        \item $S\equiv x \coloneq e.$ By notice that $wlp.x \coloneq e.\sem{B} = \sem{B[e/x]}$.
        \item $S\equiv S_1\fatsemi S_2.$ By induction hypothesis, there exists $A_1$ such that $wlp.S_2.\sem{B} = \sem{A_1}$ and $A_2$ such that $wlp.S_1.\sem{A_2} = \sem{A_1}$. It is sufficient to show that $wlp.S_1.(wlp.S_2.f_B) = wlp.(S_1\fatsemi S_2).f_B$:
        
        \begin{align*}
            &wlp.(S_1\fatsemi S_2).f_B(i) \\
          =\ & \bigwedge_o \ker(\sum_m\sem{S_1}^{\prime\ast}_{i,m}(\sem{S_2}^{\prime\ast}_{m,o}(f_B(o)^\bot))) \\
          =\ & \bigwedge_o \Big(\bigvee_m\Supp(\sem{S_1}^{\prime\ast}_{i,m}(\sem{S_2}^{\prime\ast}_{m,o}(f_B(o)^\bot)))\Big)^\bot \\
          =\ & \bigwedge_m \ker(\sem{S_1}^{\prime\ast}_{i,m}(\big(\bigwedge_o\ker(\sem{S_2}^{\prime\ast}_{m,o}(f_B(o)^\bot))\big)^\bot)) \\
          =\ & \bigwedge_m \ker(\sem{S_1}^{\prime\ast}_{i,m}((wlp.S_2.f_B(m))^\bot)) \\
          =\ & wlp.S_1.(wlp.S_2.f_B)(i)
        \end{align*}
        We use the fact that $\Supp(\sum_i f_i) = \bigvee_i \Supp(f_i)$, $\Supp(\bigwedge S_i) = \bigwedge S_i$. We here for simplicity do not distinguish between subspace and its corresponding projection.
        \item $S \equiv \qassign{\mathbf{meas}[P]}{x}$. We show that:
        $$wlp.\qassign{\mathbf{meas}[P]}{x}.\sem{B} = \sem{(P \wedge B[0/x])\vee (\neg P \wedge B[1/x])}.$$
        For all $(m,\rho)$, we have:
        \begin{align*}
            & \qassign{\mathbf{meas}[P]}{x}(m,\rho) \models B \\ \Leftrightarrow\ 
            & (m[0/x], \mathtt{P}_{\sem{P}_m}\rho \mathtt{P}_{\sem{P}_m}) + 
            (m[1/x], \mathtt{P}_{\sem{P}_m^\bot}\rho \mathtt{P}_{\sem{P}_m^\bot}) \models B \\ \Leftrightarrow\ 
            & \sem{P}_m\sc \Supp(\rho)\subseteq \sem{B[0/x]}_m \mbox{ and }
            \sem{P}_m^\bot \sc \Supp(\rho)\subseteq \sem{B[1/x]}_m \\ \Leftrightarrow\ 
            & \Supp(\rho)\subseteq (\sem{P}_m \wedge \sem{B[0/x]}_m) \vee (\sem{P}_m^\bot \wedge \sem{B[1/x]}_m) \\ \Leftrightarrow\
            & (m,\rho) \models (P \wedge B[0/x])\vee (\neg P \wedge B[1/x]) 
        \end{align*}
        where the third and fourth lines are proved by employing properties of quantum logic. 
        \item $S\equiv \qqif{b}{S_1}{S_0}$. By induction hypothesis, there exists $A_0$ such that $wlp.S_0.\sem{B} = \sem{A_0}$ and $A_1$ such that $wlp.S_1.\sem{B} = \sem{A_1}$. It is sufficient to show that 
        $$wlp.\qqif{b}{S_1}{S_0}.f_B = \sem{(\neg b \wedge A_0) \vee (b\wedge A_1)}.$$
        For all $(m,\rho)$, by noticing that any singleton can only hold for one of the $\neg b \wedge A_0$ and $b\wedge A_1$, so we have:
        \begin{align*}
        & (m,\rho) \models (\neg b \wedge A_0) \vee (b\wedge A_1) \\ \Leftrightarrow\ 
        & (m,\rho) \models A_0 \mbox{ if $m(b) = \mathbf{false}$ or } (m,\rho) \models A_1 \mbox{ if $m(b) = \mathbf{true}$} \\ \Leftrightarrow\ 
        & \sem{\qqif{b}{S_1}{S_0}}(m,\rho) \models B  \mbox{ or } \sem{\qqif{b}{S_1}{S_0}}(m,\rho) \models B \\ \Leftrightarrow\ 
        & \sem{\qqif{b}{S_1}{S_0}}(m,\rho) \models B
        \end{align*}
        \item $S\equiv q_i \coloneq |0\>.$ Realize that initialization can be implemented by measurement and a controlled $X$ gate, i.e., 
        $$\sem{q_i \coloneq |0\>} = \sem{b := \mathbf{meas}[Z_i]\fatsemi \qqif{b}{q_i:=X}{\mathbf{skip}}},$$
        where assume that $b$ is some temporal variable and won't be considered in pre-/post-conditions. As such, we have: 
        $$wlp.q_i \coloneq |0\>.\sem{B} = \sem{(Z_i \wedge B) \vee (-Z_i \wedge B[-Y_i/Y_i, -Z_i/Z_i])}.$$
    \end{itemize}
\end{proof}

\subsection{Soundness and Weak Relative Completeness}
\label{sec:app-sound-complete}
We first claim the weak completeness of our proof system:
\begin{theorem}[Weak relative completeness]
\label{thm-weak-complete}
The proof system presented in Fig. 3 is relatively complete for finite QEC programs (without loops); that is, for any $A,B\in AExp$ and $S\in Prog$ that does not contain while statements, $\models \{A\} S \{ B\}$ implies $\vdash \{A\} S \{ B\}$.
\end{theorem}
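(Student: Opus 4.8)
The plan is to reduce the completeness question to the definability of the weakest liberal precondition, closing with a single application of the consequence rule. First I would strengthen the statement to one that travels well through structural induction: \emph{for every while-free $S\in Prog$ and every $B\in AExp$ there exists $A'\in AExp$ with $\sem{A'}=wlp.S.\sem{B}$ and $\vdash\{A'\}S\{B\}$}. The existence of such a syntactic $A'$ is exactly Theorem~\ref{thm-definability} (weak definability); the additional content here is that this particular weakest-precondition triple is itself derivable in the system of Fig.~\ref{infer-rule}.

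I would then prove the strengthened claim by induction on the structure of $S$, mirroring the case analysis of Theorem~\ref{thm-definability}. For each atomic construct the matching rule of Fig.~\ref{infer-rule}, read backwards from postcondition $B$, produces a precondition whose \emph{syntactic} form is precisely the one used in the definability proof: $B[e/x]$ for (Assign), the Sasaki-style disjunction $(P\wedge B[0/x])\vee(\neg P\wedge B[1/x])$ for $\qassign{\mathbf{meas}[P]}{x}$, the substitution dictated by Theorem~\ref{thm-pauli-closed} for each unitary, and $(\neg b\wedge A_0)\vee(b\wedge A_1)$ for (If). Since the semantics of each such precondition was already shown to equal the corresponding $wlp$, the derived triple $\vdash\{A'\}S\{B\}$ has $\sem{A'}=wlp.S.\sem{B}$ as required. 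The sequential case $S_1\fatsemi S_2$ uses the induction hypothesis twice: obtain $A_1$ with $\sem{A_1}=wlp.S_2.\sem{B}$ and $\vdash\{A_1\}S_2\{B\}$, then $A'$ with $\sem{A'}=wlp.S_1.\sem{A_1}$ and $\vdash\{A'\}S_1\{A_1\}$, and glue with (Seq); the identity $wlp.S_1.(wlp.S_2.\sem{B})=wlp.(S_1\fatsemi S_2).\sem{B}$ established inside the proof of Theorem~\ref{thm-definability} guarantees $\sem{A'}=wlp.(S_1\fatsemi S_2).\sem{B}$.

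With the strengthened claim in hand the theorem is immediate. Suppose $\models\{A\}S\{B\}$ and let $A'$ be as above. By the well-definedness of $wlp$ recorded in Definition~\ref{def-wlp}, validity of the triple forces $\sem{A}\models wlp.S.\sem{B}=\sem{A'}$, that is, $A\models A'$ as assertions. An application of the consequence rule (Con) with premises $A\models A'$, $\vdash\{A'\}S\{B\}$, and the reflexive $B\models B$ then yields $\vdash\{A\}S\{B\}$.

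I expect the main obstacle to be the inductive verification that the backward-read rules deliver the $wlp$ \emph{as a semantic identity} in the non-commuting cases, namely (Meas) and (If), where the precondition is a join of subspaces rather than a set-theoretic union. Establishing $\sem{(P\wedge B[0/x])\vee(\neg P\wedge B[1/x])}=wlp.(\qassign{\mathbf{meas}[P]}{x}).\sem{B}$ relies on the Sasaki-projection characterisation $\Supp(\mathtt{P}_S\rho\mathtt{P}_S)=S\doublecap\Supp(\rho)$ and the commutation-based distributivity laws of quantum logic reviewed in Appendix~\ref{review-subspace}; these are exactly the steps that fail for classical disjunction, as Example~\ref{exam-failure-disjunction} illustrates. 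Fortunately this semantic work is already discharged by the proof of Theorem~\ref{thm-definability}, so for completeness it only needs to be \emph{invoked} rather than redone, and the residual effort is the bookkeeping that each invoked precondition is literally an $AExp$ expression derivable by the corresponding rule.
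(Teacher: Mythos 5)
Your proposal is correct and follows essentially the same route as the paper: the paper derives the theorem as a direct corollary of Theorem~\ref{thm-definability} together with the observation that every rule except (While) and (Con) is a backward rule computing exactly the weakest liberal precondition, closing with the consequence rule—precisely the structural induction plus (Con) argument you spell out. Your version merely makes explicit the strengthened induction hypothesis (derivability of the $wlp$ triple alongside definability) that the paper leaves implicit in calling it a ``direct corollary.''
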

With the help of Theorem \ref{thm-definability} and noticing that rules except for (While) and (Con) presented in Fig. 3 are in a backward way with exactly the weakest liberal preconditions, then Theorem \ref{thm-weak-complete} are a direct corollary. For Theorem 4.3, we only need to further prove the soundness of rules (While) and (Con), while, the latter is indeed trivial.

\begin{proof}[Proof of (While) for Theorem 4.3]
    By employing Proposition \ref{prop-struct}, it is sufficient to show that for any $(m,\rho)$ such that $(m,\rho)\models A$ and any $o\in\mathtt{CMem}$,
    \begin{align*}
        &\forall\, o\in\mathtt{CMem},\ \Supp(\lim_n \sem{(\mathbf{while})^n}(m,\rho)(o)) \subseteq \sem{\neg b\wedge A}_o \\ \Leftarrow\ 
        & \forall\, o\in\mathtt{CMem}, n, \Supp(\sem{(\mathbf{while})^n}(m,\rho)(o)) \subseteq \sem{\neg b\wedge A}_o \\ \Leftarrow\ 
        & \models \{A\} (\mathbf{while})^n \{\neg b\wedge A\}
    \end{align*}
    This can be proved by induction on $n$. For base case, $n = 0$, then $\sem{(\mathbf{while})^0}(m,\rho) = (m,0)$, so obviously satisfies $\sem{\neg b\wedge A}$. For induction step,
    \begin{align*}
        \models \{A\} (\qqif{b}{S\fatsemi(\mathbf{while})^n}{\mathbf{skip}} \{\neg b\wedge A\}
    \end{align*}
    by employing Theorem \ref{thm-definability}, we only need to show that:
    $$ A \models (b \wedge (b\wedge A)))\vee (\neg b \wedge (\neg b\wedge A)))$$
    which is trivial since $b, A$ commute with each other, and thus distribution law holds.
\end{proof}

\paragraph{Discussion on completeness}
Different from previous works that do not strictly introduce (countable) assertion language, the main obstacle is to show the expressivity of the assertion language. From a semantics view, it is straightforward to define the weakest liberal precondition $wlp.S.B$ for any program $S\in Prog$ with respect to postcondition $B\in AExp$ following from~\cite{zhou2019applied, feng2021quantum}. However, it remains to be proven that any $wlp.S.B$ is expressible in $AExp$, i.e., there exists $A\in AExp$ such that $\sem{A} = wlp.S.B$.
In classical and probabilistic program logic \cite{apt2010verification, BKK21}, the standard approach uses G\"odelization technique to encode programs and then prove the expressibility of the weakest precondition for loop statements. Unfortunately, due to the adoption of quantum logic, handling the while construct becomes much more challenging, and only a weak definability is proved above. 

\section{Explanation Omitted in Section 5.1}
\subsection{Explanation of Eqn. (8)}\label{eff-def-app}
The derivation of Eqn. (8) may require further explanation. We  consider the QEC program in the general case that is: 
\begin{equation}
\begin{aligned}
& \left\{\bigwedge_{i}g_i \wedge \bigwedge_j L_j \right\}\\
& \qqfor{1\cdots n}{\qut{X}{[x_i]q_i},\qut{Z}{[z_i]q_i}} \\
& \left\{\bigwedge_i(-1)^{c_i}g_i \wedge \bigwedge_{j}(-1)^{c_j} \bar{L}_j\right\} \\
\end{aligned}
\end{equation}
\begin{equation}
\begin{aligned}
& \qqfor{1\cdots n}{z_i = f_{z,i}(\bs),x_i = f_{x,i}(\bs)} \\
& \left\{\bigwedge_i(-1)^{r_i(\bs)}g_i \wedge \bigwedge_j(-1)^{r_j(\bs)}\bar{L}_j\right\} \\
& \qqfor{1\cdots n-k}{\qassign{M[g_i]}{s_i}}\fatsemi  \\
&  \left\{\bigvee_{\bs\in\{0,1\}^{n-k}}\bigwedge_i(-1)^{r_i(\bs)}g_i \wedge \bigwedge_j(-1)^{r_j(\bs)}\bar{L}_j\right\} \\
& \qqfor{1\cdots n}{\qut{U_1}{[e_i]q_i}}  \fatsemi \\
&\left\{\bigvee_{\bs\in\{0,1\}^{n-k}}\bigwedge_i(-1)^{r_i(\bs) + h_i(\be)}g_i' \wedge \bigwedge_j(-1)^{r_j(\bs) + h_j(\be)}\bar{L}_j'\right\} \\
\end{aligned}
\end{equation}

Here we obtain the desired form of verification condition; The functions $r_i(\bs), r_j(\bs)$ denotes the corrections made on operator $g_i, \bar{L}_j$ according to the syndromes $\bs$ and $h_i(\be)$ denotes the total (Pauli) errors injected to those operators. A complete program also needs to include the preparation of logic gates and (potentially) the errors propagated from the previous cycle. However, we notice that the unitary gates either change the Pauli operator or contribute to the error term in the phase. Therefore it is reasonable to conclude that generally, the verification should be in the form of Eqn. \eqref{wp-stabilizer}.

\paragraph{Explanation for case (2) in proof} The claim in (2) requires that $g_i', \bar{L}_j'$ do not depends on $\bs$ and $\be$. To see this, the first thing is correction operations and measurements will not change the stabilizers at all. Afterward, the implementation of logical operations does not contain conditional Pauli gates and, therefore does not introduce terms containing $s$ or $e$ in $g_i', \bar{L}_j'$. Finally, if any conditional non-Pauli errors are inserted before/after logical operations, then it will introduce terms involving $e$ in $g_i'$. However, changes of Paulis in $g_i'$, $L_j'$ caused by non-Pauli errors will induce non-commuting pairs with $g_i$, therefore violating the assumption that all $g_i, g_i', \bar{L}_j, \bar{L}_j'$ are commute to each other.

\subsection{Omitted Proof in Section 5.1}\label{eff-verify-app}

We give a formal proof for the proposition mentioned in Section 5.1.

\begin{proof}
\textit{Proof of I.} From \cite{sarkar2021sets} we know that for $n$-qubit Pauli expressions, the biggest commuting group has $2^n$ elements, which is generated by $n$ independent and commuting generators. We note this group generated by $\{P_1,\dots,P_n\}$ by $S$. Therefore, if $\exists i, P_i^{\prime} \neq \Pi_j P_{i_j} $ for any set of indices $\{i_j\}$ up to a phase, then $P_i'$ is not contained in $S$, which means that $P_i'$ anticommutes with some of the $P_j$. 

\textit{Proof of II.} We denote $S^{\prime} = \langle P_1^{\prime}, \dots, P_n^{\prime}\rangle$ and $V_S, V_{S^{\prime}}$ being the state space stabilized by $S, S^{\prime}$. It is easy to see that $V_S, V_S^{\prime}$ are of dimension 1~\cite[Chapter 10]{nielsen2010quantum}. Therefore since $\{P_1, \dots, P_n, P_1', \dots, P_n'\}$ are commute to each other, for $\qstate{\psi} \in V_S$, $P_i'\qstate{\psi} = \Pi_j P_{i_j}\qstate{\psi} = \qstate{\psi}$, which is $V_S = V_{S^\prime}$. Therefore:
\begin{equation}
\left((-1)^{b_1} P_1 \wedge \dots \wedge (-1)^{b_n} P_n \right) \wedge P_c \equiv (\bigwedge (-1)^{\sum_j b_{i_j}}\Pi_j P_{i_j}) \wedge P_c \equiv (\bigwedge_{i=1}^{n} (-1)^{\sum_j b_{i_j} + \alpha_i}P_i^{\prime}) \wedge P_c
\end{equation}
Moreover, for independent and commuting $\{P_1^{\prime}, \dots, P_n^{\prime} \}$, we have: 
\begin{equation}
 \left(\bigwedge_{i=1}^{n}b_i' = \sum_j b_{i_j} + \alpha_i\right) \wedge\left( (-1)^{\sum_{j}b_{1_j}}P_1^\prime \wedge \dots \wedge (-1)^{\sum_j b_{n_j}} P_n^\prime\right) \models 
\left( (-1)^{b_1'} P_1^\prime\wedge \dots \wedge (-1)^{b_n'} P_n^\prime\right) 
\label{vc-equiv-right}
\end{equation}
Therefore if $P_c \models \bigwedge_{i=1}^{n}(b_i' =\alpha_i + \sum_j b_{i_j})$, then 
\begin{equation}
P \equiv (\bigwedge_{i=1}^{n} (-1)^{\sum_j b_{i_j}}P_i^{\prime}) \wedge P_c \models \left( \bigwedge_{i=1}^{n}b_i' =\alpha_i + \sum_j b_{i_j}\right) \wedge (\bigwedge_{i=1}^{n} (-1)^{\sum_j b_{i_j}}P_i^{\prime}) \models P'
\end{equation}
Therefore we have finished the proof for \textit{II}. In fact we find that for independent and commuting generators $\{P_1^{\prime}, \dots, P_n^{\prime}\}$, the $\models$ is indeed $\equiv$ in Eqn. \eqref{vc-equiv-right}, therefore in our tool we directly transform the verification condition into the classical one in \textit{II}. 
\end{proof}
\section{Details in Case Study}\label{case-steane-app}
We have proposed the verification condition generated using inference rules in the main text, but we omit the derivation process. In this section we illustrate the derivation 
process of the verification condition mentioned in Section 5.2. 

\subsection{Details in Case I: Pauli Errors}\label{case1-steane-app}
We consider the case when implementing a logical Hadamard operation on a Steane code. The single Pauli error can propagate from the previous operation or occur after the logical 
gate. Therefore the program \textbf{Steane} is stated as in Table 1. 

Following this program we recall the correctness formula in Eqn. \ref{corr-steane}.
\begin{equation}
\begin{aligned}
& \left\{\big(\sum_{i=1}^{7} (e_i + e_{pi}) \leq 1\big) \wedge  \big((-1)^{b}\bar{X} \wedge (-1)^{0} g_1\wedge \cdots \wedge (-1)^0 g_6\big)\right\} \\
& \textbf{Steane}(Y,H) \quad \left\{(-1)^{b}\bar{Z} \wedge (-1)^{0} g_1\wedge \cdots \wedge (-1)^0 g_6\right\} \\
\end{aligned}
\label{corr-formula-steane}
\end{equation}
The correctness formula describes the condition that when there is at most 1 Pauli error (summing the errors occurring before and after the logical gate.) Then the correction can successfully output the correct state. 

According to~\cite{Fang2024symbolic}, to verify the correctness of the program we need to further consider the logical state after logical Hadamard gate as another postcondition. However we notice that the $X$ and $Z$ stabilizer generators and logical operators are the same, therefore only verifying the correctness for the postcondition in Eqn. \eqref{corr-formula-steane} is sufficient for Steane code. 

We prove Eqn. \eqref{corr-formula-steane} by deducing from the final postcondition to the forefront:
\begin{equation}
\begin{aligned}
& \left\{(-1)^{b}\bar{Z} \wedge (-1)^{0} g_1\wedge \cdots \wedge (-1)^0 g_6\right\}\\
& \qqfor{1\cdots 7}{\qut{X}{[x_i]q_i},\qut{Z}{[z_i]q_i}} \\
& \left\{(-1)^{b+c_0}\bar{Z} \wedge (-1)^{c_1} g_1\wedge \cdots \wedge (-1)^{c_6} g_6\right\} \\
& \qqfor{1\cdots 7}{\qassign{f_{z,i}(s_1,s_2,s_3)}{z_i},\qassign{f_{x,i}(s_4,s_5,s_6)}{x_i}} \\
& \left\{(-1)^{b+ r_7(\bs)}\bar{Z}\wedge (-1)^{r_1(\bs)}g_1 \wedge \cdots \wedge (-1)^{r_6(\bs)}g_6\right\} \\
& \qqfor{1\cdots 6}{\qassign{M[g_i]}{s_i}}\fatsemi  \\
&  \left\{\bigvee_{\bs\in\{0,1\}^6}(-1)^{b+ r_7(\bs)}\bar{Z}\wedge (-1)^{r_1(\bs)}g_1 \wedge \cdots \wedge (-1)^{r_6(\bs)}g_6\right\} \\
\end{aligned}
\end{equation}
\begin{equation}
\begin{aligned}
& \qqfor{1\cdots 7}{\qut{Y}{[e_i]q_i}}  \fatsemi \\
&\left\{\bigvee_{\bs\in\{0,1\}^6}(-1)^{b+r_7(\bs)+h_1^{\prime}(\be)}\bar{Z}\wedge (-1)^{r_1(\bs)+h_1(\be)}g_1 \wedge \cdots \wedge (-1)^{r_6(\bs)+ h_6(\be)}g_6\right\} \\
& \qqfor{1\cdots 7}{\qut{H}{q_i}}\fatsemi \\
& \left\{\bigvee_{\bs\in\{0,1\}^6}(-1)^{b+r_7(\bs)+h_7(\be)}\bar{X}\wedge (-1)^{r_1(\bs)+h_1(\be)}g_1' \wedge \cdots \wedge (-1)^{r_6(\bs)+ h_6(\be)}g_6'\right\} \\
\end{aligned}
\end{equation}
\begin{equation}
\begin{aligned}
& \qqfor{1\cdots 7}{\qut{Y}{[e_{p_i}]q_i}} \fatsemi \\
& \left\{\bigvee_{\bs\in\{0,1\}^6}(-1)^{b+r_7(\bs)+h_7(\be )+k_7\mathbf(ep)}\bar{X}\wedge (-1)^{r_1(\bs)+h_1(\be)+k_1(\mathbf{ep})}g_1' \wedge \cdots \wedge (-1)^{r_6(\bs)+ h_6(\be)+k_6(\mathbf{ep})}g_6'\right\} \\
\label{eqn:corr-steane-Y-final}
\end{aligned}
\end{equation}

We explain the symbols in the phases of Paulis in detail:
\begin{enumerate}
\item $b$ is the initial phase for logical operator $\bar{Z}$.
\item $c_i$ stands for the sum of correction indicators $\sum_{j}z_{j,i}$ or $\sum_{j}x_{j,i}$ leading to the flipping the corresponding Pauli expression $g_i$. For example, since $g_1 = X_1X_3X_5X_7$, then $c_1 = z_1 + z_3 + z_5 + z_7$.
\item $f_{z,i}$, $f_{x,i}$ assign the decoder outputs to correction indicators $z_i$ and $x_i$.
\item $r_i(\bs)$ denotes the sum of decoder outputs corresponding to $c_i$. For example, $r_1(\bs) = f_{z,1}(\bs) + f_{z,3}(\bs) + f_{z,5}(\bs) + f_{z,7}(\bs)$. Here we lift the variables of decoder functions to become all of $s_i$s, denoted by $\bs$.
\item $h_i(\be)$ denotes the sum of injected errors after logical Hadamard leading to the phase flip of the corresponding Pauli. Take $g_1$ and $g_4$ as examples, since $g_1 = X_1X_3X_5X_7$, $g_4 = Z_1Z_3Z_5Z_7$, and the error is $Y$ error which flips both $X$ and $Z$ stabilizers, $h_1(\be) = h_4(\be) = e_1 + e_3 + e_5 + e_7$.
\item $g_i^{\prime}$ denotes the stabilizer generators before the logical Hadamard gate. 
By direct computation of stabilizer generators, we find that $g_1^{\prime} = g_4, g_2^{\prime} = g_5, \cdots g_6^{\prime} = g_3$. On the other hand, the phases of $g_i^{\prime}$ can also be tracked. 
\item $k_i(\mathbf{ep})$ denotes the sum of errors propagated from previous operation, which also lead to the flip of the Pauli expression. For example, $k_i^{\prime}(\mathbf{be}) = \sum_{i=1}^{7}e_{p_i}, k_i(\mathbf{ep}) = e_{p_1} + e_{p_3} + e_{p_5} + e_{p_7}$. 
\end{enumerate}

The verification condition (VC) to be proved is derived from the precondition: 
\begin{equation}
\begin{aligned}
\label{VC-pauli}
& \left\{(\sum_{i=1}^{7} (e_i + e_{pi}) \leq 1) \wedge  ((-1)^{b}\bar{X} \wedge (-1)^{0} g_1\wedge \cdots \wedge (-1)^0 g_6)\right\} \\
&\left\{\bigvee_{\bs\in\{0,1\}^6}(-1)^{b+f_0(\bs)+E_0 + E_{p_0}}\bar{X}\wedge (-1)^{f_1(\bs)+E_1+E_{p_1}}g_1' \wedge \cdots \wedge (-1)^{f_6(\bs)+ E_6 + E_{p_6}}g_6'\right\}
\end{aligned}
\end{equation}
When confronted with this verification condition, generally we follow the verification framework proposed in Section 5.1 to deal with the generators $g_1,\cdots,g_6$, and $g_1^{\prime}, \cdots g_6^{\prime}$ here. 
For our Steane code example, from the computation in explanation (6) we find that since the stabilizer generators are symmetric, the correspondence of the generators can be easily found. Therefore the verification condition is equivalent with:
\begin{equation}
\bigg(\sum_{i=1}^{7} (e_i + e_{pi}) \leq 1\bigg)  \models \vee_{\bs\in\{0,1\}^6}\wedge_{i=0}^{6}\bigg(f_i(\bs) + E_i +E_{p_i} = 0\bigg)
\end{equation}
Assuming a minimum-weight decoder, we provide decoding conditions for the function call:
\begin{equation}
\left(\sum_{i=1}^{7} x_i \leq \sum_{i=1}^{7} (e_i+ e_{pi}) \right) \bigwedge \left(\sum_{i=1}^{7} z_i \leq \sum_{i=1}^{7} e_i+ \sum_{i=1}^{7}e_{pi})\right) \bigwedge \bigg(\wedge_{i=1}^{6}(f_i(\bs) = s_i)\bigg) 
\end{equation}
we can first obtain the value of $\bs = (s_1, \cdots, s_6)$ then use the decoding condition to obtain the exact value of $\{x_i\}$ and $\{z_i\}$. Take Z corrections as an example (X corrections here are symmetric, therefore we omit), the constraints for them are:
\begin{equation}
\label{eqn:decoder-res}
\left\{\begin{aligned}
& \sum_{i=1}^7 z_ i \leq 1 \\
& z_1+z_3+z_5+z_7 = s_1 \\
& z_2+z_3+z_6+z_7 = s_2 \\
& z_4+z_5+z_6+z_7 = s_3 \\
\end{aligned}
\right.
\end{equation}
In the case $(e_3 = 1)$ or $(e_{p_3} = 1)$, $s_1 = s_2 = 1, s_3 = 0$, therefore $z_3 = 1$ is the unique solution that satisfies Eqn. (\ref{eqn:decoder-res}). Finally, it is obvious that $f_0(\bs) + E_0 + E_{p_0} = \sum_{i=1}^7 (z_i+e_i +e_{p_i}) = 0$, so the correctness formula is successfully verified. However, any error patterns that violates the constraint $\left(\sum_{i=1}^{7} e_i + \sum_{i=1}^{7}e_{pi} \leq 1\right)$ would induce a logical error. For example the pattern $e_1 = 1, e_(p_2) = 1$ corresponds to the measurement syndrome $s_1 = s_2 = 1 = s_4 = s_5 = 1, s_3 = s_6 = 0$ too, but it will be identified by the decoder as $e_3$, thereby correcting the $3^{\rm{rd}}$ qubit and resulting in a logical error.

\subsection{Details in Case II: Non-Pauli Errors}\label{case2-steane-app}

In Section 5.1, we have proposed a heuristic algorithm which attempts to prove the correctness formula Eqn. \ref{wp-stabilizer} when there exists non-commuting pairs. 

We further provide an example to correct an $H$ error which is inserted after the logical operation.
\begin{example}[Correcting an H error on Steane code]
Suppose that $e_7 = 1$, then 
\begin{equation}
\label{wp-example}
\begin{aligned}
& (-1)^b\bar{Z}' = (-1)^b Z_1Z_2Z_3Z_4Z_5Z_6X_7, g_1' = X_1X_3X_5Z_7, g_2' = X_2X_3X_6Z_7, \\
& g_3' = X_4X_5X_6Z_7, g_4' = Z_1Z_3Z_5X_7, g_5 = Z_2Z_3Z_6X_7, g_6' = Z_4Z_5Z_6X_7
\end{aligned}
\end{equation}

In this case the weakest precondition obtained by the QEC program is 
\begin{equation}
\label{pre-cond2}
\left\{ \bigvee_{s_1,\cdots,s_6\in\{0,1\}}(-1)^{b+f(\bs)} \bar{Z}' \wedge (-1)^{s_1}g_1' \wedge \cdots \wedge  (-1)^{s_6}g_6' \right\}
\end{equation}
Where $f(\bs) = 0$ iff $(s_4,s_5,s_6) = (0,0,0)$, otherwise $f(\bs) = 1$. 
Compute the non-commuting set, we obtain $NC = C' = \{\bar{Z}', g_1',\cdots,g_6'\}$. Multiply the elements by $g_4'$, then $P^\prime$ becomes: 
\begin{equation}
\begin{aligned}
& P^\prime = \{\bigvee_{s_1,\cdots s_6\in\{0,1\}}(-1)^{b+f(\bs)+s_4}Z_2Z_4Z_6 \wedge (-1)^{s_1+s_4+1}Y_1Y_3Y_5Y_7 \wedge \\
& (-1)^{s_2+s_4+1}(Z_1Z_3X_4X_6)Y_5Y_7 \wedge (-1)^{s_3+s_4+1}(Z_1Z_5X_2X_6)Y_3Y_7\wedge \\
& (-1)^{s_4} Z_1Z_3Z_5X_7\wedge (-1)^{s_4+s_5}Z_1Z_2Z_5Z_6 \wedge (-1)^{s_4+s_6}Z_1Z_3Z_4Z_6\} \\
\end{aligned}
\end{equation}
Extract the items corresponding to $\bs = (1,1,1,0,0,0), (1,1,1,1,1,1)$ from the union in Eqn. (\ref{wp-example}), then these two terms form a subspace which eliminates the stabilizer $Z_1Z_3Z_5X_7$ since they differs only in the sign of $g_4'$. These two terms are: 
\begin{equation}
\begin{aligned}
& \{(-1)^bZ_2Z_4Z_6 \wedge Y_1Y_3Y_5Y_7 \wedge (Z_1Z_3X_4X_6)Y_5Y_7 \wedge \\
& (Z_1Z_5X_2X_6)Y_3Y_7 \wedge Z_1Z_2Z_5Z_6 \wedge Z_1Z_3Z_4Z_6 \wedge Z_1Z_3Z_5X_7\} \\
\end{aligned}
\end{equation}

\begin{equation}
\begin{aligned}
& \{(-1)^bZ_2Z_4Z_6 \wedge Y_1Y_3Y_5Y_7 \wedge (Z_1Z_3X_4X_6)Y_5Y_7 \wedge \\
& (Z_1Z_5X_2X_6)Y_3Y_7 \wedge Z_1Z_2Z_5Z_6 \wedge Z_1Z_3Z_4Z_6 \wedge -Z_1Z_3Z_5X_7\} \\
\end{aligned}
\end{equation}

Now the subspace is stabilized by $C' -\{g_4'\}$. We prove the stabilizer state in the precondition of Eqn. (\ref{corr-formula-steane}) is contained in this subspace. To this end, add $g_4$ to $C' - \{g_4'\}$ to form a complete stabilizer state $\hat{\rho}'$:
\begin{equation}
\begin{aligned}
& \hat{\rho}' = \{(-1)^bZ_2Z_4Z_6 \wedge Y_1Y_3Y_5Y_7 \wedge (Z_1Z_3X_4X_6)Y_5Y_7 \wedge \\
& (Z_1Z_5X_2X_6)Y_3Y_7 \wedge Z_1Z_2Z_5Z_6 \wedge Z_1Z_3Z_4Z_6 \wedge Z_1Z_3Z_5Z_7\} \\
\end{aligned}
\end{equation}
Again multiplying all elements by $g_4$ we obtain the generator set: 
\begin{equation}
\begin{aligned}
& \hat{\rho}' = \{(-1)^b Z_1Z_2Z_3Z_4Z_5Z_6Z_7\wedge  X_1X_3X_5X_7,\wedge X_4X_5X_6X_7 \wedge\\
& X_2X_3X_6X_7\wedge Z_2Z_3Z_6Z_7\wedge Z_4Z_5Z_6Z_7\wedge Z_1Z_3Z_5Z_7\} \\
\end{aligned}
\end{equation}
This corresponds to the stabilizer state in the precondition of Eqn. (\ref{corr-formula-steane}).

The good symmetry of Steane code ensures that only considering logical Z states is sufficient. In fact for arbitrary logical state stabilized by an additive Pauli predicate $a\bar{Z} + b\bar{X}$ ($|a|^2 + |b|^2 = 1$), the solution is to find $\hat{\rho}'_{X/Z}$ for logical X and Z respectively. The arbitrary logical state falls in the subspace formed by the superposition of these two stabilizer states. 
\end{example}
\section{Detailed Implementation of \veriq}\label{details-imp}
We provide details of \veriq, our tool for formal verification of QEC programs, which are ignored in the main text.

\subsection{Correctness Formula Generator}
Provided the theoretical results of the QEC code, e.g. the parity-check matrix and the code parameters (allow estimation for code distance), the correctness formula generator would first generate the program description for error correction, including error injection, syndrome measurement, external call of decoders and corrections. The stabilizer assertions and logical operators $\bar{X}_L$, $\bar{Z}_L$ will also be created. Afterwards we generate other parts of the program according to the implementations of fault-tolerant operations. We use a tuple $(x,z,n)$ to describe a single Pauli operator on $n$-th qubit, and the correspondence of $(x,z)$ and Paulis are $\{(0,0):I, (0,1):Z, (1,0):X, (1,1):Y\}$. We allow $x$ and $z$ to be classical expressions, therefore reserving space for future support of non-Pauli errors which lead to changes of not only phases but also Pauli constructs of stabilizers.  
\subsection{VC Generator}
The VC generator, as the core of the tool, is consisted of parser, interpreter and VC transformer. The parser is responsible for parsing the Hoare triple generated according to the QEC code and the requirements provided by the user. We implement the parser and the interpreter of AST in Python based on \texttt{Lark}~\cite{lark_parser}, a lightweight parser toolkit which uses LALR(1) to parse context-free grammars. We first establish the context-free grammar for correctness formula including the programs and assertions; Next we built customized interpreter using the \textit{Transformer} interface provided by \texttt{Lark}. For transversal unitary operations e.g. transversal logical gates or error injection and correction, we introduce 'for' sentence as a syntactic sugar for the sequential execution of those operations. We implemented the inference rules on the abstract syntax tree (AST) built upon the syntax of assertions and finally obtain the (expected) weakest precondition. We implement the VC transformer using the method mentioned in Section 5.1 to transform the hybrid classical-quantum assertion we obtain by the interpreter into a purely classical SMT formula containing classical program variables. 

\subsection{SMT Solver}
We introduce different SMT solvers for different aims. First, we use Z3 \cite{de2008z3} and its python interface as the encoder of the logical formula from the AST generated by the previous tool. Each variable including errors, corrections and syndromes are initially constructed as a \textit{BitVector} object with width 1. Automatic zero extension is performed whenever required, for example when dealing with the sum of errors and corrections when encoding the decoder's condition into the logical formula. Therefore we make integer addition and bit-wise addition compatible with each other. 

Afterwards, we will call other SMT solvers to parse the logical formula and check the satisfiability of it. For logical formula which includes quantifier forall $\forall$ (Exists $\exists$ quantifier will be naturally removed by the SMT solver), \texttt{CVC5} \cite{DBLP:conf/tacas/BarbosaBBKLMMMN22} is applied because it has the best efficiency for solving logical formula with quantifiers. In comparison to \texttt{Bitwuzla}, \texttt{CVC5} exhibits relatively weaker performance in validating bit-variable problems; thus, there exists a trade-off yet to be explored regarding which solver demonstrates superior efficacy. 

Our SMT checker supports parallelization, whose details will be discussed below. Specifically, the (symbolic) logic formula to be verified is initially generated on the bus and broadcast to the various parallel processes through global variables. Each process then substitutes the corresponding symbols in the formula with the enumerated values it receives, ultimately invoking the solver to resolve the modified formula.

\subsection{Parallelization}

In the verification task, we aim to verifying the capability of correction for any errors that satisfy the condition about number of errors and distance: 
\begin{equation}
\sum_{i = 1}^{n} e_i \leq \lfloor \frac{d-1}{2}\rfloor
\end{equation}

As demonstrated in the main text, for each error configuration, the time spent to check the satisfiability of corresponding SMT problem is 
double-exponential with respect to $d$, which turns out to be extremely time-consuming for SMT solvers to check the whole task at once. To address this, we designed a parallelization framework to split the verification task into multiple subtasks by dynamically enumerating selected free variables. To estimate the difficulty of each subtask, we design a heuristic function which serves as the termination condition for enumeration:
\begin{equation}
2d*N(\rm{ones}) + N(\rm{bits}) > n
\end{equation}
$N(\rm{ones})$ represents the occurrences of 1 and $N(\rm{bits})$ counts the number of enumerated bits. Enumeration stops if the heuristic function is satisfied, leaving the remaining portion to be solved by the SMT solver. For verification tasks of general properties, the parallel SMT solver will terminate the ongoing processes and cancel the tasks waiting to be checked if there is a counterexample, indicating that the implementation may exist errors. Then the counterexample would be produced to help find the potential errors in the implementation of codes or logical operations.

\end{document}